\documentclass[10pt]{article}
\usepackage{fullpage,graphicx,subfigure,mathdots,mathpazo,color}
\usepackage{amsmath,amscd,tikz,mathrsfs}
\usepackage[normalem]{ulem}
\usepackage{amsmath}
\usepackage{setspace,booktabs}

\usepackage{epsfig,amsmath,graphicx,amssymb,overpic}

\usepackage{graphicx}
\usepackage{dcolumn}
\usepackage{bm}
\usepackage{graphicx}
\usepackage{subfigure}
\usepackage{epsfig,amsmath,graphicx,amssymb,overpic,cite}

\usepackage{graphicx}
\usepackage{dcolumn}
\usepackage{bm}
\usepackage{graphicx}
\usepackage{subfigure}
\usepackage{amsthm}

\newcommand{\ee}{\mathrm{e}}

\usepackage[T1]{fontenc}
\usepackage{amsmath, amsfonts, amssymb, amsthm}
\usepackage{geometry} 
\usepackage{hyperref} \hypersetup{colorlinks=true, linkcolor=blue, citecolor=red}
\usepackage{enumitem}

\usepackage{graphicx}
\usetikzlibrary{positioning,arrows.meta}

\usepackage{graphicx}      
\usepackage{titlesec}
\def\bee{\begin{eqnarray}}
\def\ene{\end{eqnarray}}
\def\bes{\begin{subequations}}
\def\ees{\end{subequations}}
\def\no{{\nonumber}}
\usepackage{color}

\def\i{\mathrm{i}}
\def\v{\vspace{0.06in}}
\def\be{\begin{equation}}
\def\ee{\end{equation}}
\def\bee{\begin{eqnarray}}
\def\ene{\end{eqnarray}}
\def\bes{\begin{subequations}}
\def\ees{\end{subequations}}
\def\no{\nonumber}
\def\d{\displaystyle}
\def\i{\mathrm{i}}

\theoremstyle{plain}
\newtheorem{theorem}{Theorem}[section]

\newtheorem{proposition}[theorem]{Proposition}

\newtheorem{RH}[theorem]{Riemann-Hilbert Problem}

\theoremstyle{definition}
\newtheorem{definition}[theorem]{Definition}
\newtheorem{assumption}[theorem]{Assumption}

\newtheorem{remark}[theorem]{Remark}

\allowdisplaybreaks[4]

\numberwithin{equation}{section}

\begin{document}

\baselineskip=14pt \renewcommand {\thefootnote}{\dag}
\renewcommand
{\thefootnote}{\ddag} \renewcommand {\thefootnote}{ }

\pagestyle{plain}


\begin{center}
\baselineskip=18pt \leftline{} \vspace{0.05in} 
{\large\bf Long-time asymptotics of the integrable defocusing Wadati–Konno–Ichikawa equation with  a finite-genus algebro-geometric background} \\[0.2in]
\end{center}

\begin{center}
 {{Taohua\, Luo}$^{1,2}$,\,\,  Zhenya\, Yan$^{3,2,1,*}$,\,\,  Guoqiang\, Zhang$^{1}$}
\footnote{$^{*}${\it Email address}: zyyan@mmrc.iss.ac.cn (Corresponding author)}  \\[0.1in]
\baselineskip=11pt {\footnotesize  $^1${\it State Key Laboratory of Mathematical Sciences, Academy of Mathematics and Systems Science, \\ \footnotesize Chinese Academy of Sciences,  Beijing 100190, China}}\\
{\footnotesize  $^2${\it School of Mathematical Sciences, University of Chinese Academy of Sciences, Beijing 100049, China}}\\
{\footnotesize  $^3${\it School of Mathematics and Information Science, Zhongyuan University of Technology, Zhengzhou 450007, China}} \\
\end{center}

\v \v

 {\bf Abstract.} We study the finite-genus algebro-geometric solutions of the Wadati–Konno–Ichikawa (WKI) equation with the saturable nonlinearity and long-time asymptotic behaviors of their short-range perturbations. First, for both the focusing and defocusing reductions, we formulate the finite-genus Baker–Akhiezer functions as explicitly solvable the matrix Riemann–Hilbert (RH) problems on the complex spectral plane and obtain theta-function representations together with the reconstruction formulae for the WKI field and the reciprocal coordinate. We then consider the Cauchy problem of the defocusing WKI equation on a finite-genus algebro-geometric background. We construct the scattering data and  RH problem, and perform a Deift-Zhou nonlinear steepest descent analysis. The space-time plane is divided into two transition regions, a Zakharov–Manakov (ZM) region, and a fast-decay region. The leading term is a phase-shifted finite-genus WKI solution. The transition corrections are governed by a Painlev\'e-XXXIV model, while the ZM radiation is described by parabolic-cylinder functions. The reciprocal-coordinate asymptotics are obtained simultaneously.

\vspace{1em}
\noindent\textbf{Key words.} inverse scattering transformation, Riemann–Hilbert problem,
 Wadati–Konno–Ichikawa equation,  finite-genus algebro-geometric solution, Painlev\'e-XXXIV equation




\begin{spacing}{1.2}
\baselineskip=13pt
\tableofcontents
\vspace{-0.15in}
\end{spacing}


\baselineskip=14pt

\section{Introduction}

\quad  In this paper, we consider the integrable Wadati-Konno-Ichikawa (WKI) equation with the saturable nonlinearity
\begin{equation}\label{WKI}
\mathrm{i}q_t+
\left(
\frac{q}{\sqrt{1+\sigma |q|^2}}
\right)_{xx}=0,
\quad (x,t)\in\mathbb{R}\times\mathbb{R}^+,
\quad  \mathrm{i}=\sqrt{-1},
\end{equation}
which was presented by Wadati, Konno, and Ichikawa \cite{Wadati1979b} from a new spectral problem, differing from the standard AKNS framework \cite{Ablowitz1974}, where $q=q(x,t)$ is a complex envelope field, and the cases $\sigma=1\, (-1)$ correspond to the focusing (defocusing) reduction. In particular, as $\sigma=0$, the nonlinear WKI equation (\ref{WKI}) reduces to the linear Schr\"odinger equation $\i q_t+q_{xx}=0$, which can be further reduced to the diffusion/heat equation $q_{\tau}-q_{xx}=0$ under the imaginary-time transform $\tau=it$. In contrast with polynomial nonlinearities of positive flows (e.g., nonlinear Schr\"odinger (NLS) equation \cite{Zakharov1972}) in the AKNS hierarchy \cite{Ablowitz1974}, the WKI nonlinearity has a saturable form.
The WKI hierarchy also appears in geometric and mechanical settings; in particular, closely related reductions describe nonlinear
transverse oscillations of elastic beams under tension \cite{Ichikawa1981}.
The WKI equation (\ref{WKI}) can be rewritten as the Hamiltonian form \cite{Wadati1979b}
\bee
 \i q_t=\frac{\partial^2}{\partial x^2}\frac{\delta {\mathcal H}_0}{\delta \bar{q}},\qquad {\mathcal H}_0=2\sigma \int_{\mathbb{R}}\left(1-\sqrt{1+\sigma|q|^2}\right)dx,
\ene
where $\overline q$ denotes the complex conjugate of $q$. The potential form of (\ref{WKI}) can be written as (let $q\to q_x$ and integrate it w.r.t $x$) \cite{Chen2021}
\begin{equation}\label{WKI2}
\mathrm{i}q_t+
\left(
\frac{q_x}{\sqrt{1+\sigma |q_x|^2}}
\right)_{x}=0,
\qquad (x,t)\in\mathbb{R}\times\mathbb{R}^+,
\qquad \sigma=\pm1,
\end{equation}

As $|q|<1$ and $\sigma=\pm 1$, the WKI equation (\ref{WKI}) can be rewritten as the infinite-degree differential polynomial form
\bee \label{wki-exp}
 \i q_t+q_{xx}+\left(-\frac{1}{2}\sigma |q|^2q+\frac{3}{8} |q|^4q+\cdots+(-\sigma)^n\frac{(2n-1)!!}{(2n)!!} |q|^{2n}q+\cdots\right)_{xx}=0,
\ene
which with $n=1$ generates the the truncated approximate nonlinear model \cite{Shimizu1980}
\bee \label{WKI-n}
 \i q_t+q_{xx}-\frac{1}{2}\sigma (|q|^2q)_{xx}=0,
\ene
which can be rewritten as the Hamiltonian form
\bee
 \i q_t=\frac{\partial^2}{\partial x^2}\frac{\delta {\mathcal H}_0}{\delta \bar{q}},\qquad {\mathcal H}_0= \int_{\mathbb{R}}\left(\frac{1}{4}\sigma |q|^4-|q|^2\right)dx.
\ene
Of course, one can also truncate (\ref{wki-exp}) to generate other new higher-degree forms, for example,
\bee
 \i q_t+q_{xx}-\frac{1}{2}\sigma (|q|^2q)_{xx}+\frac{3}{8} (|q|^4q)_{xx}=0.
 \ene

Eq. (\ref{WKI-n}) differs from  both the NLS equation \cite{Zakharov1972}
\bee
 \i q_t+q_{xx}-\frac{1}{2}\sigma |q|^2q=0,
\ene
and the derivative NLS equation \cite{kaup1978}
\bee
 \i q_t+q_{xx}-\frac{1}{2}\i\sigma  (|q|^2q)_{x}=0,
\ene
Notice that the stationary equation of (\ref{WKI-n}) with $q(x,t)=\hat{q}(x)e^{-\i\mu t}$,
\bee
 \hat{q}_{xx}+\mu \hat{q}-\frac{1}{2}\sigma (|\hat{q}|^2\hat{q})_{xx}=0,
\ene
is similar to the travelling-wave forms with the transform $u(x,t)=u(\xi),\, \xi=x-vt$ of the short-pulse equation \cite{schafer2004}
\bee
 u_{xt}-u+\frac{1}{6}\sigma (u^3)_{xx}=0
\ene
and its complex form \cite{feng2015,feng2016}.
\bee
 u_{xt}+u+\frac{1}{2}\sigma(|u|^2u_x)_{x}=0.
\ene

The WKI equation (\ref{WKI}) is completely integrable and admits the Lax representation
\begin{equation}\label{Lax}
\Psi_x=X(x,t;z)\Psi,
\qquad
\Psi_t=T(x,t;z)\Psi,
\end{equation}
where $\Psi=\Psi(x,t;z)$ is the complex matrix-valued eigenfunction, $z\in\mathbb{C}$ is the spectral parameter,
\begin{equation}
X=-\mathrm{i}z\sigma_3+zQ,
\qquad
T=
\begin{pmatrix}
-\dfrac{2\mathrm{i}z^2}{p}
& \dfrac{2z^2q}{p}+\mathrm{i}z\left(\dfrac{q}{p}\right)_x \\[3mm]
-\dfrac{2z^2\sigma\overline q}{p}+\mathrm{i}z\sigma\left(\dfrac{\overline q}{p}\right)_x
& \dfrac{2\mathrm{i}z^2}{p}
\end{pmatrix},
\end{equation}
with the potential matrix
\begin{equation}\label{eq:Q-p-intro}
Q=Q(x,t)=
\begin{pmatrix}
0&q(x,t)\\
-\sigma\overline q(x,t) &0
\end{pmatrix},
\qquad
p=p(x,t)=\sqrt{1+\sigma|q(x,t)|^2},
\end{equation}
and three Pauli matrices are
\begin{equation}
\sigma_1=
\begin{pmatrix}0&1\\1&0\end{pmatrix},
\qquad
\sigma_2=
\begin{pmatrix}0&-\mathrm{i}\\ \mathrm{i}&0\end{pmatrix},
\qquad
\sigma_3=
\begin{pmatrix}1&0\\0&-1\end{pmatrix}.
\end{equation}
 From the analytical point of view, the most characteristic feature of \eqref{WKI} is that the spectral parameter multiplies the potential in the spatial part of the Lax pair \cite{Lax1968}. This nonstandard dependence is responsible for several phenomena that do not occur in the usual Zakharov--Shabat problem \cite{Zakharov1972}, including a reciprocal change of the spatial variable, simultaneous spectral analysis near $z=0$ and $z=\infty$, and a parametric reconstruction of the physical solution.

Some integrable structures of the WKI system (\ref{WKI}) have been investigated from several complementary directions. Shimizu and Wadati developed its inverse-scattering theory and obtained explicit solitary waves of (\ref{WKI}) \cite{Shimizu1980}, whose higher-pole solutions were found.
 Relations between the WKI and AKNS spectral schemes were subsequently established through gauge and reciprocal transformations \cite{Ishimori1982,Wadati1983}. The hierarchy possesses infinitely many conservation laws and admits Hamiltonian and algebraic formulations \cite{Wadati1979b}. Based on the C.Neumann and Bargmann constraints, respectively, Qiao \cite{qiao93,qiao95} nonlinearized the WKI spectrum problem (\ref{Lax}) into the Hamilton systems. Qu and Zhang \cite{qu2005} re-deduced the WKI equation (\ref{WKI}) from the motions of curves in Euclidean geometry $E^3$. The ${\rm sl}(2)$ WKI spectral problem (\ref{Lax}) was also extended to a ${\rm so}(3)$ soliton hierarchy \cite{Mawx2014}. The B\"acklund transforms were studied for the WKI equation \cite{BT1,BT2,BT3,BT4}. The hodograph and Darboux transformations were used to obtain focusing and defocusing multisolitons, breathers, and rogue waves in \cite{Zhang2017}. Direct inverse-scattering and blow-up properties of WKI solitons were studied in \cite{Liu2017}. A matrix Riemann--Hilbert formulation for the WKI equation with simple and higher-order poles was developed in \cite{Zhang2019}, producing determinant representations of soliton and positon-type solutions. Moreover, the higher-order pole solutions \cite{zhang2020} of the WKI equation were also found from the Gelfand–Levitan–Marchenko (GLM) equation via the Olmedilla’s method \cite{Olme1987}. The existence of global solution was shown for the focusing WKI equation with small initial data in the smooth function space \cite{shima2016}. The properties of scattering data were studied for the second flow of the WKI system with box-like initial value \cite{Tu2021}. Long-time analysis has also advanced substantially: the potential WKI equation was treated by nonlinear steepest descent in \cite{Chen2021}, and soliton resolution for decaying WKI data in weighted Sobolev spaces was established by a $\bar\partial$-steepest descent method in \cite{Li2022a}. For finite-density initial data, soliton resolution and asymptotic stability of $N$-soliton solutions in solitonic space-time regions were proved in \cite{Li2022b}. These works provide a rather complete picture for zero or constant nonzero backgrounds. To the best of our knowledge, the dynamics of perturbations of nonconstant quasi-periodic WKI backgrounds has remained largely unexplored.

Finite-genus, or finite-gap solutions form one of the central classes of exact solutions in the theory of integrable systems. Their origin can be traced to the periodic spectral problem for the Korteweg-de Vries (KdV) equation developed by Novikov \cite{Novikov1974} in 1974 and to the theta-function formulae of Its and Matveev in 1975 \cite{Its1975}. The general algebro-geometric theory was further developed by Dubrovin, Matveev, and Novikov \cite{Dubrovin1976} and by Krichever \cite{Krichever1976,Krichever1977}. In this framework, a finite-genus solution is encoded by a compact algebraic curve, marked points, local parameters, a nonspecial divisor, normalized Abelian differentials, and a point of the associated Jacobian variety (see, e.g., Refs. \cite{Belokolos1994,Belokolos1986}). The corresponding Baker--Akhiezer (BA) function is a meromorphic function or vector on the underlying Riemann surface with prescribed divisor and essential singularities. It simultaneously linearizes the nonlinear flow on the Jacobian and solves the auxiliary linear equations of the Lax pair. Besides yielding explicit quasi-periodic solutions, finite-genus BA functions play a second, equally important role: they provide global model solutions in nonlinear steepest descent analyses \cite{Deift1993}. In 1990, the Lax pair nonlinearization method was presented by Cao and Geng \cite{Cao1990a, Cao1990b} to find algebraic-geometric solutions of integrable systems \cite{Cao1999,Zhou1997}.
In 2003, Gesztesy and Holden \cite{Ges2003}, based on the spectral analysis and compact Riemann surfaces, presented
an effective method to construct algebro-geometric solutions of certain integrable hierarchies.

Though these above-mentioned techniques can be used to construct algebraic-geometric solutions of integrable nonlinear equation, where the BA functions are formulated directly on a two-sheeted hyperelliptic Riemann surface,
they did not find these algebraic-geometric solutions via the Riemann-Hilbert (RH) problems, in which the long-time asymptotics of solutions of integrable systems can be conveniently analyzed via the Deift-Zhou onlinear steepest descent method \cite{Deift1993}.
For asymptotic analysis, a matrix formulation on the ordinary complex spectral plane is often more convenient. A systematic realization of this idea was given by Kotlyarov and Shepelsky for the nonlinear Schr\"odinger (NLS) equation \cite{Kotlyarov2017}. Their construction encodes the two sheets of the spectral curve into the columns of a unimodular $2\times2$ matrix that is analytic in the complex plane cut along finitely many spectral arcs and satisfies piecewise constant jumps. After separating the exponential dependence on the space-time variables, the resulting model problem can be solved explicitly in terms of fourth-root functions, Abel maps, and Riemann theta functions. This planar BA formalism has since been adapted to the Maxwell--Bloch system \cite{Kotlyarov2018}, the derivative NLS equation \cite{Zhao2020}, the KdV equation \cite{Zhao2023}, the Hirota equation \cite{Cao2024}, and the extended mKdV equation \cite{cheng2026}. The advantage of the planar formulation is not merely notational. It places the exact finite-genus background and the inverse-scattering RH problem in the same analytic setting, making the background matrix directly available as a global parametrix in analyses of oscillatory RH problems \cite{Deift1993}. Moreover,
a numerical method was presented to study the finite-genus solutions of the KdV equation via Riemann–Hilbert problems \cite{trog13a,trog13b}.  Though the finite-gap and algebro-geometric WKI flows has been constructed in \cite{Li2016}, the planar matrix Riemann--Hilbert representation was not constructed for the finite-genus WKI Baker-Akhiezer function.

Up to now, the Deift-Zhou onlinear steepest descent method \cite{Deift1993}, building on earlier inverse-scattering asymptotics such as \cite{Its1981}, has become the principal tool for obtaining precise large-time expansions of integrable equations. For finite-genus backgrounds,
Kamvissis and Teschl \cite{Kamvissis2012} studied short-range perturbations of periodic and algebro-geometric Toda lattices and obtained modulated finite-gap asymptotics on the underlying curve. Mikikits-Leitner and Teschl \cite{MikikitsLeitner2012} established the analogous theory for perturbed finite-gap KdV solutions. In matrix problems of NLS type, the deformations can instead be carried out on the ordinary complex plane; see, e.g., the analysis of step-like oscillatory NLS backgrounds in \cite{BoutetdeMonvel2021}. Most recently, Fan {\it et al}\cite{Fan2026} obtained long-time asymptotics for the defocusing NLS equation with a finite-genus algebro-geometric background in two families of transition regions, a Zakharov--Manakov region, and a fast-decay region. Their work identifies a phase-shifted finite-genus background as the leading term and a Painlev\'e XXXIV correction in the transition regimes. Related planar finite-genus asymptotic analyses have recently been carried out for the focusing NLS equation \cite{Ma2026a} and for the focusing mKdV equation with discrete spectrum \cite{Ma2026b}. Painlev\'e XXXIV model problems previously appeared in critical random-matrix and orthogonal-polynomial asymptotics \cite{Its2008,Its2009}.
To the best of our knowledge, a planar matrix Riemann--Hilbert representation of the finite-genus WKI Baker--Akhiezer function, combined with a nonlinear steepest descent analysis for the defocusing WKI equation on a finite-genus algebro-geometric background, has not previously been developed.

In this paper, we will investigate these two issues. The first objective of the present paper is to develop such a planar finite-genus formalism for the WKI equation \eqref{WKI} via the RH problem. This requires modifications that are specific to the WKI spectral problem. Following the idea used in the Camassa-Holm equation \cite{constantin2003,Monvel2009,Monvel2006},
in terms of the function $p(x,t)$ in \eqref{eq:Q-p-intro},  a natural reciprocal spatial variable is
\begin{equation}\label{eq:y-intro}
y=y(x,t)
=
x+\int_{-\infty}^{x}\bigl(p(s,t)-1\bigr)\,ds=x+\int_{-\infty}^{x}\bigl(\sqrt{1+\sigma|q(s,t)|^2}-1\bigr)\,ds,
\end{equation}
that is, $\partial y(x,t)/\partial x=\sqrt{1+\sigma|q(x,t)|^2}$.

The second objective of the present paper is to study the long-time asymptotics of the  solution for Cauchy problem of the defocusing ($\sigma=-1$) WKI equation with the reciprocal coordinate \eqref{eq:y-intro} under a short-range perturbation of a finite-genus algebro-geometric background
\begin{equation}\label{background}
q(x,0)=q_0(x)\sim q^{(\mathrm{AG})}(x,0),
\qquad x\to\pm\infty,
\end{equation}
where the finite-genus algebro-geometric solution of the defocusing WKI equation is written as
\begin{equation}\label{AGSolution}
q^{(\mathrm{AG})}(x,t)
=
q^{(\mathrm{AG})}\bigl(x,t;\mathbf E,\widehat{\mathbf E},\boldsymbol\phi\bigr),
\end{equation}
where
$\mathbf E=(E_0,\ldots,E_n),\,\, \widehat{\mathbf E}=(\widehat E_0,\ldots,\widehat E_n),\,\, \boldsymbol\phi=(\phi_1,\ldots,\phi_n),$
with $n\in\mathbb N_0$ and and the corresponding hyperelliptic Riemann surface is defined by
\begin{equation}\label{RSurface}
	w^2(z)=\prod_{j=0}^{n}(z-E_j)(z-\widehat E_j),
\end{equation}
where $E_0<\widehat E_0<E_1<\widehat E_1<\cdots<E_n<\widehat E_n.$

In the defocusing case $(\sigma=-1)$, the regularity of this transformation is tied to the constraint $|q|<1$. Unlike the standard NLS reconstruction \cite{Zakharov1972,Deift1993}, in which the potential is recovered from the large-$z$ expansion of the RH solution, the WKI field and the physical coordinate are reconstructed from the expansion at the zero spectral point. Consequently, the RH problem is normalized at $z=\infty$, whereas the nonlinear solution is recovered from $z\to0$.

For completeness, the boundary condition is extended to $t\geq0$ by assuming that the perturbation remains integrable relative to the algebro-geometric background:
\begin{equation}\label{qRestrict}
\int_{\mathbb R}
\left|q(s,t)-q^{(\mathrm{AG})}(s,t)\right|\,ds<\infty,
\qquad t\geq0.
\end{equation}
The precise short-range and spectral regularity assumptions used in the analysis are specified in Section~\ref{mainresults}.

The two main parts of the paper are therefore logically inseparable. The planar finite-genus construction supplies the exact background Baker--Akhiezer matrix and the global parametrix for the nonlinear steepest descent analysis. Conversely, the asymptotic analysis demonstrates why the planar representation is particularly effective: the same theta-function matrix that produces exact finite-genus WKI solutions also controls the leading dynamics of their short-range perturbations.

The some highlights of the present work are summarized as follows:
\begin{itemize} 

\item [i)] We formulate the finite-genus Baker--Akhiezer functions for both focusing $(\sigma=1)$ and defocusing $(\sigma=-1)$ WKI reductions (\ref{WKI}) as explicitly solvable the matrix RH problems on the complex $z$-plane. Two natural contour and homology configurations are treated, and theta-function formulae for the corresponding background matrices are obtained.

\item [ii)] We derive reconstruction formulae for both the finite-genus field $q^{(\mathrm{AG})}(y,t)$ and the physical coordinate $x(y,t)$ from the zero-spectral-point expansion. This makes the reciprocal WKI geometry compatible with a planar RH formulation normalized at infinity.

\item [iii)] For short-range perturbations of a finite-genus defocusing WKI background (\ref{background}), we construct the background Jost solutions, scattering data, and the inverse RH problem, including the band jumps and fourth-root endpoint behavior.

\item [iv)] We carry out a Deift-Zhou nonlinear steepest descent analysis in two transition regions, a Zakharov-Manakov region, and a fast-decay region. The leading term is a phase-shifted finite-genus WKI solution; the transition corrections are governed by Painlev\'e XXXIV and have order $t^{-1/3}$, the Zakharov-Manakov radiation has order $t^{-1/2}$, and the fast-decay error is $O(t^{-1})$.

\item [v)] We propagate the asymptotic estimates through the $z=0$ reconstruction and simultaneously recover the nonlinear field and the reciprocal coordinate. This step differs from one of the standard finite-genus NLS problem and is essential for interpreting the asymptotics in the original WKI variables.
\end{itemize}

\subsection{Main results}\label{mainresults}


\begin{figure}[!t]
	\centering
	\includegraphics[scale=0.5]{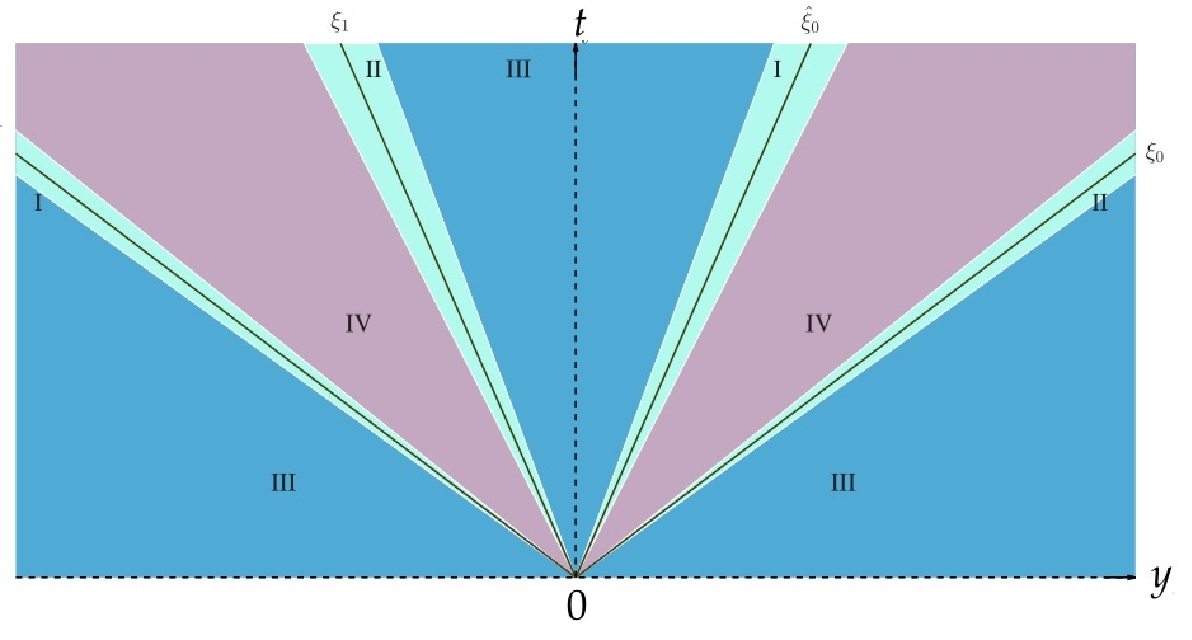}
	\vspace{-0.1in}\caption{Four distinct asymptotic regions in $(y,t)$-space for genus $n=1$.}
	\label{fig:WKI7}
\end{figure}

Let $E_j$ and $\widehat{E}_j$ \, $(j=0,1,\ldots,n)$ be fixed numbers with $E_0<\widehat E_0<E_1<\widehat E_1<\cdots<E_n<\widehat E_n$ for the defocusing case, and $\{E_j,\widehat{E}_j\}_{j=0}^{n}$ being distinct points in $\mathbb{C}$ for the focusing case, and
\begin{equation} \label{fg}
	\begin{aligned}
		f(z)=\int_{\widehat{E}_0}^z\frac{\prod_{k=0}^n(s-z_k^f)}{w(s)}\mathrm{d}s,\qquad
		g(z)=\int_{\widehat{E}_0}^z\frac{4\prod_{k=0}^{n+1}(s-z_k^g)}{w(s)}\mathrm{d}s,
	\end{aligned}
\end{equation}
be two hyperelliptic integrals, where $w^2(z)=\prod_{j=0}^{n}(z-E_j)(z-\widehat E_j)$.  Then these constants $z_k^f$ and $z_k^g$ are uniquely determined by the conditions
\begin{equation} \label{fg-p}
	\begin{cases}
		\begin{array}{lll}
			f(z)=z+\mathcal{O}(1), &  g(z)=2z^2+\mathcal{O}(1), & z\to\infty, \v\\
			\int_{E_j}^{\widehat{E}_j}\mathrm{d}f=\int_{E_j}^{\widehat{E}_j}\mathrm{d}g=0, & j=0,\ldots,n.
		\end{array}
	\end{cases}
\end{equation}
respectively. Moreover, $\left(\{z_k^f\}_{k=0}^n\cup\{z_k^g\}_{k=0}^{n+1}\right)\cap\{E_j,\widehat{E}_j\}_{j=0}^n=\emptyset$.
It follows from \eqref{fg} and \eqref{fg-p} that $f(E_j)=f(\widehat{E}_j)$ and $g(E_j)=g(\widehat{E}_j)$ for $j=0,1,\ldots,n$.

The points $\xi_j$ and $\widehat{\xi}_j$, $j=0,\ldots,n$, are precisely the critical values associated with the phase function $\theta(z,\xi)$ defined in \eqref{theta}. They are given by
\begin{equation} \label{xi}
	\xi_j=-\frac{4\prod_{k=0}^{n+1}(E_j-z_k^g)}{\prod_{k=0}^n(E_j-z_k^f)},\qquad
	\widehat{\xi}_j=-\frac{4\prod_{k=0}^{n+1}(\widehat{E}_j-z_k^g)}{\prod_{k=0}^n(\widehat{E}_j-z_k^f)}.
\end{equation}
These points are used to divide the $(y,t)$-half-plane into several distinct space--time regions (see the following Definition \ref{def1}).

\begin{definition} \label{def1}
	For any positive constant $C$, we let $\xi=y/t$, and consider \eqref{xi} to divide the $(y,t)$ space into the four regions:
	\begin{itemize}
		\item Transition region I:\, $\cup_{j=0}^{n}\{\xi : |\xi-\widehat{\xi}_j|t^{2/3}\leq C\}$.
		
		\item Transition region II:\, $\cup_{j=0}^{n}\{\xi : |\xi-\xi_j|t^{2/3}\leq C\}$.
		
		\item Zakharov--Manakov region III:\, $\xi\in(-\infty,\widehat{\xi}_n)\cup_{j=1}^{n}(\xi_j,\widehat{\xi}_{j-1})\cup(\xi_0,+\infty)$.
		
		\item Fast-decay region IV:\, $\xi\in\cup_{j=0}^{n}(\widehat{\xi}_j,\xi_j)$.
	\end{itemize}
	\label{def-1}
\end{definition}

An illustration of the four regions described above is given in Figs.~\ref{fig:WKI7} and~\ref{fig:WKI8} for genus $n=1,2$. We derive the long-time asymptotics of $q(y,t)$ in these regions under the following assumptions.

\begin{assumption}\label{ass-1}
	Throughout this paper, we impose the following assumptions:
	\begin{itemize}
		\item There exists a positive constant $C$ such that $q_0(x)=q^{(AG)}(x,0)$ for $|x|>C$; that is, the initial data $q_0(x)$ coincides with the background outside a compact set.
		
		\item Let $r_i$, $i=1,2$, defined in \eqref{r}, be the reflection coefficients associated with the initial data. We assume that $|r_i(\eta)|=1$ for $\eta\in\{E_j,\widehat{E}_j\}_{j=0}^n$ and that $r_i$ is bounded on $\cup_{j=0}^n(E_j,\widehat{E}_j)$.
		
		\item Throughout the analysis of the defocusing WKI equation (\ref{WKI}), we assume that $|q(x,t)|<1$.
	\end{itemize}
\end{assumption}

\begin{figure}[!t]
	\centering
	\includegraphics[scale=0.5]{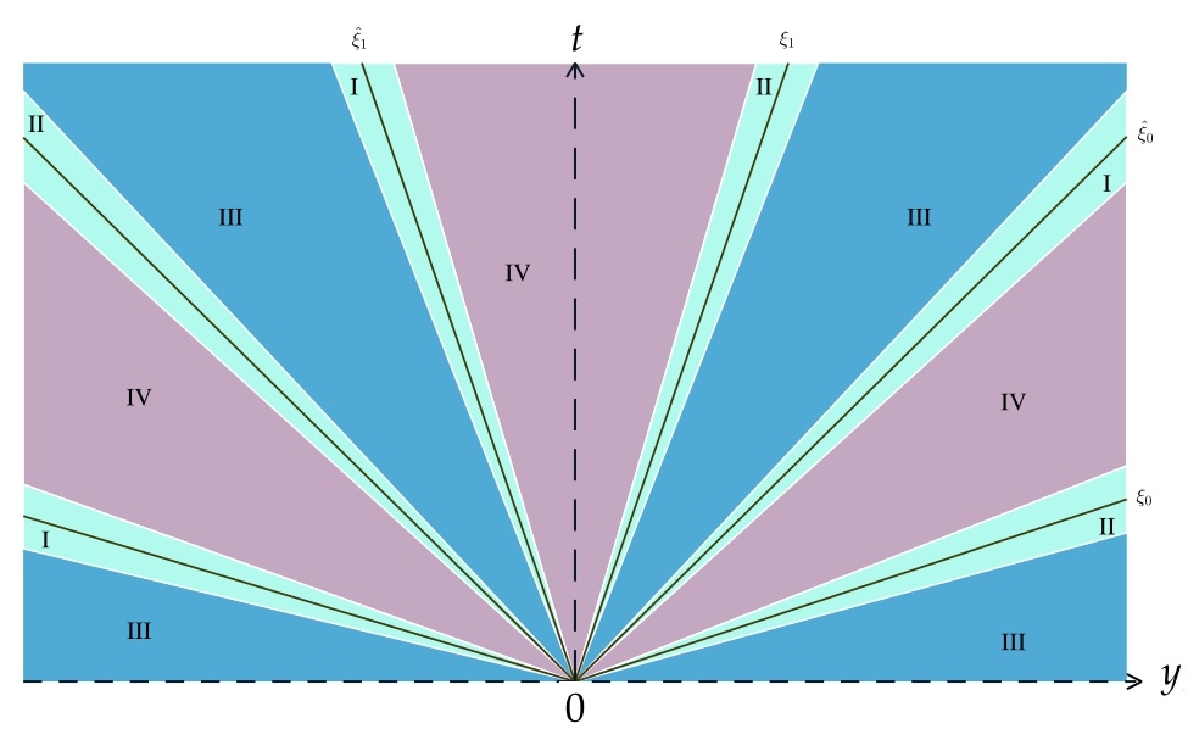}
	\vspace{-0.1in}\caption{Four distinct asymptotic regions in $(y,t)$-space for genus $n=2$.}
	\label{fig:WKI8}
\end{figure}

\begin{theorem} \label{theom-AG}
For the $M(y,t;z)$ satisfying the RH problem \ref{RH-A-M}, the  finite-genus algebro-geometric solution of the WKI equation (\ref{WKI}) is given by (\ref{AG-A}) with (\ref{AG-A2}).
\end{theorem}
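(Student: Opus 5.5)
The plan is to show that the matrix $M(y,t;z)$ of RH problem \ref{RH-A-M} produces a simultaneous solution of both equations of the Lax pair \eqref{Lax}, written in the reciprocal variable $y$. The formulas \eqref{AG-A} and \eqref{AG-A2} should then come out of the expansion of $M$ at $z=0$. The argument is the standard dressing or Liouville argument, adapted to the WKI setting, where the potential sits in front of $z$ and the field is read off at the origin.

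\textbf{Step 1: uniqueness and the Lax pair.} Since $\det M\equiv1$ (the jumps are unimodular and the endpoint singularities are at most fourth-root), $M$ is unique. Set $\Phi=M\,\mathrm{e}^{-\mathrm{i}(yf(z)+tg(z))\sigma_3}$, with $f,g$ the Abelian integrals \eqref{fg}. By the normalization \eqref{fg-p}, $\Phi$ has jumps independent of $(y,t)$: the $B$-periods of $\mathrm{d}f$ and $\mathrm{d}g$ enter only as piecewise constant factors. Consequently $\Phi_y\Phi^{-1}$ and $\Phi_t\Phi^{-1}$ have no jumps across the bands and gaps. At the branch points they have only removable singularities, because the fourth-root behaviour cancels in the product. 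Near $z=\infty$ they grow like $-\mathrm{i}z\sigma_3+O(1)$ and $-2\mathrm{i}z^2\sigma_3+O(z)$ respectively. Liouville's theorem then makes them polynomials in $z$. Expanding $M=I+M_1/z+M_2/z^2+\cdots$ at infinity identifies them as $-\mathrm{i}z\sigma_3+[\ldots]$ and $-2\mathrm{i}z^2\sigma_3+z(\ldots)+(\ldots)$. This yields an AKNS-type pair in the variable $y$.

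\textbf{Step 2: gauge to the WKI form.} The constant term must be removed by a $z$-independent gauge $G(y,t)=M(y,t;0)$, i.e. the value of the regular solution at the origin; the symmetry $\sigma_2\overline{M(\bar z)}\sigma_2$ for $\sigma=1$ (or its $\sigma_1$ analogue for $\sigma=-1$) guarantees that $G$ is unitary or pseudo-unitary. Equivalently, I would expand $M(z)=M(0)\bigl(I+z\mu_1+O(z^2)\bigr)$ and compare with the WKI operator divided by $p$. Recall that the WKI operator in $y$ is $X/p=(-\mathrm{i}z\sigma_3+zQ)/p$. Matching shows that the off-diagonal part of $M(0)^{-1}\sigma_3M(0)$ equals $Q/p$ up to a factor $-\mathrm{i}$. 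Since $M(0)$ is unimodular with the stated symmetry, its entries determine $q$ and $p$, and $p^2=1+\sigma|q|^2$ holds automatically. The diagonal part of $\mu_1$ gives $x(y,t)$ through the relation $\partial_yx=1/p$. This is exactly the structure of \eqref{AG-A} and \eqref{AG-A2}.

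\textbf{Step 3: theta functions and the main obstacle.} Finally, I substitute the explicit theta-function solution of RH problem \ref{RH-A-M}, which is built from the fourth-root function, the Abel map and the theta quotients, and evaluate it at $z=0$ to obtain the closed formulas. The main obstacle I expect is Step 2. One has to verify that the compatibility of the two polynomial Lax operators, after the gauge, is exactly the WKI equation with the correct $T$-matrix, including the terms $(q/p)_x$ converted via $\partial_x=p\,\partial_y$. One also has to check that $x$ reconstructed from $\mu_1$ satisfies $x_t$ consistent with the reciprocal transformation \eqref{eq:y-intro}. I would handle this by comparing the $O(z)$ and $O(z^2)$ coefficients of the $t$-equation at $z=0$ rather than at infinity, and by using the symmetry to fix the remaining scalar ambiguity.
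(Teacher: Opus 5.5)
Your route differs from the paper's. The paper proves Theorem~\ref{theom-AG} by construction:
\begin{itemize}
\item it factors $\Psi^{(AG)}$ as in \eqref{psi-AG} and solves RH problem~\ref{RH-A-M} explicitly, using $\varkappa$ and the theta quotients $F_s,H_s$ with $c_j=-(yC_j^f+tC_j^g+\phi_j)/2\pi$;
\item it then inserts this $M$ into the reconstruction formulas \eqref{q}--\eqref{x}.
\end{itemize}
That the output solves \eqref{WKI} is simply inherited from the Section~2 proposition. That derivation uses $\int_{-\infty}^x(p-1)\,ds$ and $H=\int_{-\infty}^x[Q+\mathrm{i}(p-1)\sigma_3]\,ds$, which are not defined on a quasi-periodic background. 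Your Liouville-plus-gauge argument (the AKNS--WKI gauge equivalence) proves that step directly from the RH problem. On the other hand, the existence of $M$, which is most of the paper's work, you only cite in Step~3.

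The step that fails as written is the gauge in Step~2. Conjugating by $M(y,t;0)$ alone leaves a constant diagonal term:
$\partial_y\bigl(M(0)^{-1}\Phi\bigr)=\bigl(-\mathrm{i}z\,M(0)^{-1}\sigma_3M(0)-\mathrm{i}f(0)\sigma_3\bigr)M(0)^{-1}\Phi$.
This is not of WKI form. Moreover, the off-diagonal part of $M(0)^{-1}\sigma_3M(0)$ equals $\mathrm{i}Q/p$ only up to the phases $e^{\mp2\mathrm{i}(f(0)y+g(0)t)}$.

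You must gauge by $\Phi(y,t;0)=M(y,t;0)e^{-\mathrm{i}(f(0)y+g(0)t)\sigma_3}$. Then $\widetilde\Phi=\Phi(0)^{-1}\Phi(z)$ satisfies $\widetilde\Phi_y=zA\widetilde\Phi$ and $\widetilde\Phi_t|_y=(2z^2A+zB_1)\widetilde\Phi$, with $A=\Phi(0)^{-1}(-\mathrm{i}\sigma_3)\Phi(0)=p^{-1}(-\mathrm{i}\sigma_3+Q)$. Write $M(0)^{-1}M(z)=I+z\mu_1+O(z^2)$. The $O(z)$ coefficient of $\widetilde\Phi$ is then
$\widetilde\Phi_1=e^{\mathrm{i}(f(0)y+g(0)t)\sigma_3}\mu_1e^{-\mathrm{i}(f(0)y+g(0)t)\sigma_3}-\mathrm{i}(f'(0)y+g'(0)t)\sigma_3$.
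The relation $\partial_y\widetilde\Phi_1=A$ gives, up to an additive constant in $x$,
\[
x=f'(0)\,y+g'(0)\,t+\mathrm{i}(\mu_1)_{11},\qquad
q=\frac{\partial_y\bigl(e^{2\mathrm{i}(f(0)y+g(0)t)}(\mu_1)_{12}\bigr)}{f'(0)+\mathrm{i}\,\partial_y(\mu_1)_{11}} .
\]
These formulas carry $f(0),g(0)$ in the phase, $f'(0)$ in the denominator, and the linear term $f'(0)y+g'(0)t$ in $x$. By contrast, \eqref{AG-A}--\eqref{AG-A2} as displayed carry $e^{2\mathrm{i}(f_0y+g_0t)}$, the constant $1$ in the denominator, and a bare $y$. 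So your claim that Step~2 gives ``exactly'' \eqref{AG-A}--\eqref{AG-A2} is unsupported until this bookkeeping is done and reconciled. The same Taylor expansion of $e^{-\mathrm{i}((f(z)-z)y+(g(z)-2z^2)t)\sigma_3}$ at $z=0$ arises when passing from $\widetilde M$ in \eqref{q}--\eqref{x} to $M$.

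Two smaller points.
\begin{itemize}
\item The scalar ambiguity in the $t$-part is not fixed by symmetry. The $O(z^2)$ zero-curvature equation gives $[A,B_1]=2A_y$. Since $A^2=-I$, this yields $B_1=-AA_y+\lambda A$ with $\lambda$ scalar. The scalar $\lambda$ is absorbed by $\partial_t|_x=\partial_t|_y+y_t\,\partial_y$: choosing $y_t$ so that $B_1+y_tA$ is off-diagonal makes it equal to the $O(z)$ coefficient of $T$. Then $x=\mathrm{i}(\widetilde\Phi_1)_{11}$ satisfies $x_t|_y=\mathrm{i}(B_1)_{11}=-y_t/p$ automatically, so the consistency check you worry about needs no extra work.
\item The phrase ``$p^2=1+\sigma|q|^2$ holds automatically'' hides the sign of $p$. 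One has $1/p=M_{11}(0)M_{22}(0)+M_{12}(0)M_{21}(0)$.
  \begin{itemize}
  \item Under the $\sigma_1$-symmetry this equals $1+2|M_{12}(0)|^2\ge1$, so in the defocusing case $0<p\le1$ and $|q|<1$ come for free.
  \item Under the $\sigma_2$-symmetry it equals $|M_{11}(0)|^2-|M_{12}(0)|^2\in[-1,1]$. So in the focusing case, $p>0$ (equivalently, monotonicity of $y\mapsto x$) is an extra condition.
  \end{itemize}
  In both cases you also need $0\notin\overline{\Gamma_A}$ for the expansion at $z=0$ to exist.
\end{itemize}
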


\begin{theorem} \label{theom}
	For the given finite-genus algebro-geometric solution $q^{(AG)}(x,t)=q^{(AG)}(x,t;\boldsymbol{E},\boldsymbol{\widehat{E}},\boldsymbol{\phi})$ of the defocusing WKI equation \eqref{WKI} with $\sigma=-1$ found in Theorem \ref{theom-AG}, the solution of the Cauchy problem \eqref{WKI}, \eqref{background}, subject to the constraint \eqref{qRestrict} and Assumption~\ref{ass-1}, has the following long-time asymptotic behaviors in the regions I-IV specified in Definition~\ref{def-1}:
	\begin{itemize}
		\item {} {\bf Asymptotics in transition region I:}
		Let $j_1\in\{0,\ldots,n\}$ be fixed and $|\xi-\widehat{\xi}_{j_1}|t^{2/3}\leq C$. Then
		\begin{equation}
			\begin{aligned}
				q(x,t)=
				e^{-2\delta(0)}q^{(AG)}(x,t;\boldsymbol{E},\boldsymbol{\widehat{E}},\boldsymbol{\phi}-\boldsymbol{\delta})
				-e^{-2\delta(0)}
				\left(\frac{\partial}{\partial y}\mathcal{F}_{12}\right)\left(1+\mathrm{i}\frac{\partial}{\partial y}\mathcal{F}_{11}\right)
				+\mathcal{O}(t^{-\epsilon}),
			\end{aligned}
		\end{equation}
		with
		\begin{equation}
			x=y+\mathrm{i}\mathcal{F}_{11}+\mathcal{O}(t^{-\epsilon}),
		\end{equation}
		where the real vector $\boldsymbol{\delta}=(\delta_1,\ldots,\delta_n)$ and the function $\delta$ are defined by \eqref{delta_1} and \eqref{delta1}, respectively, $\mathcal{F}=\left(\mathcal{F}_{ij}\right)_{i,j=1,2}$ is given by \eqref{mathcalF}.
		
		\item {} {\bf Asymptotics in transition region II:}
		Let $j_1\in\{0,\ldots,n\}$ be fixed and $|\xi-\xi_{j_1}|t^{2/3}\leq C$. Then
		\begin{equation}
			\begin{aligned}
				q(x,t)=
				e^{-2\delta(0)}q^{(AG)}(x,t;\boldsymbol{E},\boldsymbol{\widehat{E}},\boldsymbol{\phi}-\boldsymbol{\delta})
				+e^{-2\delta(0)}
				\left(\frac{\partial}{\partial y}\mathcal{\tilde{F}}_{12}\right)\left(1+\mathrm{i}\frac{\partial}{\partial y}\mathcal{\tilde{F}}_{11}\right)
				+\mathcal{O}(t^{-\epsilon}),
			\end{aligned}
		\end{equation}
		with
		\begin{equation}
			x=y+\mathrm{i}\mathcal{\tilde{F}}_{11}+\mathcal{O}(t^{-\epsilon}),
		\end{equation}
		where the real vector $\boldsymbol{\delta}=(\delta_1,\ldots,\delta_n)$ and the function $\delta$ are defined by \eqref{delta_2} and \eqref{delta2}, respectively, $\mathcal{\tilde{F}}$ is given by \eqref{mathcaltildeF}.
		
		\item {} {\bf Asymptotics in the Zakharov--Manakov region III:}
		For $\xi\in(-\infty,\widehat{\xi}_n)\cup_{j=1}^n(\xi_j,\widehat{\xi}_{j-1})\cup(\xi_0,+\infty)$,
		\begin{equation}
			\begin{aligned}
				q(x,t)=
				e^{-2\delta(0)}q^{(AG)}(x,t;\boldsymbol{E},\boldsymbol{\widehat{E}},\boldsymbol{\phi}-\boldsymbol{\delta})
				+e^{-2\delta(0)}
				\left(\frac{\partial}{\partial y}\mathcal{J}_{12}\right)\left(1+\mathrm{i}\frac{\partial}{\partial y}\mathcal{J}_{11}\right)
				+\mathcal{O}(t^{-1}),
			\end{aligned}
		\end{equation}
		with
		\begin{equation}
			x=y+\mathrm{i}\mathcal{J}_{11}+\mathcal{O}(t^{-1}),
		\end{equation}
		where the real vector $\boldsymbol{\delta}=(\delta_1,\ldots,\delta_n)$ and the function $\delta$ are defined by \eqref{delta_3} and \eqref{delta3}, respectively, $\mathcal{J}$ is given by \eqref{mathcalJ}.
		
		\item {} {\bf Asymptotics in the fast-decay region IV:}
		For $\xi\in\cup_{j=0}^{n}(\widehat{\xi}_{j},\xi_{j})$,
		\begin{equation}
			\begin{aligned}
				q(x,t)=
				e^{-2\delta(0)}q^{(AG)}(x,t;\boldsymbol{E},\boldsymbol{\widehat{E}},\boldsymbol{\phi}-\boldsymbol{\delta})
				+\mathcal{O}(t^{-1}),
			\end{aligned}
		\end{equation}
		with
		\begin{equation}
			x=y+\mathcal{O}(t^{-1}).
		\end{equation}
	\end{itemize}
\end{theorem}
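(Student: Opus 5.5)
The plan is a Deift--Zhou nonlinear steepest descent analysis of the RH problem for the perturbed defocusing problem, run in the reciprocal variables $(y,t)$ with $\xi=y/t$. At the end the field and the physical coordinate are read off from the expansion at $z=0$.

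\textbf{Step 1: the RH problem and its first normalization.} Using the background Jost solutions built from the finite-genus Baker--Akhiezer matrix of Theorem~\ref{theom-AG}, I would first set up the RH problem $M(y,t;z)$ with the reflection coefficients $r_1,r_2$ from \eqref{r}. It has jumps on $\mathbb R$ and on the band/gap arcs, fourth-root behavior at $E_j,\widehat E_j$, normalization at $z=\infty$, and oscillatory factors $e^{\pm 2\mathrm{i}t\theta(z,\xi)}$ with $\theta$ built from $f,g$ in \eqref{fg} ($\theta\sim \xi f+g$). The reconstruction formula gives $q$ and $x-y$ from the $z=0$ expansion
\[
M(z)=M(0)+M_1 z+\mathcal{O}(z^2).
\]
Its structure should be $q\sim e^{-2\delta(0)}(\partial_y M_1)_{12}(\cdots)$ and $x=y+\mathrm{i}(\cdots)_{11}$. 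This is consistent with the form of the claimed corrections $(\partial_y\mathcal F_{12})(1+\mathrm{i}\partial_y\mathcal F_{11})$.

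Next I would study the sign chart of $\operatorname{Im}\theta$. Its critical points on $\mathbb R$ move with $\xi$. Through \eqref{xi}, a critical point enters a gap endpoint exactly when $\xi=\xi_j$ or $\widehat\xi_j$, and this produces the partition of Definition~\ref{def-1}.

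\textbf{Step 2: conjugation by $\delta$ and the global parametrix.} I would introduce a scalar $\delta(z)$, defined by \eqref{delta1}--\eqref{delta3} region by region, that solves a scalar RH problem on the part of $\mathbb R$ where the reflection factorization has the wrong triangularity. Because of the gap arcs, $\delta$ must be corrected by an Abelian differential so that it has no jumps across cycles other than constants. Those constants are the real phase shifts $\boldsymbol\delta$ in \eqref{delta_1}--\eqref{delta_3}.

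After conjugating, I would open lenses using the factorizations of the jump matrices, so that the jumps off the bands decay exponentially away from the stationary points and the band endpoints. The remaining model problem has only piecewise constant band jumps with phases shifted by $\boldsymbol\delta$. By Theorem~\ref{theom-AG}, it is solved by the finite-genus matrix with $\boldsymbol\phi\mapsto\boldsymbol\phi-\boldsymbol\delta$. Evaluating at $z=0$ gives the leading term $e^{-2\delta(0)}q^{(AG)}(x,t;\boldsymbol E,\widehat{\boldsymbol E},\boldsymbol\phi-\boldsymbol\delta)$; the prefactor comes from $\delta(0)^{\sigma_3}$ conjugation.

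\textbf{Step 3: local analysis, region by region.}
\begin{itemize}
\item \emph{Region IV.} Every jump off the band contour is exponentially small. The small-norm theory leaves an error $\mathcal{O}(t^{-1})$, which comes from the $z$-expansion of the error matrix near $0$.
\item \emph{Region III.} A real stationary point $z_0$ lies in a gap interior. I would place a parabolic-cylinder parametrix there after a local conformal change $\zeta=\sqrt{t}\,(z-z_0)\cdots$. It matches the global parametrix with mismatch $\mathcal{O}(t^{-1/2})$. The first-order correction $\mathcal J$ is the residue of the error at $z_0$, propagated to $z=0$ through the Cauchy transform.
\item \emph{Regions I and II.} The stationary point merges with $\widehat E_{j_1}$ or $E_{j_1}$ at rate $t^{-2/3}$. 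I would build a local parametrix from the Painlev\'e XXXIV model RH problem of \cite{Its2008,Its2009}, using the local variable $\zeta=t^{2/3}(z-\widehat E_{j_1})c$ and the parameter $s\propto t^{2/3}(\xi-\widehat\xi_{j_1})$, which stays bounded. The hypothesis $|r_i|=1$ at the endpoints is what makes the local jump match the PXXXIV jump. The mismatch is $\mathcal{O}(t^{-1/3})$ and yields $\mathcal F$ (respectively $\tilde{\mathcal F}$). The opposite signs in regions I and II come from the opposite orientation at $E_j$ versus $\widehat E_j$.
\end{itemize}

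\textbf{Step 4: error analysis and reconstruction at $z=0$, and the main obstacle.} In each region I would write $M=E\,M^{\mathrm{loc/glob}}$ and show that $E-I$ is small in $L^2\cap L^\infty$ by small-norm theory. Then I would expand at $z=0$ to read off $q$ and $x$. Differentiation in $y$ is handled by noting that the expansions are uniform and analytic in $y$.

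I expect the main obstacle to be the Painlev\'e XXXIV matching in regions I and II. The local phase must be shown to have the exact cubic/square-root form near the merging point, and the fourth-root singularity of the background BA matrix must be absorbed into the model problem. The matching must also hold uniformly for $s$ in a compact set, which requires that the relevant PXXXIV solution be pole-free there. This is the step where the $\mathcal{O}(t^{-\epsilon})$ remainder, rather than an explicit rate, originates.
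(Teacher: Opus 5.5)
Your proposal follows essentially the same route as the paper: Jost solutions built on the finite-genus Baker--Akhiezer background, and a $\delta$-conjugation whose Abelian correction constants produce the phase shift $\boldsymbol\phi\mapsto\boldsymbol\phi-\boldsymbol\delta$ in the theta-function global parametrix. It also has the same local structure: Painlev\'e XXXIV parametrices at $\widehat E_{j_1}$ and $E_{j_1}$ (for the latter the paper works with the renormalized problem $\mathcal{N}$ built from $r_2$), a parabolic-cylinder parametrix in region III, no local parametrix in region IV, and a small-norm error problem evaluated at $z=0$ through the reconstruction formulae.

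One correction concerns region IV. There the saddle $z_1$ lies inside a cut $(E_j,\widehat E_j)$, and the lenses are pinned at it with $\operatorname{Im}\theta(z_1)=0$, so the lens jumps are not uniformly exponentially small near $z_1$. That is what limits the remainder to $\mathcal{O}(t^{-1})$. The $z$-expansion of the error matrix at $0$ is not the source: if every jump really were exponentially small, as you state, that expansion could not produce a $t^{-1}$ term.
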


\begin{remark} i) The above-found results can also be extended to other integrable higher-order WKI flows \cite{Wadati1979b,Li2016}, for example,
the third-order WKI flow
\begin{equation}\label{WKI-2}
q_t+
\left(
\frac{q_x}{(1+\sigma |q|^2)^{3/2}}
\right)_{xx}=0,
\end{equation}
and fourth-order WKI flow
\begin{equation}\label{WKI-2}
\mathrm{i}q_t-
\left(
\frac{4(1+\sigma|q|^2)[q_{xx}(2+\sigma|q|^2)-2\sigma q|q_x|^2-\sigma q^2\bar{q}_{xx}-3\sigma \bar{q}q_{x}^2+5q(|q|^4)_x]}
{32(1+\sigma |q|^2)^{7/2}}
\right)_{xx}=0.
\end{equation}
ii) Moreover, we shall address the long-time asymptotics of the Cauchy problem of the {\it focusing} WKI equation with a finite-genus algebro-geometric background found in Sec. 3 in another work.
\end{remark}

The rest of the paper is organized as follows.
Section~\ref{Preliminaries} reviews the hyperelliptic surface, the classical Baker--Akhiezer data, the WKI gauge transformation, and the reconstruction at $z=0$. In Section~\ref{Representation}, we construct the scalar phase functions and solve the two planar finite-genus model Riemann-Hilbert problems explicitly in terms of Riemann theta functions (i.e., the proof of Theorem \ref{theom-AG}). Section~\ref{JostFuc} develops the Jost solutions and scattering data for the defocusing finite-genus background and analyzes the signature table of the phase. Sections~\ref{Region-I} and~\ref{Region-II} are devoted to the two transition regions and to the associated Painlev\'e XXXIV local parametrices. In Section~\ref{RegionIII-IV}, we study the Zakharov-Manakov and fast-decay regions by means of parabolic-cylinder and small-norm model problems. Finally, Section~\ref{Proof-Regions} evaluates the transformed Riemann-Hilbert solutions at $z=0$ and completes the proof of  Theorem \ref{theom} for the asymptotic formulae for $q(x,t)$ and $x(y,t)$.

\section{Preliminaries}\label{Preliminaries}

In this section, we briefly recall some basic concepts and results from the theory of hyperelliptic Riemann surfaces \cite{Belokolos1994,Farkas-Riemann92,Ges2003,Kotlyarov2017}. We give the construction of the Baker-Akhiezer function for the WKI equation (\ref{WKI}) via the corresponding  RH problem.

Let $\{E_j,\widehat{E}_j\}_{j=0}^{n}$ be $2n+2$ distinct points in $\mathbb{C}$, and
$\mathcal{X}$ be the Riemann surface of genus $n$ defined by the equation
\begin{equation}
\mathcal{X}:\,\, w^2(z)=P(z), \qquad	P(z)=\prod_{j=0}^{n}(z-E_j)(z-\widehat{E}_j),
\end{equation}
with cuts along the arcs $\Gamma_j=(E_j,\widehat{E}_j)$ connecting the branch points
$E_j$ and $\widehat{E}_j$, $j=0,1,2,\ldots,n$. The Riemann surface
$\mathcal{X}$ can be viewed as a double covering of the complex $z$-plane:
two copies of the $z$-plane are glued along the cuts $\Gamma_j$. The upper and
lower sheets of $\mathcal{X}$ are denoted by $\mathcal{X}_+$ and
$\mathcal{X}_-$, respectively, and are fixed by the relations
\begin{equation}
	\sqrt{P(z)}
	=\pm z^{n+1}\left(1+\mathcal{O}(z^{-1})\right),
	\qquad
	z=\pi(\mathcal{P})\to\infty,
	\quad
	\mathcal{P}\in\mathcal{X}_{\pm},
\end{equation}
where $z=\pi(\mathcal{P})$ is the standard projection of
$\mathcal{P}=(z,w)\in\mathcal{X}$ onto the Riemann sphere $\mathbb{P}^{1}$.
Thus each point in the $z$-plane, except for the branch points, has two
preimages $\mathcal{P}_{\pm}\in\mathcal{X}_{\pm}$. We use $\infty^{\pm}$ to stand for the preimages of
$z=\infty$ on $\mathcal{X}_{\pm}$, respectively. After
adjoining the two points $\infty^{+}$ and $\infty^{-}$, $\mathcal{X}$ becomes
a compact Riemann surface of genus $n$. The function $\sqrt{P(z)}$ then becomes
a meromorphic function on the compact Riemann surface $\mathcal{X}$, with
$2n+2$ zeros at $E_j$ and $\widehat{E}_j$, $j=0,1,2,\ldots,n$, and two poles at
$\infty^{+}$ and $\infty^{-}$, each of multiplicity $n+1$.

According to  the new spatial variable (\ref{eq:y-intro}), the vector-valued Baker-Akhiezer function $\psi(y,t; P)=(\psi_1, \psi_2)^T$ for the WKI equation (\ref{WKI}) is defined by the following conditions \cite{Belokolos1994}:
\begin{itemize}
	\item $\psi(y,t; P)$ is meromorphic on $\mathcal{X}\setminus(\{\infty^{-}\}\cup\{\infty^{+}\})$ with pole divisor
	 $
		\mathcal{D}=\sum_{j=1}^n P_j,$
		which is non-special in the sense that $\pi(P_j)\neq E_j,\widehat{E}_j,\quad \pi(P_j)\neq \pi(P_k),\quad j\neq k.$
		
	\item $\psi(y,t; P)$ has the following asymptotic behaviors:
	\begin{equation}
	 \psi(y,t; P)=\left\{\begin{array}{ll}
    \d\left[(1, 0)^T+\mathcal{O}(z^{-1})\right]e^{-izy-2iz^2t},
		& z=\pi(P),\quad P\to\infty^{-}, \v\\
	\d cz\left[(0, 1)^T+\mathcal{O}(z^{-1})\right]e^{izy+2iz^2t},
		& z=\pi(P),\quad P\to\infty^{+},
	\end{array}\right.
 \end{equation}
	where $c$ is a complex constant.
\end{itemize}

\begin{figure}[!t]
	\centering
	\includegraphics[scale=0.3]{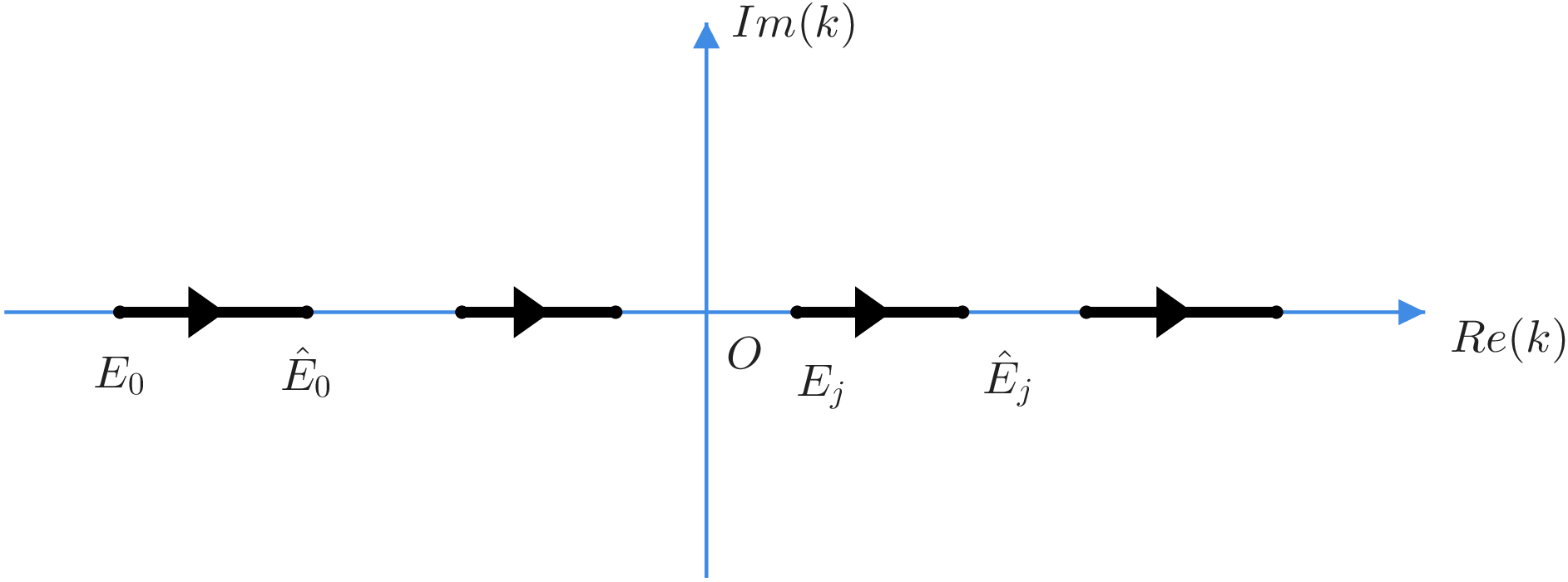}
	\caption{The jump contour on the $z$-plane for the defocusing WKI equation.}
	\label{fig:WKI1}
\end{figure}


Next, we recall the RH formalism for the WKI equation \cite{Li2022a,Li2022b,Liu2017,shima2016}. The RH problem corresponding to the solution of the WKI equation under zero boundary conditions is determined by a piecewise smooth simple curve $\Gamma$ and a $2\times2$ matrix-valued jump function on $\Gamma$, as follows.

\begin{RH}\label{widetildeM}
	The function $\widetilde{M}(z)$ satisfies the following RH problem:

	\begin{enumerate}[label=(\textbf{$\widetilde{M}$\arabic*}), leftmargin=*]
		\item \textbf{Analyticity:} $\widetilde{M}(z)=\widetilde{M}(y,t;z)$ is analytic in $\mathbb{C}\setminus\Gamma$, where $\Gamma$ is a piecewise smooth oriented curve in the complex plane.
		
		\item \textbf{Jump condition:} $\widetilde{M}(z)$ satisfies the jump relation
		\bee
		\widetilde{M}_-(z)=\widetilde{M}_+(z)e^{-i\widetilde\theta(z)\widehat{\sigma}_3}J(z), \quad z\in \Gamma,
		\ene
		where $e^{\widehat{\sigma}_3}A=e^{\sigma_3}Ae^{-\sigma_3}$ and
		$\widetilde\theta(z)=\widetilde\theta(y,t;z)=zy+2z^2t$ with $y$ being defied by \eqref{eq:y-intro}.

		\item \textbf{Asymptotic behaviors:}
		\begin{equation}\label{widetildeM Asymptotic}
			\widetilde{M}(z)=\left\{\begin{array}{ll}
          I+\mathcal{O}(z^{-1}),& z\to\infty, \v\\
		 		\widetilde{M}_{0}(y,t)
			\left[
			I+zH(x(y,t),t)+O(z^{2})
			\right], &  z\to 0,
          \end{array}\right.
		\end{equation}
		where
		\begin{equation}\no
			\widetilde{M}_{0}(y,t)
			=
			e^{\eta(x(y,t),t)\sigma_{3}}
			G^{-1}(x(y,t),t),\quad H=\int_{-\infty}^{x}[Q(s,t)+\mathrm{i}\bigl(p(s,t)-1\bigr)\sigma_{3}]\,ds,
		\end{equation}
			\begin{equation} \no
			G(x,t)
			=
			\sqrt{\frac{p+1}{2p}}
			\begin{pmatrix}
				1 & \mathrm{i}(1-p)/\bar{q} \v\\
				\mathrm{i}(1-p)/q & 1
			\end{pmatrix},\quad
				\eta(x,t)
   =\int_{-\infty}^{x}
			\frac{q\bar{q}_{s}-q_{s}\bar{q}}{4p(p+1)}\,ds.
		\end{equation}
		\item \textbf{Local behavior:} $\widetilde{M}(z)$ has inverse fourth-root singularities at the endpoints of $\Gamma$.
	\end{enumerate}
\end{RH}

Combining \eqref{widetildeM Asymptotic} with \eqref{eq:y-intro}, we obtain the following result:
\begin{proposition}
	Suppose that $\widetilde{M}(y,t;z)$ satisfies the Riemann--Hilbert problem \ref{widetildeM}. Then the solution $q(x,t)$ of the WKI equation \eqref{WKI} is given by the following reconstruction formulae:
	\begin{equation} \label{q}
		q(x,t)=\lim_{z\to 0}
		\frac{\partial_y\left[\widetilde{M}(y,t;0)^{-1}\widetilde{M}(y,t;z)\right]_{12}}{z}\left(1+\mathrm{i}\lim_{\substack{z\to 0}}
		\frac{\partial_y\left[\widetilde{M}(y,t;0)^{-1}\widetilde{M}(y,t;z)\right]_{11}}{z}\right)^{-1},
	\end{equation}
 with	\begin{equation} \label{x}
		x(y,t)=y+\mathrm{i}\lim_{\substack{z\to 0}}
		\frac{\left[\widetilde{M}(y,t;0)^{-1}\widetilde{M}(y,t;z)\right]_{11}-1}{z}.
	\end{equation}
\end{proposition}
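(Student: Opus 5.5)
The plan is to read the formulae off the small-$z$ expansion in \eqref{widetildeM Asymptotic}. Since $\widetilde{M}(y,t;z)=\widetilde{M}_0(y,t)\left[I+zH+O(z^2)\right]$, evaluating at $z=0$ gives $\widetilde{M}(y,t;0)=\widetilde{M}_0(y,t)$. This matrix is invertible: $\det G=\frac{p+1}{2p}\bigl(1+(1-p)^2/|q|^2\bigr)$, and with $\sigma=\pm1$ and $p^2=1+\sigma|q|^2$ one has $(1-p)^2=\sigma^2|q|^4/(1+p)^2$, so $\det G=1$. Hence
\begin{equation*}
\widetilde{M}(y,t;0)^{-1}\widetilde{M}(y,t;z)=I+zH(x(y,t),t)+O(z^2),
\end{equation*}
and therefore $\lim_{z\to0}z^{-1}\bigl[\widetilde{M}(y,t;0)^{-1}\widetilde{M}(y,t;z)-I\bigr]=H(x(y,t),t)$.

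Next I evaluate the entries of $H$. Because $Q$ is off-diagonal and the remaining term is diagonal,
\begin{equation*}
H_{11}=\mathrm{i}\int_{-\infty}^{x}(p-1)\,ds,\qquad H_{12}=\int_{-\infty}^{x}q(s,t)\,ds.
\end{equation*}
By \eqref{eq:y-intro}, $y=x+\int_{-\infty}^{x}(p-1)\,ds$, so $H_{11}=\mathrm{i}(y-x)$ and $x=y+\mathrm{i}H_{11}$. This is exactly \eqref{x}.

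For \eqref{q}, I differentiate in $y$ at fixed $t$ using the chain rule, with $\partial x/\partial y=1/p$. This gives $\partial_yH_{12}=q/p$. Also, $\partial_y H_{11}=\mathrm{i}(1-\partial_y x)=\mathrm{i}(1-1/p)$, so $1+\mathrm{i}\,\partial_yH_{11}=1/p$. The ratio is then $(q/p)\cdot p=q$, which is \eqref{q}.

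The only delicate step is justifying that the limit $z\to0$ and $\partial_y$ commute, i.e. that the $O(z^2)$ remainder is differentiable in $y$ uniformly near $z=0$. I would take this from the standard regularity of the WKI eigenfunctions in $(y,t)$ (cf. \cite{Li2022a,Liu2017}), under which the expansion in \eqref{widetildeM Asymptotic} may be differentiated term by term. One also needs $p>0$, which holds in the defocusing case because $|q|<1$.
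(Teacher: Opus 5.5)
Your argument is correct. It is the computation the paper gestures at when it says the proposition is obtained by ``combining'' the $z\to0$ expansion with the definition of $y$. However, the proof the paper actually writes out takes a different route and never performs this extraction.

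The paper sketches where the RH problem comes from:
\begin{itemize}
\item the gauge $\widetilde{M}=G\,e^{d_+\widehat{\sigma}_3}\mu\,e^{-d_-\sigma_3}e^{\mathrm{i}(zy+2z^2t)\sigma_3}$;
\item the phase $\mathcal{K}$ with $\mathcal{K}_x=\mathrm{i}zp$, whose $x$--$t$ compatibility is the conservation law $\mathrm{i}p_t=(\mathrm{Im}(q\bar q_x)/p^2)_x$;
\item the transformed Lax pair and the Volterra equations for $\mu$.
\end{itemize}
It then asserts that substituting the two formulae into the compatibility condition $\mu_{xt}=\mu_{tx}$ returns the WKI equation. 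So the paper's sketch aims at a different question: why an eigenfunction-built $\widetilde{M}$ has the prescribed behavior at $z=0$, and why the reconstructed $q$ solves the WKI equation.

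Your route instead takes that behavior as a hypothesis of the RH problem, which is exactly how the proposition is phrased. You then carry out the identification explicitly. From $H_{11}=\mathrm{i}(y-x)$ you get the formula for $x$. From $\partial_y x=1/p$ you get $\partial_yH_{12}=q/p$ and $1+\mathrm{i}\,\partial_yH_{11}=1/p$, which recovers $q$. Yours is the more complete verification of the stated formulae. The paper's argument supplies the (sketched) link to the Lax pair, which your argument does not need.

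One small correction: $\det G=1$ holds only for $\sigma=1$. With $G$ as printed and $\sigma=-1$, one has $(1-p)^2/|q|^2=(1-p)/(1+p)$, so $\det G=1/p$. Since $p>0$ when $|q|<1$, this is still nonzero. The only thing you use is the invertibility of $\widetilde{M}(y,t;0)$, and that is unaffected.
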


\begin{proof}
	Similar to the Camassa-Holm equation \cite{Monvel2006}, one can introduce the improved transformation \cite{Li2022a,Li2022b}
	\begin{equation}
		\widetilde{M}(y,t;z)=G(x,t)e^{d_+\widehat{\sigma}_3}\mu(x,t;z)e^{-d_-\sigma_3}e^{\mathrm{i}(zy+2z^2t)\sigma_3},
	\end{equation}
	where
	\begin{equation} \no
		d_\mp=\pm
		\int_{\mp\infty}^{x}
		\frac{{\rm Im}(q\bar{q}_x)}{2p(p+1)}(s,t)\,ds,\qquad
		d=d_++d_-=
		\int_{-\infty}^{+\infty}
		\frac{{\rm Im}(q\bar{q}_x)}{2p(p+1)}(s,t)\,ds.
	\end{equation}
	We then define $\mathcal{K}(x,t;z)$ by
	\begin{equation}
		\mathcal{K}(x,t;z)
		=
		\mathrm{i}z\left[
		x+\int_{-\infty}^x(p(s)-1)\,ds
		\right]
		+2\mathrm{i}z^2t.	
	\end{equation}
	It follows that
	\begin{equation}
	\mathcal{K}_x(x,t;z)=\mathrm{i}zp,\qquad
	\mathcal{K}_t(x,t;z)
	=
	\left(
	2\mathrm{i}z^2
	+
	z\frac{{\rm Im}(q\bar{q}_x)}{p^2}
	\right),
	\end{equation}
	which are compatible, that is, $\mathcal{K}_{xt}=\mathcal{K}_{tx}$. This compatibility relation can be written as
	\begin{equation}
		\mathrm{i}p_t
		=
		\left(
		\frac{{\rm Im}(q\bar{q}_x)}{p^2}
		\right)_x.
	\end{equation}
	which is the conversation law of the WKI equation (\ref{WKI}) \cite{Wadati1979b}.
	
	Then the equivalent Lax pair for $\widetilde{\mu}$ in \eqref{Lax} can be written as
	\begin{equation} \label{mu-lax}
		\left\{\begin{array}{l}
			\mu_x+\mathcal{K}_x[\sigma_3,\mu]
			=
			-e^{-d_+\widehat{\sigma}_3}U\mu,\v\\
			\mu_t+\mathcal{K}_t[\sigma_3,\mu]
			=
			-e^{-d_+\widehat{\sigma}_3}V\mu,
		\end{array}\right.
	\end{equation}
	where
	\begin{equation} \no
		U=
		\begin{pmatrix}
			0
			&
			-\dfrac{\mathrm{i}q\left[p{\rm Im}(q\bar{q}_x)-|q|_x^2\right]}
			{4p^2(p^2-1)}
			\\[1.2ex]
			\dfrac{\mathrm{i}\bar{q}\left[p{\rm Im}(q\bar{q}_x)+|q|_x^2\right]}
			{4p^2(p^2-1)}
			&
			0
		\end{pmatrix},\qquad
			V=
		\begin{pmatrix}
			0 & v_{12} \v\\
			v_{21} & 0
		\end{pmatrix},
	\end{equation}
	with
	\begin{equation} \no
	v_{12}
	=
	\frac{
		\mathrm{i}q\left[\bar{q}{\rm Im}(q\bar{q}_x)-2\bar{q}_x(p-1)\right]
	}{2p^3(p-1)\bar{q}}\,z
	-
	\frac{
		\mathrm{i}q\left[\bar{q}{\rm Im}(q\bar{q}_t)-2\bar{q}_t(p-1)\right]
	}{4p^2(p-1)\bar{q}},
	\end{equation}
	\begin{equation}  \no
	v_{21}
	=
	\frac{
		\mathrm{i}\bar{q}\left[q{\rm Im}(q\bar{q}_x)+2q_x(p-1)\right]
	}{2p^3(p-1)q}\,z
	+
	\frac{
		\mathrm{i}\bar{q}\left[q{\rm Im}(q\bar{q}_t)+2q_t(p-1)\right]
	}{4p^2(p-1)q}.
	\end{equation}
	The Jost solutions of (\ref{mu-lax}) is written as the Volterra integral equations
	\begin{equation}
	\mu_{\pm}=I-\int_{\pm \infty}^xe^{[\mathcal{K}(s,t;z)-\mathcal{K}(x,t;z)]\widehat{\sigma}_3}e^{-d_+(s,t)\widehat{\sigma}_3}U(s,t;z)\mu_{\pm}(s,t;z)ds.
	\end{equation}
	
	According the compatibility condition of (\ref{mu-lax}),  $\mu_{xt}= \mu_{tx}$, substituting the reconstruction formulae for $x(y,t)$ in \eqref{x} and $q(x,t)$ in \eqref{q}, one recovers the WKI equation (\ref{WKI}).
\end{proof}

The Baker-Akhiezer function, reformulated in terms of the Riemann-Hilbert problem, is defined by
\begin{equation}
	\Psi^{(AG)}(y,t;z)
	=
	\widetilde{M}(y,t;z)e^{-\mathrm{i}(zy+2z^2t)\sigma_3},
\end{equation}
as in Definition~\ref{BA}.

\begin{definition}[Baker--Akhiezer function]\label{BA}
	Let $\Gamma_j=(E_j,\widehat{E}_j)$, $j=0,1,\ldots,n$, be oriented arcs connecting $E_j$ and $\widehat{E}_j$; see Figs.~\ref{fig:WKI1} and~\ref{fig:WKI2}. A $2\times2$ matrix $\Psi^{(AG)}(y,t;z)$ is called the Baker-Akhiezer function for the WKI equation \eqref{WKI} if it satisfies the following Riemann--Hilbert problem:
	\begin{RH}\quad
		\begin{enumerate}[label=(\textbf{$\Psi^{(AG)}$\arabic*}), leftmargin=*]
			\item \textbf{Analyticity:} $\Psi^{(AG)}(y,t;z)$ is analytic in $\mathbb{C}\setminus\Gamma$, where
			 $\Gamma=\left(\cup_{j=0}^n\Gamma_j\right)\cup\left(\cup_{j=1}^n\widehat{\Gamma}_j\right)$,\,
				$\Gamma_j=(E_j,\widehat{E}_j),$ \, $\widehat{\Gamma}_j=(\widehat{E}_{j-1},E_j).$

			\item \textbf{Jump condition:} $\Psi^{(AG)}_y{\Psi^{(AG)}}^{-1}$ and $\Psi^{(AG)}_t{\Psi^{(AG)}}^{-1}$ are entire functions. Moreover, $\Psi^{(AG)}$ satisfies
			\begin{equation} \no
				\Psi^{(AG)}_-(y,t;z)=\Psi^{(AG)}_+(y,t;z)J(z),\quad z\in\Gamma,
			\end{equation}
			where $J(z)$ is a piecewise constant matrix on $\Gamma$:
			
			For Case A,
			\begin{equation}\label{caseA}
				J(z)=
				\begin{pmatrix}
					0 & ie^{i\phi_j}\\
					ie^{-i\phi_j} & 0
				\end{pmatrix},
				\quad z\in(E_j,\widehat{E}_j),\quad j=0,1,\ldots,n.
			\end{equation}
			
			For Case B,
			\begin{equation}\label{caseB}
				J(z)=
				\begin{cases}
					i\sigma_1, 
					& z\in\Gamma_j=(E_j,\widehat{E}_j),\quad j=0,1,\ldots,n,\\[1.2em]
					e^{i\widehat{\phi}_j\sigma_3}
					& z\in\widehat{\Gamma}_j=(\widehat{E}_{j-1},E_j),\quad j=1,\ldots,n.
				\end{cases}
			\end{equation}
			
			Here $\phi_j$ and $\widehat{\phi}_j$ are real constants, with $\phi_0=0$.
			
			\item \textbf{Normalization at infinity:} $\Psi^{(AG)}(x,t;z)=\left(I+\mathcal{O}(z^{-1})\right)
				e^{-i(zy+2z^2t)\sigma_3}$ as $z\to\infty$.
			
			\item \textbf{Local behavior:} $\Psi^{(AG)}(y,t;z)$ has inverse fourth-root singularities at $E_j$ and $\widehat{E}_j$, $j=0,\ldots,n$.
			
			\item \textbf{Symmetry condition:} $\Psi^{(AG)}(y,t;z)$ satisfies the following symmetry relations:
			\begin{align} \no
			\text{Defocusing WKI equation}:\quad &\Psi^{(AG)}(y,t;z)=\sigma_1\overline{\Psi^{(AG)}(y,t;\bar{z})}\sigma_1,\\
			\text{Focusing WKI equation}: \quad &	\Psi^{(AG)}(y,t;z)=\sigma_2\overline{\Psi^{(AG)}(y,t;\bar{z})}\sigma_2. \no
			\end{align}
		\end{enumerate}
	\end{RH}
\end{definition}

\begin{figure}[!t]
	\centering
	\includegraphics[scale=0.3]{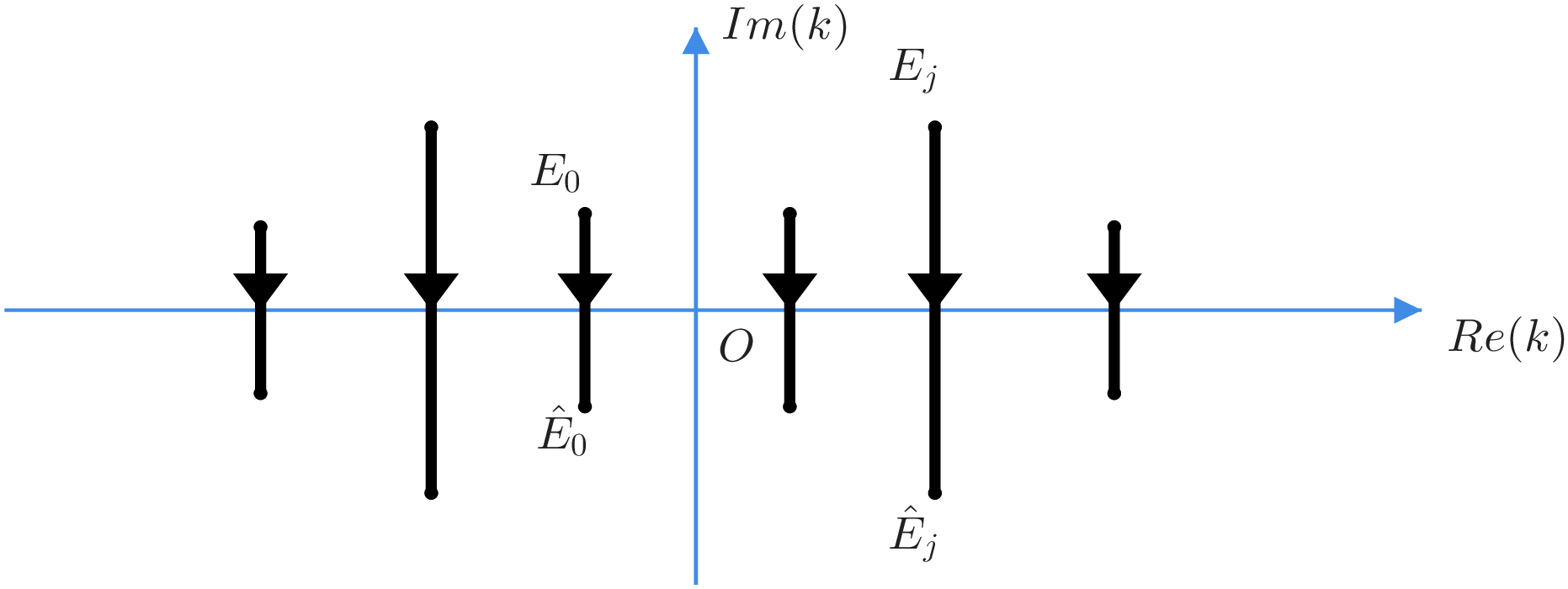}
	\caption{The jump contour in the $z$-plane for the focusing WKI equation.}
	\label{fig:WKI2}
\end{figure}

\section{The finite-genus algebro-geometric solutions}\label{Representation}

In this section, we construct the finite-genus algebro-geometric solutions of the WKI equation (\ref{WKI}), that is, we give the proof of Theorem \ref{theom}.
The existence of the Baker--Akhiezer function is proved by an explicit construction of $\Psi^{(AG)}$, obtained by solving the original RH problem. To transform the initial RH problem into a form that admits an explicit solution, we seek $\Psi^{(AG)}(y,t,z)$ in the form
\begin{equation} \label{psi-AG}
	\Psi^{(AG)}(y,t,z)
	=
	e^{\mathrm{i}(f_0y+g_0t)\sigma_3}
	M(y,t,z)
	e^{-\mathrm{i}(f(z)y+g(z)t)\sigma_3},	
\end{equation}
where the constants $f_0$, $g_0$, the scalar functions $f(z)$, $g(z)$, and the matrix $M(y,t,z)$ are to be determined. From the definition of $\Psi^{(AG)}$, the functions $f$ and $g$ satisfy the following Riemann--Hilbert problem.

\begin{RH}
	The functions $f(z)$ and $g(z)$ satisfy the following RH problem:
	\begin{enumerate}[label=(\textbf{$f$-$g$-\arabic*}), leftmargin=*]
		\item \textbf{Analyticity:} $f(z)$ and $g(z)$ are analytic in
		$\mathbb{C}\setminus\Gamma$.
      		
		\item \textbf{Jump condition:} $f(z)$ and $g(z)$ satisfy the jump conditions
		\begin{equation}
		 \left\{
         \begin{array}{lll}	f_+(z)+f_-(z)=C_j^f, &	g_+(z)+g_-(z)=C_j^g, &
		 z\in\Gamma_j,\,j=0,\ldots,n, \v\\
			f_+(z)-f_-(z)=\widehat{C}_j^f, &			g_+(z)-g_-(z)=\widehat{C}_j^g, &
			z\in\widehat{\Gamma}_j,\, j=1,\ldots,n,
        \end{array}\right.
        \end{equation}
       where $C_j^f,\,C_j^g,\,\widehat{C}_j^f,\, \widehat{C}_j^g$ are real constants.
		
		\item \textbf{Asymptotic behavior:} As $z\to\infty$, $f(z)$ and $g(z)$ have the asymptotic behaviors:
		\begin{equation}\label{f}
			f(z)=z+f_0+\mathcal{O}\left(z^{-1}\right), \qquad
			g(z)=2z^2+g_0+\mathcal{O}\left(z^{-1}\right).
		\end{equation}

\item \textbf{Symmetry relations:} $\overline{f(\overline{z})}=f(z)$,\quad $\overline{g(\overline{z})}=g(z)$.
	\end{enumerate}
\end{RH}

\subsection{Constructions of $f(z)$ and $g(z)$}
In this subsection, we construct the phase functions on the $z$-plane. The phase functions $f(z)$ and $g(z)$ are scalar functions defined by the Cauchy-type integrals \cite{Kotlyarov2017}
\begin{equation}
	f(z)=
	\begin{cases}
		\displaystyle
		\omega(z)\left(1+\frac{1}{2\pi\mathrm{i}}
		\int_{\Gamma_0}
		\frac{C_0^f}{\omega_+(\xi)(\xi-z)}\,d\xi\right),
		& n=0,\\[2ex]
		\displaystyle
		\frac{w(z)}{2\pi\mathrm{i}}
		\left(
		\sum_{j=0}^{n}\int_{\Gamma_j}
		\frac{C_j^f}{w_+(\xi)(\xi-z)}\,d\xi
		+
		\sum_{l=1}^{n}\int_{\widehat{\Gamma}_l}
		\frac{\widehat{C}_l^f}{w(\xi)(\xi-z)}\,d\xi
		\right),
		& n\geq 1,
	\end{cases}
\end{equation}
and
\begin{equation}
	g(z)=
	\begin{cases}
		\displaystyle
		w(z)\left(
		2z+(E_0+\widehat{E}_0)
		+\frac{1}{2\pi\mathrm{i}}\int_{\Gamma_0}
		\frac{C_0^g}{w_+(\xi)(\xi-z)}\,d\xi
		\right),
		& n=0,\\[3ex]
		\displaystyle
		w(z)\left(
		2+\frac{1}{2\pi\mathrm{i}}
		\left(
		\sum_{j=0}^{1}\int_{\Gamma_j}
		\frac{C_j^g}{w_+(\xi)(\xi-z)}\,d\xi+\sum_{j=0}^{1}\int_{\widehat\Gamma_j}
		\frac{\widehat{C}_j^g}{w(\xi)(\xi-z)}\,d\xi
		\right)
		\right),
		& n=1,\\[3ex]
		\displaystyle
		\frac{w(z)}{2\pi\mathrm{i}}
		\left(
		\sum_{j=0}^{n}\int_{\Gamma_j}
		\frac{C_j^g}{w_+(\xi)(\xi-z)}\,d\xi+\sum_{j=0}^{n}\int_{\widehat\Gamma_j}
		\frac{\widehat{C}_j^g}{w(\xi)(\xi-z)}\,d\xi
		\right),
		& n\geq 2,
	\end{cases}
\end{equation}
where $C_j^f$, $C_j^g$, $j=0,1,\ldots,n$, and $\widehat{C}_j^f$, $\widehat{C}_j^g$, $j=1,2,\ldots,n$, must satisfy the system of linear equations generated by the asymptotic behaviors \eqref{f}  as $z\to\infty$:
\begin{equation}
	\sum_{j=0}^{n}\int_{\Gamma_j}
	\frac{\xi^m C_j^f}{w_+(\xi)}\,d\xi
	+
	\sum_{l=1}^{n}\int_{\widehat{\Gamma}_l}
	\frac{\xi^m \widehat{C}_l^f}{w(\xi)}\,d\xi
	=
	-2\pi\mathrm{i}\delta_{m,n-1},
\end{equation}
for $m=0,1,2,\ldots,n-1$. Thus we have $n$ equations for $3n+2$ unknowns. An analogous system is obtained for the function $g(z)$:
\bee
\left\{\begin{array}{l}
	\d\sum_{j=0}^{n}
	\int_{\Gamma_j}\frac{\xi^kC_j^g\,d\xi}{w_+(\xi)}+\sum_{l=1}^{n}
	\int_{\widehat\Gamma_l}\frac{\xi^k\widehat{C}_l^g\,d\xi}{w(\xi)}
	=0,
	\qquad k=0,\ldots,n-3, \v\\
	\d\sum_{j=0}^{n}
	\int_{\Gamma_j}\frac{\xi^{n-2}C_j^g\,d\xi}{w_+(\xi)}+\sum_{l=1}^{n}
	\int_{\widehat\Gamma_l}\frac{\xi^{n-2}\widehat{C}_l^g\,d\xi}{w(\xi)}
	=-4\pi\mathrm{i}, \v\\
	\d \sum_{j=0}^{n}
	\int_{\Gamma_j}\frac{\xi^{n-1}C_j^g\,d\xi}{w_+(\xi)}+\sum_{l=1}^{n}
	\int_{\widehat\Gamma_l}\frac{\xi^{n-1}\widehat{C}_l^g\,d\xi}{w(\xi)}
	=-2\pi\mathrm{i}\sum_{j=0}^{n}(E_j+\widehat{E}_j).
\end{array} \right.
\ene

To reduce the number of unknowns to the number of equations, it is natural to consider the following two cases:
\bee
\left\{
\begin{array}{l}
{\rm Case\,\, A:} \quad C_0^f=C_0^g=\widehat{C}_j^f=\widehat{C}_j^g=0,
\qquad j=1,2,\ldots,n; \v\\
{\rm Case\,\, B:}\quad C_j^f=C_j^g=0, \qquad j=0,1,2,\ldots,n.
\end{array}\right.
\ene

These two cases correspond to Cases A and B in Definition~\ref{BA}.

\subsection{Explicit construction of $M(y,t;z)$ in Case A and algebro-geometric solution}

In this subsection, we seek an explicit representation of $M(y,t,z)$ by solving the associated Riemann--Hilbert problem.

\begin{RH} \label{RH-A-M}
	In Case A, $M(y,t,z)$ is the solution of the following RH problem:
	\begin{enumerate}[label=(\textbf{$M$\arabic*}), leftmargin=*]
		\item \textbf{Analyticity:} $M(z)$ is analytic in $\mathbb{C}\setminus\Gamma_A$, where $\Gamma_A=\bigcup_{j=0}^{n}(E_j,\widehat{E}_j)$.
		
		\item \textbf{Jump condition:} $M(z)$ satisfies the jump relation $M_-(z)=M_+(z)J_A(z)$, $z\in\Gamma_A$, where the jump matrix is given by
		\begin{equation} \label{eq:M_jump}
		J_A(z)=\left\{
           \begin{array}{ll}
            i\sigma_1
, & z\in(E_0,\widehat{E}_0), \v\\
			\begin{pmatrix}
				0 & \mathrm{i}e^{-\mathrm{i}yC_j^f-\mathrm{i}tC_j^g-\mathrm{i}\phi_j}\\
				\mathrm{i}e^{\mathrm{i}yC_j^f+\mathrm{i}tC_j^g+\mathrm{i}\phi_j} & 0
			\end{pmatrix}, & z\in(E_j,\widehat{E}_j),\,\, j=1,2,\ldots,n.
        \end{array}\right.
		\end{equation}

		\item \textbf{Normalization at infinity:} As $z\to\infty$, $M(z)=I+\mathcal{O}(z^{-1})$.
		
		\item \textbf{Local behavior:} $M(z)$ is bounded on $\Gamma_A$ except at the branch points $E_j$ and $\widehat{E}_j$, where it has at most inverse fourth-root singularities.
	\end{enumerate}
\end{RH}

First, define
\begin{equation}
	\varkappa(z)=
	\prod_{j=0}^{n}
	\sqrt[4]{\frac{z-E_j}{z-\widehat{E}_j}},
	\qquad
	z\in\mathbb{C}\setminus\Gamma_A,
\end{equation}
where the cuts are taken along $(E_j,\widehat{E}_j)$, $j=0,\ldots,n$, and the branch is fixed by the condition $\varkappa(\infty)=1$. Then
\begin{equation}
	\varkappa_-(z)=\mathrm{i}\varkappa_+(z),
	\qquad z\in\Gamma_A,
	\tag{3.1}
\end{equation}
and
\bee
\varkappa(z)=\left\{\begin{array}{ll}
 (z-\widehat{E}_j)^{-1/4}+\mathcal{O}(1), & z\to\widehat{E}_j, \v\\
 \d 1+\sum_{j=0}^{n}\frac{\widehat{E}_j-E_j}{4z}+\mathcal{O}(z^{-2}),& z\to\infty.
 \end{array}\right.
 \ene
We choose the $\mathbf{a}_l$-cycles to be ovals on the upper sheet of $\mathcal{X}$ encircling the intervals $(E_l,\widehat{E}_l)$, $l=1,2,\ldots,n$. The $\mathbf{b}_l$-cycles start from $(E_0,\widehat{E}_0)$, go to the interval $(E_l,\widehat{E}_l)$ on the upper sheet, and then return to the starting point on the lower sheet; see Figs.~\ref{fig:WKI3} and~\ref{fig:WKI4}.

\begin{figure}[!t]
	\centering
	\includegraphics[scale=0.3]{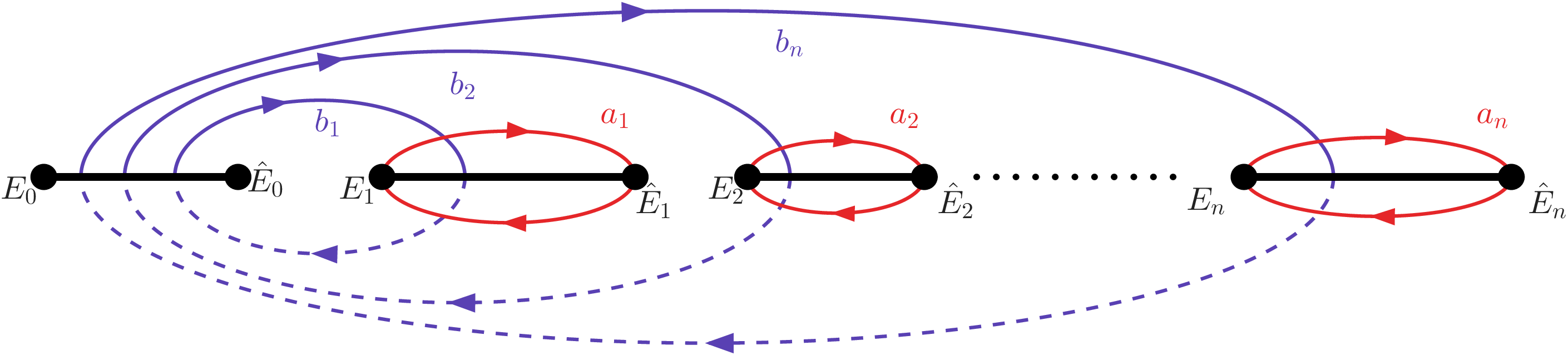}
	\caption{Case A: cuts on the real axis for the defocusing WKI equation.}
	\label{fig:WKI3}
\end{figure}

\begin{figure}[!t]
	\centering
	\includegraphics[scale=0.3]{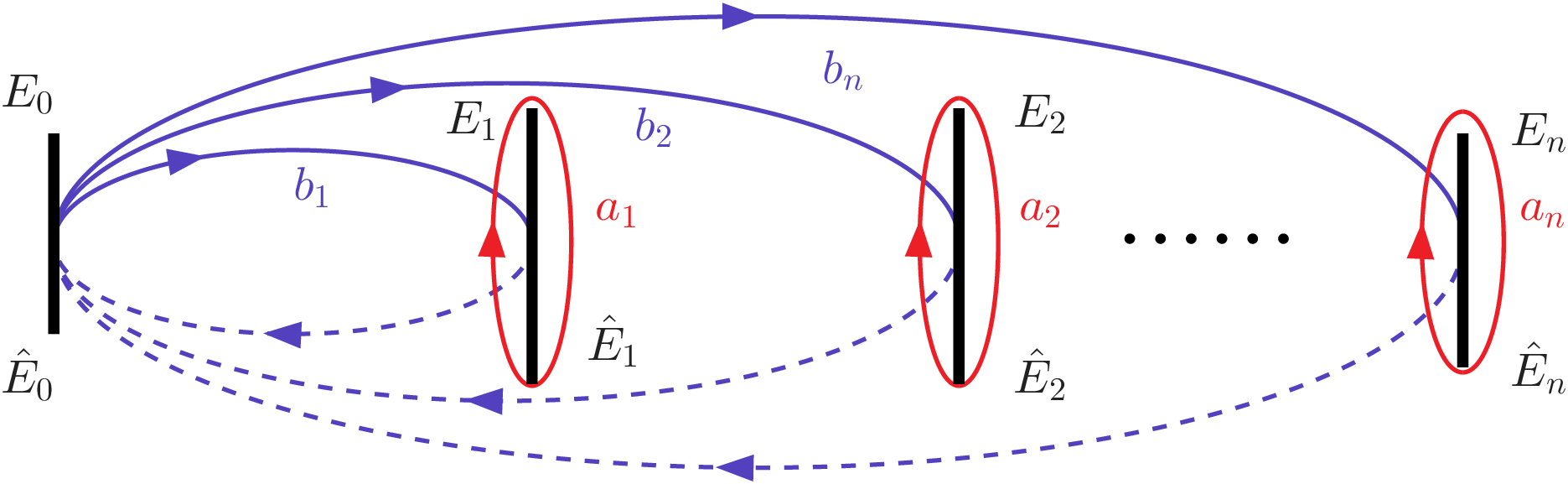}
	\caption{Case A: vertical cuts for the focusing WKI equation.}
	\label{fig:WKI4}
\end{figure}

Next, we introduce the Abelian integrals
\begin{equation}
	\omega_j(z)=\int_{\widehat{E}_0}^{z}\psi_j(s)\,ds,\quad j=1,2,\ldots,n,
\end{equation}
where $d\omega_j(\mathcal{P})$ form a basis of holomorphic differentials on $\mathcal{X}$:
\begin{equation}
\no
	\psi_j(z)=
	\frac{\displaystyle\sum_{i=1}^{n}c_{ji}z^{n-i}}
	{\sqrt{P(z)}}.
\end{equation}
The coefficients $c_{jl}$ are uniquely determined by the normalization relations
\begin{equation}
	\int_{\mathbf{a}_l}d\omega_j(\mathcal{P})=2\int_{E_l}^{\widehat{E}_l}\psi_{j+}(z)\,dz=\delta_{jl},\quad j,l=1,2,\ldots,n.
\end{equation}

The normalized holomorphic differentials define the $b$-period matrix
\begin{equation}
	B_{jl}=\int_{\mathbf{b}_l} d\omega_j(\mathcal{P})=2\sum_{k=1}^{l}\int_{\widehat{E}_{k-1}}^{E_k}\psi_j(z)\,dz.
\end{equation}
This matrix is symmetric and has positive definite imaginary part.

Associated with the matrix $B$ is the Riemann theta function, defined for $\mathbf{u}\in\mathbb{C}^n$ by the Fourier series
\begin{equation}
	\Theta(u_1,\ldots,u_n)
	=
	\sum_{\mathbf{l}\in\mathbb{Z}^n}
	\exp\left\{
	\pi\mathrm{i}(B\mathbf{l},\mathbf{l})
	+2\pi\mathrm{i}(\mathbf{l},\mathbf{u})
	\right\},
\end{equation}
where $(\mathbf{l},\mathbf{u})=l_1u_1+\cdots+l_nu_n$. The theta function is even, i.e., $\Theta(-\mathbf{u})=\Theta(\mathbf{u})$, and satisfies the periodicity relations
\begin{equation}
	\Theta(\mathbf{u}\pm\mathbf{e}_j)=\Theta(\mathbf{u}),
	\qquad
	\Theta(\mathbf{u}\pm B\mathbf{e}_j)
	=
	e^{\mp 2\pi\mathrm{i}u_j-\pi\mathrm{i}B_{jj}}\Theta(\mathbf{u}),
\end{equation}
where $\mathbf{e}_j=(0,\ldots,1,\ldots,0)$ is the $j$-th unit vector in $\mathbb{C}^n$, and $B\mathbf{e}_j$ is the $j$-th column of $B$.

Let $\Lambda\subset\mathbb{C}^n$ denote the lattice generated by integer linear combinations of the vectors $\mathbf{e}_j$ and $B\mathbf{e}_j$, $j=1,2,\ldots,n$. The Jacobian variety of $\mathcal{X}$ is then the complex torus $\operatorname{Jac}\{\mathcal{X}\}=\mathbb{C}^n/\Lambda$. The Abel map $\mathbf{A}:\mathcal{X}\to\operatorname{Jac}\{\mathcal{X}\}$ is defined by
\begin{equation}
	A_j(\mathcal{P})=\int_{\mathcal{P}_0}^{\mathcal{P}}d\omega_j(\mathcal{Q}),\quad j=1,2,\ldots,n,
\end{equation}
where the point $\mathcal{P}_0$ is fixed by the condition $\pi(\mathcal{P}_0)=\widehat{E}_0$, and $\mathcal{Q}$ denotes the integration variable. The Abel map is also extended to integral divisors $\mathcal{D}=\mathcal{P}_1+\cdots+\mathcal{P}_m$ by summation:
\begin{equation}
	\mathbf{A}(\mathcal{D})=\mathbf{A}(\mathcal{P}_1)+\cdots+\mathbf{A}(\mathcal{P}_m).
\end{equation}
For any non-special integral divisor $\mathcal{D}=\mathcal{P}_1+\cdots+\mathcal{P}_n$ of degree $n$, there exists a vector $\mathbf{w}(\mathcal{D})$ such that the Riemann theta function $\Theta(\mathbf{A}(\mathcal{P})+\mathbf{w}(\mathcal{D}))$, defined on $\mathcal{X}$ cut along the cycles $\mathbf{a}_j$ and $\mathbf{b}_j$, has precisely $n$ zeros at $\mathcal{P}_j$, $j=1,\ldots,n$. The vector $\mathbf{w}(\mathcal{D})$ is defined by
\begin{equation}
	\mathbf{w}(\mathcal{D})=-\mathbf{A}(\mathcal{D})-\mathbf{K},
\end{equation}
where the Riemann constant vector $\mathbf{K}$ is defined componentwise, modulo the lattice $\Lambda$, by
\begin{equation}  \no
	K_j=
	\frac{1}{2}
	-\frac{B_{jj}}{2}
	+
	\sum_{\substack{l=1\\ l\ne j}}^{n}
	\int_{\mathbf{a}_l}
	\left(
	\int_{\mathcal{P}_0}^{\mathcal{Q}}\omega_j
	\right)\omega_l.
\end{equation}
In the hyperelliptic case,
\begin{equation}
	K_j=\frac{1}{2}\sum_{l=1}^{n}B_{lj}-\frac{j}{2}.
\end{equation}

The Abelian integrals $\mathbf{A}(z)$, considered on the upper sheet of $\mathcal{X}$, have the following properties:
\bee
 \begin{array}{lll}
	\mathbf{A}_-(z)-\mathbf{A}_+(z)	=0,	&	z\in(\widehat{E}_{j-1},E_j), &	j=1,\ldots,n, \v\\
	\mathbf{A}_-(z)+\mathbf{A}_+(z)=B\mathbf{e}_j,	&	z\in(E_j,\widehat{E}_j),&	j=0,1,\ldots,n.
\end{array}
\ene
Now let
\begin{equation}
	F_s(z)
	=
	\frac{\theta(\mathbf{A}(z)+\mathbf{c}+\mathbf{d}_s)}
	{\theta(\mathbf{A}(z)+\mathbf{d}_s)},
	\quad
	H_s(z)
	=
	\frac{\theta(-\mathbf{A}(z)+\mathbf{c}+\mathbf{d}_s)}
	{\theta(-\mathbf{A}(z)+\mathbf{d}_s)},
	\quad
	z\in\mathbb{C}\setminus\Gamma_A,\quad s=1,2,
\end{equation}
where $	\mathbf{c}=(c_0,c_1,\ldots,c_j,\ldots,c_n),\quad j=0,1,\ldots,n,$
and $\mathbf{d}_1=-\mathbf{w}(\mathcal{D}),	\quad	\mathbf{d}_2=\mathbf{w}(\mathcal{D}).$
Then, by the jump relations for the Abel map, we obtain
\bee
\begin{array}{lll}
	F_{s-}(z)=F_{s+}(z),& H_{s-}(z)=H_{s+}(z), & z\in(\widehat{E}_{j-1},E_j),\,\,\, s=1,2;\,\, j=1,\ldots,n, \v\\
	F_{s-}(z)=e^{-2\pi\mathrm{i}c_j}H_{s+}(z),& H_{s-}(z)=e^{2\pi\mathrm{i}c_j}F_{s+}(z),&
		z\in(E_j,\widehat{E}_j),\,\,\, s=1,2;\,\, j=0,1,\ldots,n,
\end{array}
\ene
where $c_0=0$.

Next, define the $2\times2$ matrix-valued function
\begin{equation}
	N(z)=
	\begin{pmatrix}
		F_1(z) & H_1(z)\\
		F_2(z) & H_2(z)
	\end{pmatrix}.
\end{equation}
Then $N(z)$ has the jump relation: $N_-(z)=N_+(z)J_{AN}(z)$,  where the jump matrix $J_{AN}(z)$ is defined as
\bee
 J_{AN}(z)=\left\{\begin{array}{ll}
	 I, &	z\in(\widehat{E}_{j-1},E_j),\,\,\, j=1,\ldots,n, \v\\
	\begin{pmatrix}
		0 & e^{2\pi\mathrm{i}c_j}\\
		e^{-2\pi\mathrm{i}c_j} & 0
	\end{pmatrix}, &	z\in(E_j,\widehat{E}_j),\,\,\, j=0,1,\ldots,n,
 \end{array}\right.
 \ene
with $c_0=0$.

Finally, we define the matrix-valued function $M(z)$ as \cite{Kotlyarov2017}
\begin{equation}\label{M}
	\begin{aligned}
		M(z)=
		\frac{1}{2}
		\begin{pmatrix}
			F_1^{-1}(\infty) & 0\\
			0 & H_2^{-1}(\infty)
		\end{pmatrix}
		\begin{pmatrix}
			(\varkappa(z)+\varkappa^{-1}(z))F_1(z)
			&
			(\varkappa(z)-\varkappa^{-1}(z))H_1(z)
			\\[1ex]
			(\varkappa(z)-\varkappa^{-1}(z))F_2(z)
			&
			(\varkappa(z)+\varkappa^{-1}(z))H_2(z)
		\end{pmatrix}.
	\end{aligned}
\end{equation}

It is straightforward to verify that $M(z)$ is analytic in $\mathbb{C}\setminus\bigcup_{j=0}^{n}(E_j,\widehat{E}_j)$, satisfies $M(z)=I+\mathcal{O}\left(z^{-1}\right)$ as $z\to\infty$, and has the jumps
\begin{equation}
	M_-(z)
	=
	M_+(z)
	\begin{pmatrix}
		0 & \mathrm{i}e^{2\pi\mathrm{i}c_j}\\
		\mathrm{i}e^{-2\pi\mathrm{i}c_j} & 0
	\end{pmatrix},
	\qquad
	z\in(E_j,\widehat{E}_j),\quad j=0,1,\ldots,n.
\end{equation}
These jumps agree with the jump conditions of the Riemann--Hilbert problem \ref{RH-A-M} if we set
\begin{equation}
	c_j=-\frac{yC_j^f+tC_j^g+\phi_j}{2\pi},
	\qquad j=1,\ldots,n,
\end{equation}
recalling that $c_0=0$. Hence $M(y,t,z)$ is the solution of the Riemann--Hilbert problem.

By the reconstruction formulae, we can find the finite-genus algebro-geometric solution $q^{(AG)}(x,t)$ of the WKI equation (\ref{WKI})
\begin{equation} \label{AG-A}
	\begin{aligned}
		q^{(AG)}(x,t)=
		\lim_{z\to0}
		\frac{
			\partial_y\left(e^{2\mathrm{i}(f_0y+g_0t)}(M_{22}(y,t;0)M_{12}(y,t;z)-M_{12}(y,t;0)M_{22}(y,t;z))\right)
		}{z}\\
		\qquad\quad \times\left(1+\mathrm{i}\lim_{\substack{z\to 0}}
		\frac{\partial_y\left(M_{22}(y,t;0)M_{11}(y,t;z)-M_{12}(y,t;0)M_{21}(y,t;z)\right)}{z}\right)^{-1}
	\end{aligned}
\end{equation}
with
\begin{equation} \label{AG-A2}
	\begin{aligned}
		x(y,t)
		=y+\mathrm{i}\lim_{z\to0}
		\frac{
			M_{22}(y,t;0)M_{11}(y,t;z)-M_{12}(y,t;0)M_{21}(y,t;z)-1
		}{z}.
	\end{aligned}
\end{equation}
Here $M$ is defined in \eqref{M}.

\subsection{Explicit construction of $M(y,t;z)$ in Case B and algebro-geometric solution}

In the long-time asymptotic analysis below, we use Case A. Case B is included for completeness and for comparison with the focusing-type contour configuration.

In this subsection, we seek an explicit representation of $M(y,t;z)$ by solving the associated Riemann--Hilbert problem.

\begin{RH} \label{RH-B-M}
	In Case B, $M(y,t;z)$ is the solution of the following Riemann--Hilbert problem:
	\begin{enumerate}[label=(\textbf{$M$\arabic*}), leftmargin=*]
		\item \textbf{Analyticity:} $M(z)$ is analytic in $\mathbb{C}\setminus\Gamma$.
		
		\item \textbf{Jump condition:} $M(z)$ satisfies the jump relation $M_-(z)=M_+(z)J_B(z)$, $z\in\Gamma$, where the jump matrix is given by
		\begin{equation} \label{eq:M_jump2}
			J_B(z)=\left\{
           \begin{array}{ll}
			\mathrm{i}\sigma_1
             ,
			&
			z\in\Gamma_j=(E_j,\widehat{E}_j),
			\quad j=0,1,2,\ldots,n, \v\\
 			e^{-\mathrm{i}(y\widehat{C}_j^f+t\widehat{C}_j^g+\widehat{\phi}_j)\sigma_3}
       ,
			 &
			z\in\widehat{\Gamma}_j=(\widehat{E}_{j-1},E_j),\, j=1,2,\ldots,n.
          \end{array} \right.
          \end{equation}
		
		\item \textbf{Normalization at infinity:} As $z\to\infty$, $M(z)=I+\mathcal{O}(z^{-1})$.
		
		\item \textbf{Local behavior:} $M(z)$ is bounded on $\Gamma$ except at the branch points $E_j$ and $\widehat{E}_j$, where it has at most inverse fourth-root singularities.
	\end{enumerate}
\end{RH}

In Case B, we choose another homology basis. The $\mathbf{a}_l$-cycles start from $\widehat{E}_{l-1}$, go to $E_l$ on the upper sheet, and then return to the starting points on the lower sheet. The $\mathbf{b}_l$-cycles are ovals on the upper sheet enclosing the intervals $(E_0,\widehat{E}_{l-1})$, $l=1,2,\ldots,n$; see Figs.~\ref{fig:WKI5} and~\ref{fig:WKI6}.

\begin{figure}[!t]
	\centering
	\includegraphics[scale=0.35]{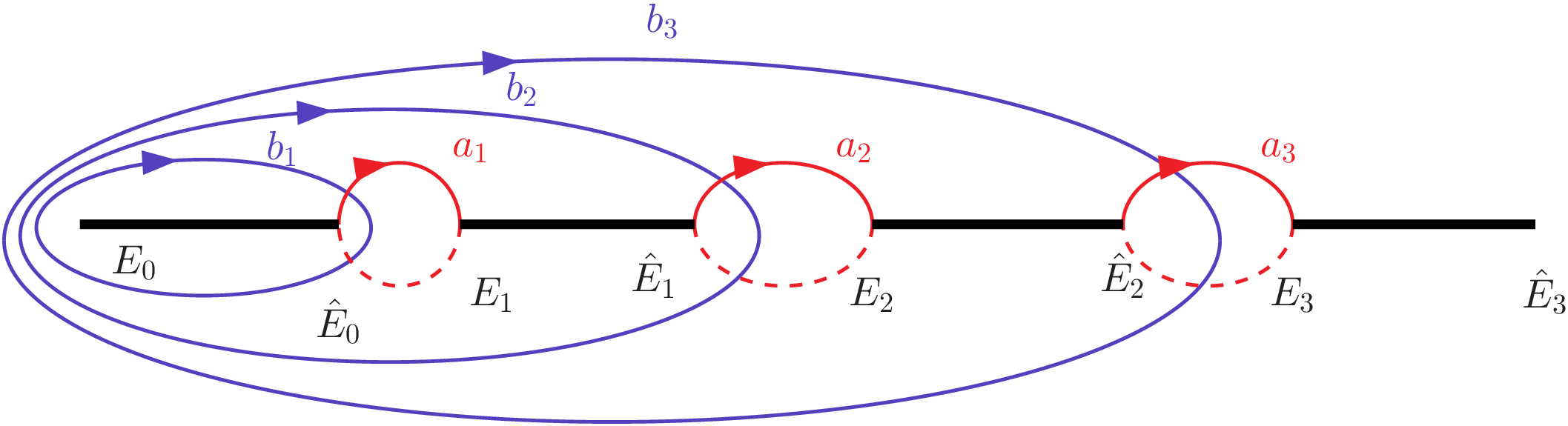}
	\caption{Case B: cuts on the real axis for the defocusing WKI equation.}
	\label{fig:WKI5}
\end{figure}

\begin{figure}[!t]
	\centering
	\includegraphics[scale=0.35]{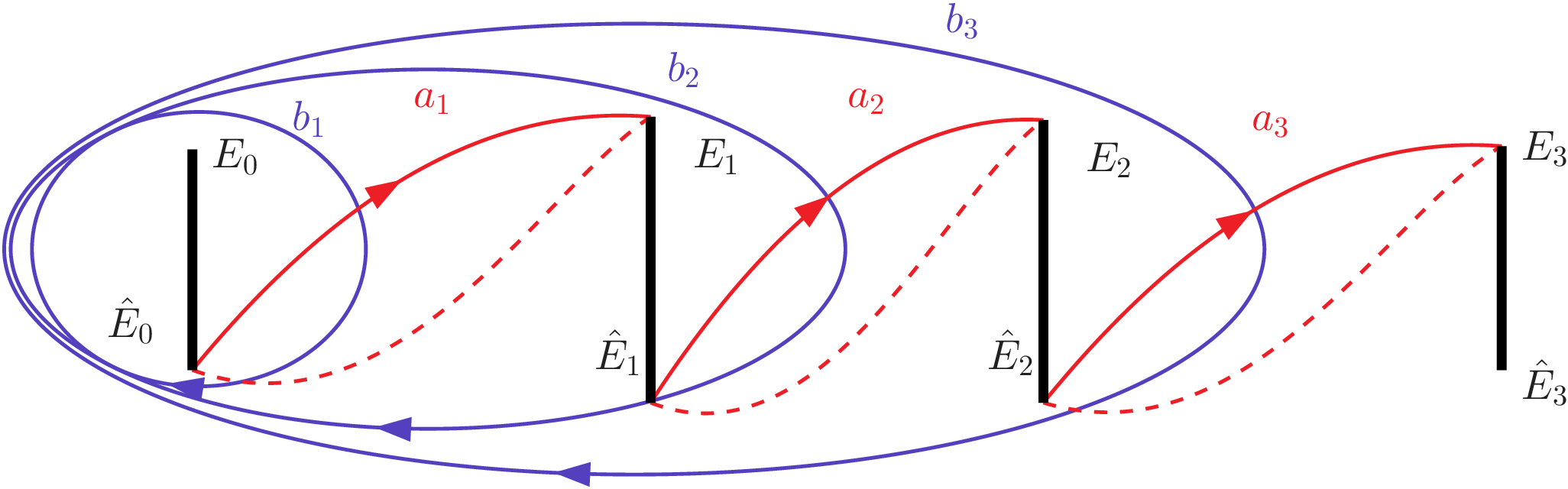}
	\caption{Case B: vertical cuts for the focusing WKI equation.}
	\label{fig:WKI6}
\end{figure}

The Abelian integrals $\omega_j(z)$,
\begin{equation}
	\omega_j(z)=\int_{E_0}^{z}\psi_j(s)\,ds,
	\qquad j=1,2,\ldots,n,
\end{equation}
are normalized by the relations
\begin{equation}
	\int_{\mathbf{a}_l}d\omega_j(\mathcal{P})
	=
	2\int_{\widehat{E}_{l-1}}^{E_l}\psi_j(z)\,dz
	=
	\delta_{jl},
	\qquad j,l=1,2,\ldots,n.
\end{equation}

The $\mathbf{b}$-period matrix is then defined by
\begin{equation}
	B_{jl}
	=
	\int_{\mathbf{b}_l}d\omega_j(\mathcal{P})
	=
	2\sum_{k=0}^{l-1}\int_{E_k}^{\widehat{E}_k}\psi_j(z_+)\,dz.
\end{equation}
and the Abel map $\mathbf{A}:\mathcal{X}\to Jac\{\mathcal{X}\}$ is defined by
\begin{equation}
	A_j(\mathcal{P})
	=
	\int_{\mathcal{P}_0}^{\mathcal{P}}\omega_j(\mathcal{Q}),
	\qquad j=1,2,\ldots,n,
\end{equation}
where the base point $\mathcal{P}_0$ is fixed differently, namely by $\pi(\mathcal{P}_0)=E_0$.

Similarly, the Abel maps $\mathbf{A}(z)$ on the upper sheet of $\mathcal{X}$ have the following properties:
\begin{align}
	\mathbf{A}_-(z)+\mathbf{A}_+(z)
	&=0 \pmod{\mathbb{Z}^n},
	&
	z&\in\Gamma_j=(E_j,\widehat{E}_j),
	&
	j&=1,\ldots,n,\\
	\mathbf{A}_-(z)-\mathbf{A}_+(z)
	&=-B\mathbf{e}_j,
	&
	z&\in\widehat{\Gamma}_j=(\widehat{E}_{j-1},E_j),
	&
	j&=1,\ldots,n,
\end{align}
where $\mathbf{e}_j=(0,\ldots,0,1,0,\ldots,0)$ is the $j$-th basis vector in $\mathbb{C}^n$. Define, for $s=1,2$,
\begin{equation}
	F_s(z)
	=
	\frac{\theta(\mathbf{A}(z)+\mathbf{c}+\mathbf{d}_s)}
	{\theta(\mathbf{A}(z)+\mathbf{d}_s)},
	\qquad
	H_s(z)
	=
	\frac{\theta(-\mathbf{A}(z)+\mathbf{c}+\mathbf{d}_s)}
	{\theta(-\mathbf{A}(z)+\mathbf{d}_s)},
	\qquad
	z\in\mathbb{C}\setminus\Gamma,
	\tag{3.16}
\end{equation}
where $\mathbf{c}=(c_0,c_1,\ldots,c_j,\ldots,c_n),\, j=0,1,\ldots,n,$ and
$\mathbf{d}_1=-\mathbf{w}(\mathcal{D}),	\quad	\mathbf{d}_2=\mathbf{w}(\mathcal{D}).$

From the jump relations for the Abel map, we obtain
\bee
 \begin{array}{lll}
 	F_{s-}(z)=H_{s+}(z),&
	H_{s-}(z)=F_{s+}(z),& z\in\Gamma_j,\,\, j=0,\ldots,n, \v\\
	F_{s-}(z)=e^{2\pi\mathrm{i}c_j}F_{s+}(z),	&
	H_{s-}(z)=e^{-2\pi\mathrm{i}c_j}H_{s+}(z),	& z\in\widehat{\Gamma}_j,,\,\, j=1,\ldots,n.
 \end{array}
 \ene
Next, define the $2\times2$ matrix-valued function
\begin{equation}
	N(z)=
	\begin{pmatrix}
		F_1(z) & H_1(z)\\
		F_2(z) & H_2(z)
	\end{pmatrix}.
\end{equation}
Then $N$ has the jump relation: $N_-(z)=N_+(z)J_{BN}(z)$,  where the jump matrix $J_{BN}(z)$ is defined as
\bee
 J_{BN}(z)=\left\{\begin{array}{ll}
	 \mathrm{i}\sigma_1
, &	z\in(E_j,\widehat{E}_j),\,\, j=0,1,\ldots,n, \v\\
	e^{-\mathrm{i}(y\widehat{C}_j^f+t\widehat{C}_j^g+\widehat{\phi}_j)\sigma_3}
, &	z\in(\widehat{E}_{j-1},E_j),\,\, j=1,2,\ldots,n.
 \end{array} \right.
 \ene

Let
\[
c_j:=-\frac{y\widehat{C}_j^f+t\widehat{C}_j^g+\widehat{\phi}_j}{2\pi},
\qquad j=1,\ldots,n.
\]
Proceeding as in Case A, we define $M$ by
\begin{equation}
	\begin{aligned}
		M(z)=
		\frac{1}{2}
		\begin{pmatrix}
			F_1^{-1}(\infty) & 0\\
			0 & H_2^{-1}(\infty)
		\end{pmatrix}
		\begin{pmatrix}
			(\varkappa(z)+\varkappa^{-1}(z))F_1(z)
			&
			(\varkappa(z)-\varkappa^{-1}(z))H_1(z)
			\\[1ex]
			(\varkappa(z)-\varkappa^{-1}(z))F_2(z)
			&
			(\varkappa(z)+\varkappa^{-1}(z))H_2(z)
		\end{pmatrix}.
	\end{aligned}
\end{equation}
In this case, the jumps of $M(z)$ take the form $M_-(z)=M_+(z)J_{BM}(z)$,  where the jump matrix $J_{BM}(z)=J_{BN}(z)$
as required in Case B. Thus, the function $M(y,t,z)$ is the solution of the RH problem \ref{RH-B-M}. The finite-genus algebro-geometric solution $q^{(AG)}(x,t)$ of the WKI equation \eqref{WKI} can be obtained similarly to Case A.

\section{The Jost functions and signature table}\label{JostFuc}

Next, we mainly consider the long-time asymptotics of the solution $q$ to the Cauchy problem for the defocusing WKI equation (\ref{WKI}) with the algebro-geometric background \eqref{background} under the perturbation condition \eqref{qRestrict} in Case A, where $q^{(AG)}(x,t)=q^{(AG)}(x(y,t),t)$ is given by (\ref{AG-A}) with $x=x(y,t)$ given by (\ref{AG-A2}).

Let $\mu^{\pm}$ be the solutions of the following Volterra integral equations:
\begin{equation} \label{mu-g}
	\begin{aligned}
		\mu^\pm(y,t;z) & =e^{i(f_0y+g_0t)\widehat{\sigma}_3}M(y,t;z) \\
		&\qquad +\int_{\pm\infty}^x\Upsilon(s,t,y;z)Q(q-q^{(AG)})(s,t)\mu^\pm(s,t;z)e^{-i(f(z)-f_0)(s-y)\sigma_3}\mathrm{d}s,
	\end{aligned}
\end{equation}
where [cf. (\ref{psi-AG})]
\begin{equation}  \no
	\begin{aligned}
		\Upsilon(s,t,y;z)&=\Psi^{(AG)}(y,t;z)\Psi^{(AG)}(s,t;z)^{-1} \\
		& =e^{i(f_0y+g_0t)\sigma_3}M(y,t;z)e^{if(z)(s-y)\sigma_3}M(s,t;z)^{-1}e^{-i(f_0s+g_0t)\sigma_3},
	\end{aligned}
\end{equation}
and $\mu^\pm(y,t;z)=\left(\mu_1^\pm(y,t;z),\mu_2^\pm(y,t;z)\right)$. Then one can show that the column vectors $\mu_1^-(y,t;z)$ and $\mu_2^+(y,t;z)$ exist uniquely for $z\in\overline{\mathbb{C}^+}\setminus\{E_j,\widehat{E}_j\}_{j=0}^n$, are analytic in $\mathbb{C}^+$, and admit continuous extensions to $z\in\overline{\mathbb{C}^+}\setminus\{E_j,\widehat{E}_j\}_{j=0}^n$. Similarly, the column vectors $\mu_1^+(y,t;z)$ and $\mu_2^-(y,t;z)$ exist uniquely for $z\in\overline{\mathbb{C}^-}\setminus\{E_j,\widehat{E}_j\}_{j=0}^n$, are analytic in $\mathbb{C}^-$, and admit continuous extensions to $z\in\overline{\mathbb{C}^-}\setminus\{E_j,\widehat{E}_j\}_{j=0}^n$.

Next, we define
\begin{equation}
	\Psi^\pm(y,t;z)=G(x(y,t),t)e^{\eta\widehat{\sigma}_3}\mu^\pm(y,t;z),
\end{equation}
which are two linearly independent solutions of the Lax pair \eqref{Lax}, where $G(x(y,t),t)$ is defined by (\ref{G}). Thus, there exists a scattering matrix $S(z)=(S_{ij}(z))_{2\times 2}$, independent of $(y,t)$, such that
\begin{equation}\label{psi-S}
	\Psi^+(y,t;z)=\Psi^-(y,t;z)S(z), \quad z\in\mathbb{R}\setminus(\cup_{j=0}^n[E_j,\widehat{E}_j]),
\end{equation}
and $\det S(z)=1$ for $z\in\mathbb{R}\setminus(\cup_{j=0}^n[E_j,\widehat{E}_j])$. It follows from (\ref{psi-S}) and $\Psi^{\pm}(y,t;z)
=(\Psi_1^{\pm}(y,t;z), \Psi_2^{\pm}(y,t;z))$ that
\bee \label{rel}
\begin{array}{l}
 S_{11}(z)=|\Psi_1^+(y,t;z), \Psi_2^-(y,t;z)|, \v\\
 S_{22}(z)=|\Psi_1^-(y,t;z), \Psi_2^+(y,t;z)|, \v\\
  S_{12}(z)=|\Psi_2^+(y,t;z), \Psi_2^-(y,t;z)|, \v\\
   S_{21}(z)=|\Psi_1^-(y,t;z), \Psi_1^+(y,t;z)|.
\end{array} \ene
Similarly, the scattering data $S_{11}(z)$ and $S_{22}(z)$ are analytic in $\mathbb{C}^-$ and $\mathbb{C}^+$, respectively, and admit continuous extensions to $\overline{\mathbb{C}^-}\setminus\{E_j,\widehat{E}_j\}_{j=0}^n$ and $\overline{\mathbb{C}^+}\setminus\{E_j,\widehat{E}_j\}_{j=0}^n$, respectively. The remaining entries generally do not admit analytic continuations to $\mathbb{C}^{\pm}$.

Moreover, we have the following symmetry relations:
\begin{equation}
	S_{11}(z)=\overline{S_{22}(\overline{z})},\quad S_{12}(z)=\overline{S_{21}(\overline{z})},\quad z\in\mathbb{C}\setminus(\cup_{j=0}^n[E_j,\widehat{E}_j]).
\end{equation}
We define the reflection coefficients in terms of the scattering matrix $S(z)$ by
\begin{equation}
	\label{r}
	r_1(z)=\frac{S_{12}(z)}{S_{11}(z)},\quad r_2(z)=\frac{S_{21}(z)}{S_{11}(z)}, \quad z\in\mathbb{R}\setminus(\cup_{j=0}^n[E_j,\widehat{E}_j]).
\end{equation}
Next, we define $\theta$ by
\begin{equation}
	\label{theta}
	\theta(z)=\theta(z;\xi)=-(f(z)-f_0)\xi-(g(z)-g_0),\quad\xi=\frac{y}{t}.
\end{equation}

\textbf{The signature table for $\operatorname{Im}\theta$.} Following \cite{Fan2026}, for the phase $\theta(z;\xi)$ defined in \eqref{theta}, a point $z=z(\xi)\in\mathbb{R}$ is called a saddle point of $\theta$ if it satisfies one of the following cases:
\begin{equation}\label{z}
	\begin{array}{ll}
	 {\rm Case\,\, i)}\quad \dfrac{\partial\theta(z;\xi)}{\partial z}=0,&z\in\mathbb{R}\backslash(\cup_{j=0}^n(E_j,\widehat{E}_j)); \v\\
		{\rm Case\,\, ii)} \quad\operatorname{Im}\theta(z;\xi)=0,&z\in\cup_{j=0}^n(E_j,\widehat{E}_j).
	\end{array}
\end{equation}

It follows from \eqref{theta} and (\ref{fg}) that
\begin{equation} \label{theta}
	\dfrac{\partial\theta(z;\xi)}{\partial z}=-w^{-1}(z)\left[\xi\prod_{j=0}^n(z-z_j^f)+4\prod_{j=0}^{n+1}(z-z_j^g)\right],
\end{equation}
where $w(z)$ is defined by (\ref{RSurface}), so that the zeros of $d\theta(z;\xi)/dz$ are precisely the roots of
\begin{equation} \label{T-f}
	T(z;\xi)=\xi\prod_{j=0}^n(z-z_j^f)+4\prod_{j=0}^{n+1}(z-z_j^g)=0.
\end{equation}

Since $f(E_j)=f(\widehat{E}_j)$ and $g(E_j)=g(\widehat{E}_j)$ for $j=0,1,\ldots,n$, thus we have $\theta(E_j;\xi)=\theta(\widehat{E}_j;\xi)$ from \eqref{theta}. This implies that $T(z;\xi)$ has at least one zero in each interval $(E_j,\widehat{E}_j)$. Since $T(z;\xi)$ is a polynomial of degree $n+2$, there remains one additional zero. If the zero lies outside the union of these intervals, namely in $\mathbb{R}\setminus(\cup_{j=0}^n(E_j,\widehat{E}_j))$, then it corresponds to Case $i)$. On the other hand, if it lies inside one of the intervals $(E_j,\widehat{E}_j)$, then it corresponds to Case $ii)$. This gives rise to one saddle point, denoted by $z_1$.

For the Case $i)$, let $z_1=z_1(\xi)\in\mathbb{R}\setminus(\cup_{i=0}^n(E_i,\widehat{E}_i))$. In view of the identities
\begin{equation}
	T(E_j;\xi_j)=T(\widehat{E}_j;\widehat{\xi}_j)=0,\quad j=0,1,\ldots,n,
\end{equation}
where $\xi_j$ and $\widehat{\xi}_j$ are given by \eqref{xi}.

By the implicit function theorem, taking the interval $(\widehat{E}_j,E_{j+1})$ as an example, from \eqref{T-f}, we obtain
\begin{equation}
	\frac{dz_1(\xi)}{d\xi}=-\frac{\partial_\xi T(z_1;\xi)}{\partial_{z_1}T(z_1;\xi)}=-\frac{\prod_{j=0}^n(z_1-z_j^f)}{\partial_{z_1}T(z_1;\xi)}<0.
\end{equation}

Indeed, if $n-j$ is even, then both $\partial_{z_1}T(z_1;\xi)$ and $\prod_{k=0}^n(z_1-z_k^f)$ are positive; if $n-j$ is odd, then both quantities are negative. Hence $\partial_\xi z_1(\xi)<0$. Therefore, $z_1(\xi)$ is a monotonically decreasing function of $\xi$ on each interval of $\mathbb{R}\setminus(\cup_{i=0}^n(E_i,\widehat{E}_i))$.

Similarly, in the Case $ii)$, $z_1(\xi)$ is a monotonically decreasing function of $\xi$ on $\cup_{j=0}^n(E_j,\widehat{E}_j)$. Moreover, $z_1(\xi)$ tends to $E_j$ and $\widehat{E}_j$ as $\xi$ tends to $\xi_j$ and $\widehat{\xi}_j$, respectively. In particular, we have $z_1(\widehat{\xi}_j)=\widehat{E}_j$ and $z_1(\xi_j)=E_j$.

\section{Asymptotic analysis in transition region I}\label{Region-I}
In this section, we carry out the asymptotic analysis of the Riemann--Hilbert problem for $\mathcal{M}$ in order to derive the asymptotics of $q$ in transition region I. According to Definition~ \ref{def1}, we may restrict ourselves to
$-C\le(\xi-\widehat{\xi}_{j_1})t^{2/3}\leq0$
for some fixed $j_1\in\{0,\ldots,n\}$, since the analysis in the complementary half-region is analogous. For large $t$, one has $\xi_{j_1+1}<\xi<\widehat{\xi}_{j_1}$, and hence the saddle point satisfies $z_1\in[\widehat{E}_{j_1},E_{j_1+1})$. Here $\mathcal{M}$ is defined by
\begin{equation}
	\begin{aligned}
		\mathcal{M}(z)=\mathcal{M}(y,t;z) & =
		\begin{cases}
			\left(\dfrac{\mu_1^-(y,t;z)}{S_{22}(z)},\mu_2^+(y,t;z)\right),\quad & z\in\mathbb{C}^+, \\[1.2em]
			\left(\mu_1^+(y,t;z),\dfrac{\mu_2^-(y,t;z)}{S_{11}(z)}\right),\quad & z\in\mathbb{C}^-.
		\end{cases}
	\end{aligned}
\end{equation}

\begin{RH}
	$\mathcal{M}(z)$ satisfies the following Riemann--Hilbert problem:
	\begin{enumerate}[label=(\textbf{$\mathcal{M}$\arabic*}), leftmargin=*]
	\item \textbf{Normalization at infinity:} $\mathcal{M}(z)=I+\mathcal{O}(z^{-1})$ as $z\to\infty$.

	\item \textbf{Analyticity:} $\mathcal{M}(z)$ is analytic in $\mathbb{C}\setminus\mathbb{R}$.
		
		\item \textbf{Jump condition:} $\mathcal{M}(z)$ satisfies the jump relation $\mathcal{M}_+(z)=\mathcal{M}_-(z)J_{\mathcal{M}}(z)$ for $z\in\mathbb{R}$, where the jump matrix is given by
		\begin{equation} \label{eq:mathcal{M}_jump}
			J_{\mathcal{M}}(z)=
			\begin{cases}
				e^{i(f_0y+g_0t-(B_j^fy+B_j^gt+\phi_j)/2)\widehat{\sigma}_3}
				\begin{pmatrix}
					0 & -i \\
					-i & 0
				\end{pmatrix},&z\in(E_j,\widehat{E}_j),\,\, j=0,\ldots,n, \\
				\\
				\begin{pmatrix}
					1-|r_1(z)|^2 & r_1(z)e^{2it\theta(z)} \v\\
					-\overline{r_1(z)}e^{-2it\theta(z)} & 1
				\end{pmatrix},& z\in\mathbb{R}\setminus(\cup_{j=0}^n[E_j,\widehat{E}_j]).
			\end{cases}
		\end{equation}

		\item \textbf{Local behavior:} For $p\in\{E_j,\widehat{E}_j\}_{j=0}^n$, $\mathcal{M}(z)$ has the following local behavior:
		\begin{equation}
			\begin{aligned}
				\mathcal{M}(z)=\left\{
          \begin{array}{ll}
          (\mathcal{O}((z-p)^{1/4}),\mathcal{O}((z-p)^{-1/4})),& z\to p\mathrm{~from~}\mathbb{C}^+, \v\\
			(\mathcal{O}((z-p)^{-1/4}),\mathcal{O}((z-p)^{1/4})), & z\to p\mathrm{~from~}\mathbb{C}^-.
          \end{array}\right.
			\end{aligned}
		\end{equation}
	\end{enumerate}
\end{RH}

We next reconstruct $q(y,t)$ and $x(y,t)$. For this purpose, we need the asymptotic behavior of $\mathcal{M}(y,t;z)$ as $z\to0$, namely
\begin{equation}
	\begin{aligned}
		&\mathcal{M}
		=
		e^{\eta(x(y,t),t)\sigma_{3}}
		G^{-1}(x(y,t),t)
		 \left\{
		I
		+z\int_{-\infty}^{x}Q(s,t)+\mathrm{i}\sigma_{3}\bigl(p(s,t)-1\bigr)\,ds
				+O(z^2)
		\right\}.
	\end{aligned}
\end{equation}

This yields the reconstruction formulae
\begin{equation}\label{q2}
	\begin{aligned}
		q(x,t)=
		\lim_{z\to0}
		\frac{
			\partial_y\left(\mathcal{M}^{-1}(y,t;0)\mathcal{M}(y,t;z)\right)_{12}
		}{z}\left(1+\mathrm{i}\lim_{\substack{z\to 0}}
		\frac{\partial_y\left(\mathcal{M}^{-1}(y,t;0)\mathcal{M}(y,t;z)\right)_{11}}{z}\right)^{-1},
	\end{aligned}
\end{equation}
and
\begin{equation}\label{x2}
	\begin{aligned}
		x(y,t)=y+\mathrm{i}\lim_{z\to0}
		\frac{
			\left(\mathcal{M}^{-1}(y,t;0)\mathcal{M}(y,t;z)\right)_{11}-1
		}{z}.
	\end{aligned}
\end{equation}

Let $z_{j_1}\in(E_{j_1},\widehat{E}_{j_1})$ be fixed. We set
\begin{equation}\no
	\begin{aligned}
		&\Omega_1=\Omega_1(\xi)=\{z\in\mathbb{C}:0\leq\arg(z-z_1)\leq\varphi_1\},\,\,\Sigma_1=\Sigma_1(\xi)=z_1+e^{i\varphi_1}\mathbb{R}^+,\\
		&\Omega_2=\Omega_2(\xi)=\{z\in\mathbb{C}:\pi-\varphi_1\leq\arg(z-z_{j_1})\leq\pi\},\,\, \Sigma_2=\Sigma_2(\xi)=z_{j_1}+e^{i(\pi-\varphi_1)}\mathbb{R}^+,
	\end{aligned}
\end{equation}
where $\varphi_{1}$ is chosen such that
\begin{equation}\no
	\operatorname{Im}\theta(z)
	\begin{cases}
		<0,\quad z\in\Sigma_1\setminus\{z_1\}, \\
		>0,\quad z\in\Sigma_2\setminus\{z_{j_1}\},
	\end{cases}
\end{equation}
and $U^*$ denotes the complex conjugate of a region $U$.

Noting that the jump matrix $J_{\mathcal{M}}(z)$ admits the following factorizations:
\begin{equation}
	\begin{aligned}
		J_{\mathcal{M}}(z) & =
		\begin{pmatrix}
			1 & 0 \\
			-\frac{\overline{r_1(z)}e^{-2it\theta(z)}}{1-|r_1(z)|^2} & 1
		\end{pmatrix}(1-|r_1(z)|^2)^{\sigma_3}
		\begin{pmatrix}
			1 & \frac{r_1(z)e^{2it\theta(z)}}{1-|r_1(z)|^2} \\
			0 & 1
		\end{pmatrix} \\
		& =
		\begin{pmatrix}
			1 & r_1(z)e^{2it\theta(z)} \\
			0 & 1
		\end{pmatrix}
		\begin{pmatrix}
			1 & 0 \\
			-\overline{r_1(z)}e^{-2it\theta(z)} & 1
		\end{pmatrix},\quad z\in\mathbb{R}\setminus(\cup_{j=0}^n[E_j,\widehat{E}_j]),
	\end{aligned}
\end{equation}
and
\begin{equation}
	\begin{gathered}
		J_{\mathcal{M}}\left(z\right)=
		\begin{pmatrix}
			1 & r_1(z)e^{2it\theta_-(z)} \\
			0 & 1
		\end{pmatrix}
		\begin{pmatrix}
			0 & -ie^{i(t(\theta_+(z)+\theta_-(z))-\phi_j)} \\
			-ie^{-i(t(\theta_+(z)+\theta_-(z))-\phi_j)} & 0
		\end{pmatrix} \\
		\times
		\begin{pmatrix}
			1 & 0 \\
			-r_1(z)e^{-2i(t\theta_+(z)-\phi_j)} & 1
		\end{pmatrix} \\
		\quad\quad\quad~~=
		\begin{pmatrix}
			1 & 0 \\
			-\frac{r_1(z)e^{-2i(t\theta_-(z)-\phi_j)}}{1-(r_1(z))^2e^{2i\phi_j}} & 1
		\end{pmatrix}
		\begin{pmatrix}
			0 & -ie^{i(t(\theta_+(z)+\theta_-(z))-\phi_j)} \\
			-ie^{-i(t(\theta_+(z)+\theta_-(z))-\phi_j)} & 0
		\end{pmatrix} \\
		\times
		\begin{pmatrix}
			1 & \frac{r_1(z)e^{2it\theta_+(z)}}{1-(r_1(z))^2e^{2i\phi_j}} \\
			0 & 1
		\end{pmatrix},\quad z\in\cup_{j=0}^n(E_j,\widehat{E}_j).
	\end{gathered}
\end{equation}

Based on these factorizations, we define the matrix-valued function $\mathcal{M}^{(1)}(z)$ by
\begin{equation}
	\mathcal{M}^{(1)}(z)=e^{\delta(\infty)\sigma_3}\mathcal{M}(z)\mathcal{G}(z)e^{-\delta(z)\sigma_3},
\end{equation}
where
\begin{equation}
	\mathcal{G}(z)=
	\begin{cases}
		\begin{pmatrix}
			1 & 0 \\
			\overline{r_1(z)}e^{-2it\theta(z)} & 1
		\end{pmatrix},\quad z\in\Omega_1, \v\\
		\begin{pmatrix}
			1 & r_1(z)e^{2it\theta(z)} \\
			0 & 1
		\end{pmatrix},\quad z\in\Omega_1^*, \v\\
		\begin{pmatrix}
			1 & -\frac{r_1(z)e^{2it\theta(z)}}{1-|r_1(z)|^2} \\
			0 & 1
		\end{pmatrix},\quad z\in\Omega_2, \v\\
		\begin{pmatrix}
			1 & 0 \\
			-\frac{\overline{r_1(z)}e^{-2it\theta(z)}}{1-|r_1(z)|^2} & 1
		\end{pmatrix},\quad z\in\Omega_2^*, \v\\
		I, \quad \text{elsewhere,}
	\end{cases}
\end{equation}
and the scalar function $\delta(z)$ is defined by
\begin{equation}\label{delta1}
	\delta(z)=\frac{w(z)}{2\pi i}\left[\sum_{j=1}^n\delta_j\int_{E_j}^{\widehat{E}_j}\frac{i\mathrm{~d}s}{w_+(s)(s-z)}-\int_{(-\infty,E_{j_1})\setminus(\cup_{j=0}^{j_1-1}(E_j,\widehat{E}_j))}\frac{\log(1-|r_1(s)|^2)\mathrm{d}s}{w(s)(s-z)}\right],
\end{equation}
where the logarithm is taken on the principal branch, and the constants $\delta_j$, $j=1,\ldots,n$, are determined by the linear system
\begin{equation}\label{delta_1}
	\int_{(-\infty,E_{j_1})\setminus(\cup_{j=0}^{j_1-1}(E_j,\widehat{E}_j))}\frac{\log(1-|r_1(s)|^2)s^k\mathrm{d}s}{w(s)}=\sum_{j=1}^n\delta_j\int_{E_j}^{\widehat{E}_j}\frac{is^k\mathrm{d}s}{w_+(s)},\quad k=0,\ldots,n-1.
\end{equation}
Here it is understood that $(E_{j_1-1},\widehat{E}_{j_1-1})=\emptyset$ if $j_1=0$. In addition, one can prove that each $\delta_j$ is real.

\begin{figure}[!t]
	\centering
	\includegraphics[scale=0.3]{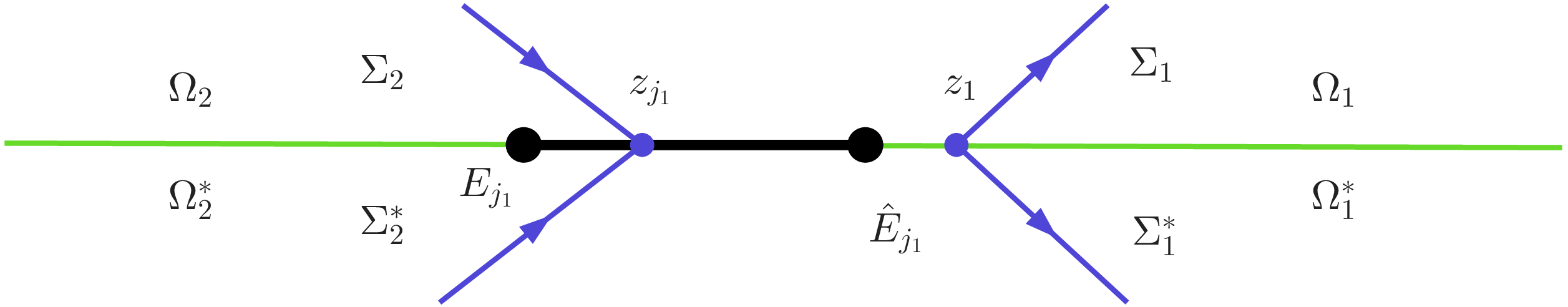}
	\caption{The contours $\Sigma^{(1)}$.}
	\label{fig:WKI9}
\end{figure}

\begin{RH}
	$\delta(z)$ satisfies the following Riemann--Hilbert problem:
	\begin{enumerate}[label=(\textbf{$\delta$\arabic*}), leftmargin=*]
		\item \textbf{Analyticity:} $\delta(z)$ is analytic in $\mathbb{C}\setminus(\cup_{j=0}^n[E_j,\widehat{E}_j]\cup(-\infty,\widehat{E}_{j_1}])$.
		
		\item \textbf{Jump condition:} $\delta(z)$ satisfies the jump relation
		\begin{equation} \label{eq:delta_jump}
			\begin{aligned}
				\delta_-(z)=\left\{\begin{array}{ll}
         \delta_+(z)+\log(1-|r_1(z)|^2),\quad & z\in(-\infty,E_{j_1})\setminus(\cup_{j=0}^{j_1-1}[E_j,\widehat{E}_j]), \\
				-\delta_+(z)+i\delta_j,\quad &z\in(E_j,\widehat{E}_j),\quad j=1,\ldots,n.
			\end{array}\right.
\end{aligned}
		\end{equation}
		
		\item \textbf{Normalization:} 			$\delta(z)=\delta(\infty)+\frac{\delta^{(1)}}{z}+\mathcal{O}(z^{-2}),\quad z\to\infty,$
				where
		\begin{equation}\label{delta_infty_1}
			\begin{aligned}
				\delta(\infty) =\frac{1}{2\pi i}\left[\int_{(-\infty,E_{j_1})\setminus(\cup_{j=0}^{j_1-1}(E_j,\widehat{E}_j))}\frac{\log(1-|r_1(s)|^2)s^n \mathrm{d}s}{w(s)} -\sum_{j=1}^n\delta_j\int_{E_j}^{\widehat{E}_j}\frac{is^n\mathrm{d}s}{w_+(s)}\right],
			\end{aligned}
		\end{equation}
				\begin{equation}
			\begin{aligned}
				\delta^{(1)} =&\delta(\infty)\sum_{j=0}^n(E_j+\widehat{E}_j) \\
				& -\frac{1}{2\pi i}\left[\int_{(-\infty,E_{j_1})\setminus(\cup_{j=1}^{j_1-1}(E_j,\widehat{E}_j))}\frac{\log(1-|r_1(s)|^2)s^{n+1} \mathrm{d}s}{w(s)}-\sum_{j=1}^n\delta_j\int_{E_j}^{\widehat{E}_j}\frac{is^{n+1}\mathrm{d}s}{w_+(s)}\right].
			\end{aligned}
		\end{equation}
		
		\item \textbf{Local behavior:} As $z\to p\in\{E_j\}_{j=0}^{j_1}\cup\{\widehat{E}_j\}_{j=0}^{j_1-1}$ from $\mathbb{C}^+$, we have
		\begin{equation}
			e^{\delta(z)}=\mathcal{O}((z-p)^{1/2}),
		\end{equation}
		and
		\begin{equation}
			\delta(z)=\frac{i}{2}\delta_{j_1}+\mathcal{O}((z-\widehat{E}_{j_1})^{1/2}),\quad z\to\widehat{E}_{j_1} \text{~from~} \mathbb{C}\setminus(-\infty,\widehat{E}_{j_1}).
		\end{equation}
		
		\item \textbf{Symmetry relation:} $\delta(z)$ satisfies the symmetry relation $\delta(z)=-\overline{\delta(\bar{z})}$.
	\end{enumerate}
\end{RH}

\begin{RH}
	$\mathcal{M}^{(1)}(z)$ satisfies the following Riemann-Hilbert problem:
	\begin{enumerate}[label=(\textbf{$\mathcal{M}^{(1)}$\arabic*}), leftmargin=*]
		\item \textbf{Analyticity:} $\mathcal{M}^{(1)}(z)$ is analytic in $\mathbb{C}\setminus\Sigma^{(1)}$, where	
			$\Sigma^{(1)}=(-\infty,z_1]\cup(\cup_{j=0}^n[E_j,\widehat{E}_j])\cup\Sigma_1\cup\Sigma_1^*\cup\Sigma_2\cup\Sigma_2^*$ (see Fig.~\ref{fig:WKI9}).
		
		\item \textbf{Jump condition:} $\mathcal{M}^{(1)}(z)$ satisfies the jump relation $\mathcal{M}^{(1)}_+(z)=\mathcal{M}^{(1)}_-(z)J_{\mathcal{M}}^{(1)}(z)$ for $z\in\Sigma^{(1)}$, where the jump matrix is given by
		\begin{equation} \label{eq:M1_jump}
			J_{\mathcal{M}}^{(1)}(z)=\begin{cases}
				-ie^{i(f_0y+g_0t-(B_j^fy+B_j^gt+\phi_j-\delta_j)/2)\widehat{\sigma}_3}\sigma_1,&z\in(E_j,\widehat{E}_j),\\[2.5ex]
				\begin{pmatrix}1&0\\-\overline{r_1(z)}e^{-2it\theta(z)-2\delta(z)}&1\end{pmatrix},&z\in\Sigma_1,\\[2.5ex]
				\begin{pmatrix}1&r_1(z)e^{2it\theta(z)+2\delta(z)}\\0&1\end{pmatrix},&z\in\Sigma_1^*,\\[2.5ex]
				\begin{pmatrix}1&\dfrac{r_1(z)e^{2it\theta(z)+2\delta(z)}}{1-|r_1(z)|^2}\\0&1\end{pmatrix},&z\in\Sigma_2,\\[2.5ex]
				\begin{pmatrix}1&0\\ \dfrac{-\overline{r_1(z)}e^{-2it\theta(z)-2\delta(z)}}{1-|r_1(z)|^2}&1\end{pmatrix},&z\in\Sigma_2^*,\\[2.5ex]
				\begin{pmatrix}1-|r_1(z)|^2&r_1(z)e^{2it\theta(z)+2\delta(z)}\\ -\overline{r_1(z)}e^{-2it\theta(z)-2\delta(z)}&1\end{pmatrix},&z\in(\widehat{E}_{j_1},z_1),
			\end{cases}
		\end{equation}
		with $\phi_0=\delta_0=0$ and $j=0,\dots,n$.
		
		\item \textbf{Normalization at infinity:} As $z\to\infty$, we have $\mathcal{M}^{(1)}(z)=I+\mathcal{O}(z^{-1})$.
		
		\item \textbf{Local behavior:} For $p\in\{E_j,\widehat{E}_j\}_{j=0}^n\setminus\{\widehat{E}_{j_1}\}$, one has
		$
			\mathcal{M}^{(1)}(z)=\mathcal{O}((z-p)^{-1/4}),\quad z\to p.$
	\end{enumerate}
\end{RH}

\subsection{Global parametrix}

We now construct the global parametrix, which captures the leading asymptotic behavior of the Riemann--Hilbert problem. We seek a global parametrix $\mathcal{M}^{(glo)}$ solving the model Riemann--Hilbert problem obtained by neglecting the exponentially small jumps on the lens contours. More precisely, we consider the following problem.

\begin{RH}
	$\mathcal{M}^{(glo)}(z)$ satisfies the following Riemann--Hilbert problem:
	\begin{enumerate}[label=(\textbf{$\mathcal{M}^{(glo)}$\arabic*}), leftmargin=*]
		\item \textbf{Analyticity:} $\mathcal{M}^{(glo)}(z)$ is analytic in $\mathbb{C}\setminus(\cup_{j=0}^n[E_j,\widehat{E}_j])$.
		
		\item \textbf{Jump condition:} $\mathcal{M}^{(glo)}(z)$ satisfies the jump relation $\mathcal{M}^{(glo)}_+(z)=\mathcal{M}^{(glo)}_-(z)J_{\mathcal{M}}^{(glo)}(z)$ for $z\in\cup_{j=0}^n(E_j,\widehat{E}_j)$, where the jump matrix is given by
		\begin{equation} \label{eq:M^{(glo)}_jump}
			J_{\mathcal{M}}^{(glo)}(z)=
			-ie^{i[f_0y+g_0t-(B_j^fy+B_j^gt+\phi_j-\delta_j)/2]\widehat{\sigma}_3}\sigma_1,\quad z\in(E_j,\widehat{E}_j).
		\end{equation}
		
		\item \textbf{Normalization at infinity:} As $z\to\infty$, we have $\mathcal{M}^{(glo)}(z)=I+\mathcal{O}(z^{-1})$.
		
		\item \textbf{Local behavior:} For $p\in\{E_j,\widehat{E}_j\}_{j=0}^n$, one has
		$
			\mathcal{M}^{(glo)}(z)=\mathcal{O}((z-p)^{-1/4}),\quad z\to p.
		$
	\end{enumerate}
\end{RH}

As in the Riemann--Hilbert problem for $M$, the above problem can be solved explicitly using the planar matrix Baker--Akhiezer function. More precisely, we have
\begin{equation}\label{M_glo}
	\mathcal{M}^{(glo)}(z)=e^{-i(f_0y+g_0t)\sigma_3}D^{(glo)}(\infty)^{-1}D^{(glo)}(z)e^{i(f_0y+g_0t)\sigma_3},
\end{equation}
where $D^{(glo)}(z)=\left(D_{ij}^{(glo)}(z)\right)_{i,j=1,2}$ with
\begin{equation} \no
	D_{j1}^{(glo)}(z)=\frac{1}{2}(\kappa(z)+(-1)^{j-1}\kappa(z)^{-1})\frac{\theta(\mathbf{A}(z)+\mathbf{c^*}+\mathbf{d}_j)}
	{\theta(\mathbf{A}(z)+\mathbf{d}_j)},\quad j=1,2,
\end{equation}
\begin{equation} \no
	D_{j2}^{(glo)}(z)=\frac{1}{2}(\kappa(z)+(-1)^j\kappa(z)^{-1})\frac{\theta(-\mathbf{A}(z)+\mathbf{c^*}+\mathbf{d}_j)}
	{\theta(-\mathbf{A}(z)+\mathbf{d}_j)},\quad j=1,2,
\end{equation}
where $\mathbf{c^*}=(c_0^*,c_1^*,\ldots,c_j^*,\ldots,c_n^*)$, $j=0,1,\ldots,n$, and
\begin{equation}\no
	c_j^*=-\frac{yC_j^f+tC_j^g+\phi_j-\delta_j}{2\pi},
	\qquad j=1,\ldots,n.
\end{equation}

Consequently, we obtain
\begin{equation}\label{M_glo_12}
	\begin{aligned}
	q^{(AG)}(x,t)=q^{(AG)}(x,t;\boldsymbol{E},\boldsymbol{\widehat{E}},\boldsymbol{\phi}-\boldsymbol{\delta})=\lim_{z\to0}
	\frac{
		\partial_y\left({\mathcal{M}^{(glo)}}^{-1}(y,t;0){\mathcal{M}^{(glo)}}(y,t;z)\right)_{12}
	}{z}\\
	\times\left(1+\mathrm{i}\lim_{\substack{z\to 0}}
	\frac{\partial_y\left({\mathcal{M}^{(glo)}}^{-1}(y,t;0){\mathcal{M}^{(glo)}}(y,t;z)\right)_{11}}{z}\right)^{-1}
	\end{aligned}
\end{equation}
with \begin{equation}
	\begin{aligned}
		x(y,t)
		=y+\mathrm{i}\lim_{z\to0}
		\frac{\left({\mathcal{M}^{(glo)}}^{-1}(y,t;0){\mathcal{M}^{(glo)}}(y,t;z)\right)_{11}-1
		}{z}.
	\end{aligned}
\end{equation}

\subsection{Local parametrix}
The global parametrix $\mathcal{M}^{(glo)}(z)$ constructed in the previous subsection gives an accurate approximation to $\mathcal{M}^{(1)}(z)$ away from the endpoint $\widehat{E}_{j_1}$. At the endpoint $\widehat{E}_{j_1}$, however, this approximation fails because the jump matrices for $\mathcal{M}^{(1)}(z)$ do not converge uniformly to the identity and may exhibit singular behavior. To resolve this issue, we construct a local parametrix in a small neighborhood of this point.

We define the small open disk $U_{1}=\{z\in\mathbb{C}: |z-\widehat{E}_{j_1}|<\varepsilon_1\}$ centered at $\widehat{E}_{j_1}$, with fixed radius $\varepsilon_1>0$ chosen sufficiently small. In particular, we may choose
\begin{equation}
	\varepsilon_1=\min\left\{\frac{1}{2}(\widehat{E}_{j_1}-z_{j_1}),\frac{1}{2}(E_{j_1+1}-\widehat{E}_{j_1}),2(z_1-\widehat{E}_{j_1})t^\epsilon\right\},\quad\epsilon\in(1/3,2/3).
\end{equation}

\begin{RH}
	$\mathcal{M}^{(loc)}(z)$ satisfies the following Riemann--Hilbert problem:
	\begin{enumerate}[label=(\textbf{$\mathcal{M}^{(loc)}$\arabic*}), leftmargin=*]
		\item \textbf{Analyticity:} $\mathcal{M}^{(loc)}(z)$ is analytic in $U_1\setminus\Sigma^{(1)}$.
		
		\item \textbf{Jump condition:} $\mathcal{M}^{(loc)}(z)$ satisfies the jump relation $\mathcal{M}^{(loc)}_+(z)=\mathcal{M}^{(loc)}_-(z)J_{\mathcal{M}}^{(1)}(z)$ for $z\in\Sigma^{(1)}\cap U_1$.
		
		\item \textbf{Matching condition:} As $t\to\infty$, $\mathcal{M}^{(loc)}(z)$ matches $\mathcal{M}^{(glo)}(z)$ on the boundary $\partial U_1$ of $U_1$.
		
		\item \textbf{Local behavior:} At $\widehat{E}_{j_1}$, one has
		$\mathcal{M}^{(loc)}(z)=\mathcal{O}((z-\widehat{E}_{j_1})^{-1/4}),\quad z\to \widehat{E}_{j_1}.$
		
	\end{enumerate}
\end{RH}

To construct the solution, we need a suitable conformal map that captures the local behavior of the phase function near the endpoint. For $\theta$, we have the following local expansion:
\begin{equation}
	\theta(z,\xi)=\theta^{(0,\widehat{j}_1)}+(\xi-\widehat{\xi}_{j_1})\theta^{(1,\widehat{j}_1)}(z-\widehat{E}_{j_1})^{1/2}+\frac{2}{3}\theta^{(3,\widehat{j}_1)}(z-\widehat{E}_{j_1})^{3/2}+O\left((z-\widehat{E}_{j_1})^{5/2}\right),
\end{equation}
as $z\to\widehat{E}_{j_1}$, for fixed $\xi$, where
\begin{equation}\no
	\theta^{(0,\widehat{j}_1)}=\theta(\widehat{E}_{j_1};\xi)=(f_0-B_{j_1}^f/2)\xi+g_0-\frac{1}{2}B_{j_1}^g,
\quad
	\theta^{(1,\widehat{j}_1)}=-\frac{2\prod_{j=0}^n(\widehat{E}_{j_1}-z_j^f)}{\widehat{w}^{(j_1)}(\widehat{E}_{j_1})},
\end{equation}
\begin{equation}\no
	\theta^{(3,\widehat{j}_1)}=\frac{1}{\widehat{w}^{(j_1)}(\widehat{E}_{j_1})^2}
\left[T(\widehat{E}_{j_1};\xi)(\widehat{w}^{(j_1)})^{\prime}(\widehat{E}_{j_1})-\partial_zT(\widehat{E}_{j_1};\xi)
\widehat{w}^{(j_1)}(\widehat{E}_{j_1})\right],
\end{equation}
with
\begin{equation} \no
	w^{(i)}(z)=\left[-\prod_{k=0}^n(z-\widehat{E}_k)\cdot\prod_{k=0,...,n,k\neq i}(z-E_k)\right]^{\frac{1}{2}},
\end{equation}
and
\begin{equation}\no
	\widehat{w}^{(i)}(z)=\left[\prod_{k=0}^n(z-E_k)\cdot\prod_{k=0,...,n,k\neq i}(z-\widehat{E}_k)\right]^{\frac{1}{2}}.
\end{equation}
This local expansion motivates the definition
\begin{equation}
	\zeta(z)=\zeta(z;\xi)=\left(\frac{3it}{2}(\theta(\widehat{E}_{j_1})+(\xi-\widehat{\xi}_{j_1})\theta^{(1,\widehat{j}_1)}(z-\widehat{E}_{j_1})^{1/2}-\theta(z))\right)^{2/3},\quad z\in U_1,
\end{equation}
which is a one-to-one conformal map in $U_1$ with respect to $z$. Moreover, it is readily seen that
\begin{equation}
	\zeta(\widehat{E}_{j_1})=0,\quad\zeta^{\prime}(\widehat{E}_{j_1})=-|\theta^{(3,\widehat{j}_1)}(\xi)|^{2/3}t^{2/3}<0.
\end{equation}

\subsection{Construction of a model Riemann--Hilbert problem}

In this subsection, we explicitly construct the matrix function $\mathcal{M}^{(loc)}(z)$ using Painlev\'{e} XXXIV functions. The construction is carried out in three steps: First of all, we solve a standard model Riemann--Hilbert problem in the auxiliary $\zeta$-plane; Next, we establish the error estimate; Finally, we construct $\mathcal{M}^{(loc)}(z)$.
We use the standard Painlev\'{e} XXXIV parametrix $\widehat\Psi(\zeta)$, as described in \cite{Its2008,Its2009,Fan2026}.
\begin{RH}
	The model function $\widehat\Psi(\zeta)$ is characterized by the following Riemann--Hilbert problem:
	\begin{enumerate}[label=(\textbf{$\widehat\Psi$\arabic*}), leftmargin=*]
		\item \textbf{Analyticity:} $\widehat\Psi(\zeta)$ is analytic in $\mathbb{C}\setminus \Sigma^{(loc,\zeta)}$ and continuous up to the boundary. The jump contour $\Sigma^{(loc,\zeta)}$ is defined by
		\begin{equation}
			\begin{aligned}
				\Sigma_1^{(loc)} &=(\zeta(\Sigma_1^*\cap U_1))\cup\{\zeta(\Sigma_1^*\cap\partial U_1)+e^{(\pi-\varphi)i}\mathbb{R}^+\}, \\
				\Sigma^{(loc,\zeta)} &=(\zeta(z_1),+\infty)\cup\Sigma_1^{(loc)}\cup\Sigma_1^{(loc)*}.
			\end{aligned}
		\end{equation}
		
		\item \textbf{Jump condition:} On the contour $\Sigma^{(loc,\zeta)}$, $\widehat\Psi(\zeta)$ satisfies the jump relation $\widehat\Psi_+(\zeta)=\widehat\Psi_-(\zeta)J^{(loc,1)}(\zeta)$, where the jump matrices are given by
		\begin{equation}
			J^{(loc,1)}(\zeta)=e^{it\theta(\widehat{E}_{j_1})\widehat{\sigma}_3}
			\begin{cases}
				e^{-i(\phi_{j_{1}}-\delta_{j_{1}})\widehat{\sigma}_{3}/2}
				\begin{pmatrix}
					0 & i \\
					i & 0
				\end{pmatrix}, & \zeta\in\mathbb{R}^{+}, \\
				\begin{pmatrix}
					1 & e^{-i\phi_{j_{1}}+2\widehat{\theta}(\zeta)+i\delta_{j_{1}}} \\
					0 & 1
				\end{pmatrix}, & \zeta\in\Sigma_{1}^{(loc)}, \\
				\begin{pmatrix}
					1 & 0 \\
					e^{i\phi_{j_{1}}+2\widehat{\theta}(\zeta)-i\delta_{j_{1}}} & 1
				\end{pmatrix}, & \zeta\in\Sigma_{1}^{(loc)*}, \\
				\begin{pmatrix}
					1 & 0 \\
					e^{i\phi_{j_{1}}+2\widehat{\theta}(\zeta)-i\delta_{j_{1}}} & 1
				\end{pmatrix}
				\begin{pmatrix}
					1 & e^{-i\phi_{j_{1}}+2\widehat{\theta}(\zeta)+i\delta_{j_{1}}} \\
					0 & 1
				\end{pmatrix}, & \zeta\in(\zeta(z_{1}),0),
			\end{cases}
		\end{equation}
		where
		\begin{equation}\no
			\widehat{\theta}(\zeta)=s\zeta^{1/2}+\frac{2}{3}\zeta^{3/2},\quad
			s=-\frac{\theta^{(1,\widehat{j}_1)}}{|\theta^{(3,\widehat{j}_1)}(\xi)|^{1/3}}(\xi-\widehat{\xi}_{j_1})t^{2/3}\in\mathbb{R}.
		\end{equation}
		
		\item \textbf{Asymptotics at infinity:} As $\zeta\to\infty$, we have
		\begin{equation} \no
			\widehat\Psi(\zeta)=\left(I+\mathcal{O}\left(\zeta^{-1}\right)\right)\frac{\zeta^{-\frac{1}{4}\sigma_3}}{\sqrt{2}}
			\begin{pmatrix}
				1 & i \\
				i & 1
			\end{pmatrix}\mathcal{G}_1(\zeta),
		\end{equation}
		where
		\begin{equation}\no
			\mathcal{G}_1(\zeta)=
			\begin{cases}
				 i\sigma_2e^{\pi i\sigma_3/4}e^{-i(t\theta(\widehat{E}_{j_1})-\phi_{j_1}+\delta_{j_1})\sigma_3/2}, & \zeta\in\mathbb{C}^+, \\
				e^{\pi i\sigma_3/4}e^{-i(t\theta(\widehat{E}_{j_1})-\phi_{j_1}+\delta_{j_1})\sigma_3/2}, & \zeta\in\mathbb{C}^-.
			\end{cases}
		\end{equation}
		
		\item \textbf{Local behavior:} As $\zeta\to0$, $\widehat\Psi(\zeta)$ has the local behavior $\widehat\Psi(\zeta)=\mathcal{O}(\zeta^{-1/4})$.
	\end{enumerate}
\end{RH}

This model RH problem admits an explicit solution in terms of the Painlev\'{e} XXXIV function $M^{(P_{34})}(\zeta;s,\alpha,\omega)$. To this end, let $\alpha=-1/4$ and $\omega=0$, and define
\begin{equation}
	\widehat\Psi(\zeta)=
	\begin{pmatrix}
		1 & 0 \\
		ia(s) & 1
	\end{pmatrix}M^{(P_{34})}(\zeta;s,-1/4,0)e^{\widehat{\theta}(\zeta)\sigma_3}\mathcal{G}_1(\zeta)\mathcal{G}_2(\zeta),
\end{equation}
where
\begin{equation}
	\begin{aligned}
		\mathcal{G}_2(\zeta)=e^{it\theta(\widehat{E}_{j_1})\widehat{\sigma}_3}
		\begin{cases}
			\begin{pmatrix}
				1 & e^{-i\phi_{j_1}+2\widehat{\theta}(\zeta)+i\delta_{j_1}} \\
				0 & 1
			\end{pmatrix}, & \zeta\in\Omega_1^{(loc)}, \\
			\begin{pmatrix}
				1 & 0 \\
				e^{i\phi_{j_1}+2\widehat{\theta}(\zeta)-i\delta_{j_1}} & 1
			\end{pmatrix}, & \zeta\in\Omega_1^{(loc)*}, \\
			I, & \text{elsewhere,}
		\end{cases}
	\end{aligned}
\end{equation}
with
\begin{equation}\label{a}
	a(s)=\int_{-\infty}^s\left(w(\zeta)+\frac{\zeta}{2}\right)\mathrm{d}\zeta.
\end{equation}
Here $w(x)$ is the unique solution of the Painlev\'{e} XXXIV equation \cite{Its2008,Its2009}
\begin{equation}
	w^{\prime\prime}(x)=4w^2(x)+2xw(x)+\frac{w'^2(x)-1/4}{2w(x)}.
\end{equation}

Note that, as $t\to\infty$, $J^{(1)}(\zeta)$ is well approximated by $J^{(loc,1)}(\zeta)$ for $\zeta\in\Sigma_1^{(loc)}\cap\zeta(U_1)$. It is therefore natural to expect that $\mathcal{M}^{(loc)}$ is well approximated by $\widehat\Psi$ for large $t$. We begin by introducing a Riemann--Hilbert problem whose jump matrix is the same as that of $\mathcal{M}^{(loc)}$ and whose large-$\zeta$ asymptotics agree with those of $\widehat\Psi$.

\begin{RH}
	${\mathcal{M}}^{(loc,0)}(\zeta)$ satisfies the following Riemann--Hilbert problem:
	\begin{enumerate}[label=(\textbf{${\mathcal{M}}^{(loc,0)}$\arabic*}), leftmargin=*]
		\item \textbf{Analyticity:} ${\mathcal{M}}^{(loc,0)}(\zeta)$ is analytic in $\mathbb{C}\setminus \Sigma^{(loc,\zeta)}$.
		
		\item \textbf{Jump condition:} On the contour $\Sigma^{(loc,\zeta)}$, ${\mathcal{M}}^{(loc,0)}(\zeta)$ satisfies the jump relation ${\mathcal{M}}^{(loc,0)}_+(\zeta)={\mathcal{M}}^{(loc,0)}_-(\zeta)J_{\mathcal{M}}^{(1)}(\zeta)$.
		
		\item \textbf{Asymptotics at infinity:} As $\zeta\to\infty$, we have
		\begin{equation}
			{\mathcal{M}}^{(loc,0)}(\zeta)=\left(I+\mathcal{O}\left(\zeta^{-1}\right)\right)\frac{\zeta^{-\frac{1}{4}\sigma_3}}{\sqrt{2}}
			\begin{pmatrix}
				1 & i \\
				i & 1
			\end{pmatrix}\mathcal{G}_1(\zeta).
		\end{equation}
		
		\item \textbf{Local behavior:} As $\zeta\to0$, ${\mathcal{M}}^{(loc,0)}(\zeta)$ has the local behavior ${\mathcal{M}}^{(loc,0)}(\zeta)=\mathcal{O}(\zeta^{-1/4})$.
	\end{enumerate}
\end{RH}

Set $\Xi(\zeta)=\mathcal{M}^{(loc,0)}(\zeta)\widehat\Psi(\zeta)^{-1}$. Following \cite{Fan2026}, one obtains that, as $t\to\infty$, $\Xi(\zeta)$ exists uniquely and satisfies
\begin{equation}
	\Xi(\zeta)=I+\mathcal{O}(t^{-1/3}).
\end{equation}
Hence the asymptotic expansion of $\mathcal{M}^{(loc,0)}(\zeta)$ is
\begin{equation}
	\mathcal{M}^{(loc,0)}(\zeta)=\left(I+\frac{\mathcal{M}_1^{(loc,0)}}{\zeta}+O\left(\zeta^{-2}\right)\right)\frac{\zeta^{-\frac{1}{4}\sigma_3}}{\sqrt{2}}
	\begin{pmatrix}
		1 & i \\
		i & 1
	\end{pmatrix}\mathcal{G}_1(\zeta).
\end{equation}

We now construct $\mathcal{M}^{(loc)}(z)$. With the aid of $\mathcal{M}^{(loc,0)}(\zeta)$, we define
\begin{equation}
	\mathcal{M}^{(loc)}(z)=H_1(z)t^{\sigma_3/6}\mathcal{M}^{(loc,0)}(\zeta(z)),\quad z\in U_1,
\end{equation}
where
\begin{equation}
	H_1(z)=\mathcal{M}^{(glo)}(z)\mathcal{G}_1(\zeta(z))^{-1}\frac{1}{\sqrt{2}}\begin{pmatrix}
		1 & -i \\
		-i & 1
	\end{pmatrix}\left((\widehat{E}_{j_1}-z)|\theta^{(3,\widehat{j}_1)}(\widehat{\xi}_{j_1})|^{2/3}\right)^{\sigma_3/4}.
\end{equation}
The function $H_1(z)$ is analytic in $U_1$; see \cite{Fan2026}. Finally, using $\mathcal{M}^{(glo)}(z)$ and $\mathcal{M}^{(loc)}(z)$, we obtain
\begin{equation}
	\mathcal{M}^{(loc)}(z)\mathcal{M}^{(glo)}(z)^{-1}=I-\frac{t^{1/3}}{\zeta(z)}H_1(z)
	\begin{pmatrix}
		0 & ia(s) \\
		0 & 0
	\end{pmatrix}H_1(z)^{-1}+\mathcal{O}(t^{1/3-2\epsilon}),
\end{equation}
as $t\to\infty$, for $z\in\partial U_1$.

\subsection{The small-norm Riemann--Hilbert problem}
The final transformation involves the global parametrix $\mathcal{M}^{(glo)}(z)$ and the local parametrix $\mathcal{M}^{(loc)}(z)$. We define the matrix-valued function $E(z)$ by
\begin{equation}
	E(z)=
	\begin{cases}
		\mathcal{M}^{(1)}(z)\mathcal{M}^{(glo)}(z)^{-1},\quad z\in\mathbb{C}\setminus U_1, \\
		\mathcal{M}^{(1)}(z)\mathcal{M}^{(loc)}(z)^{-1},\quad z\in U_1.
	\end{cases}
\end{equation}

\begin{figure}[!t]
	\centering
	\includegraphics[scale=0.23]{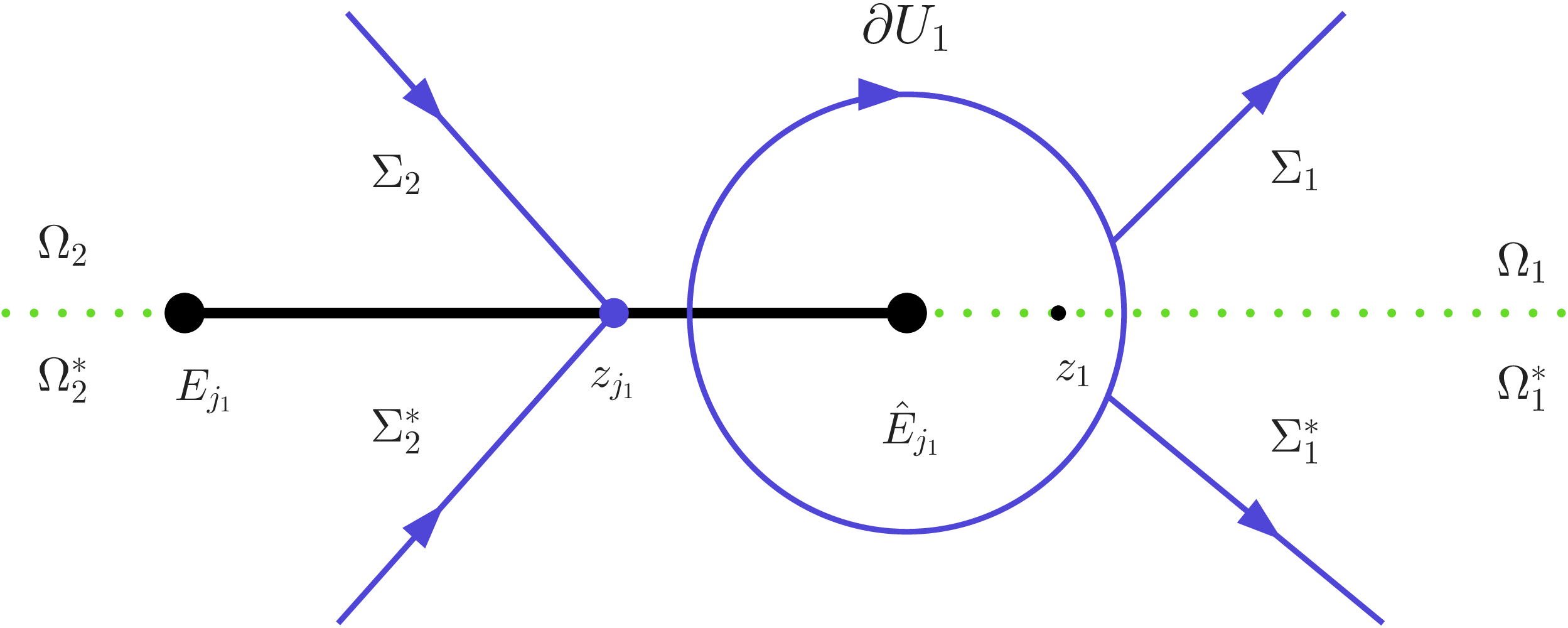}
	\caption{The contours $\Sigma^{(E)}$.}
	\label{fig:WKI10}
\end{figure}

\begin{RH}
	$E(z)$ satisfies the following Riemann--Hilbert problem:

	\begin{enumerate}[label=(\textbf{$E$\arabic*}), leftmargin=*]

		\item \textbf{Analyticity:} $E(z)$ is analytic in $\mathbb{C}\setminus \Sigma^{(E)}$, where the jump contour $\Sigma^{(E)}$ is defined by (see Fig.~\ref{fig:WKI10})
		$\Sigma^{(E)}=\partial U_1\cup\Sigma_2\cup\Sigma_2^*\cup\Sigma_1\cup\Sigma_1^*\setminus U_1.$

		\item \textbf{Jump condition:} On the contour $\Sigma^{(E)}$, $E(z)$ satisfies the jump relation $E_+(z)=E_-(z)J_{E}(z)$, where the jump matrix is given by
		\begin{equation}
			J_{E}(z)=\left\{
			\begin{array}{ll}\mathcal{M}^{(glo)}(z)J_{\mathcal{M}}^{(1)}(z)\mathcal{M}^{(glo)}(z)^{-1}, &
         z\in\Sigma^{(E)}\setminus\partial U_1, \v\\
				\mathcal{M}^{(loc)}(z)\mathcal{M}^{(glo)}(z)^{-1}, & z\in\partial U_1.
			\end{array}\right.
		\end{equation}
		
		\item \textbf{Asymptotics at infinity:} As $z\to\infty$, we have $E(z)=I+\mathcal{O}(z^{-1})$.
		
		\item \textbf{Local behavior:} As $z\to\widehat{E}_{j_1}$, $E(z)$ has the local behavior $E(z)=\mathcal{O}((z-\widehat{E}_{j_1})^{-1/2})$.
	\end{enumerate}
\end{RH}

A direct calculation shows that
\begin{equation}\label{J_E-I}
	\parallel J_E(z)-I\parallel_{L^p}=
	\begin{cases}
		\mathcal{O}(\exp\left\{-ct^{\epsilon/2}\right\}), & z\in\Sigma^{(E)}\setminus\partial U_1, \\
		\mathcal{O}(t^{-\kappa_p}), & z\in\partial U_1 ,
	\end{cases}
\end{equation}
for some positive constant $c$, with $\kappa_\infty=\epsilon-1/3$ and $\kappa_2=\epsilon/2$; see \cite{Fan2026}. Since the jump matrix is uniformly close to the identity, the standard small-norm Riemann--Hilbert theory (see, e.g., \cite{Deift1993}) implies that the Riemann--Hilbert problem for $E$ has a unique solution for sufficiently large positive $t$. Moreover,
\begin{equation}
	E(z)=I+\frac{1}{2\pi i}\int_{\Sigma^{(E)}}\frac{\varpi(\lambda)(J_E(\lambda)-I)}{\lambda-z}\mathrm{d}\lambda,
\end{equation}
where $\varpi\in I+L^2(\Sigma^{(E)})$ is the unique solution of the Fredholm-type equation
\begin{equation}
	\varpi=I+\mathcal{C}_E\varpi,
\end{equation}
and $\mathcal{C}_E{:}L^2(\Sigma^{(E)})\to L^2(\Sigma^{(E)})$ is the integral operator defined by $\mathcal{C}_E(f)(z)=\mathcal{C}_-\left(f(J_E(z)-I)\right)$, with $\mathcal{C}_-$ being the Cauchy projection operator on $\Sigma^{\large(E)}$ defined by
\begin{equation}
	\mathcal{C}_{\pm}f(z)=\lim_{z^{\prime}\to z\in\Sigma^{\large(E)}}\frac{1}{2\pi i}\int_{\Sigma^{\large(E)}}\frac{f(\zeta)}{\zeta-z^{\prime}}\mathrm{d}\zeta.
\end{equation}

Furthermore, to reconstruct $q(y,t)$, it is necessary to study the asymptotic behavior of $E(z)$ as $z\to0$ and the large-time asymptotic behavior of $E(0)$. In view of the estimate \eqref{J_E-I}, as $t\to\infty$, it suffices to compute the contribution from $\partial U_1$, since the contributions from the remaining parts of the contour are exponentially small. First, as $z\to0$, we have
\begin{equation}
	E(z)=E(0)+E_1z+O(z^2),
\end{equation}
where
\begin{equation}\no
	E(0)
	=
	I
	+
	\frac{1}{2\pi\mathrm{i}}
	\int_{\Sigma^{(E)}}
	\frac{(I+\varpi(\lambda))(J_E(\lambda)-I)}{\lambda}\,d\lambda,
\end{equation}
and
\begin{equation}\no
	E_1
	=
	-\frac{1}{2\pi\mathrm{i}}
	\int_{\Sigma^{(E)}}
	\frac{(I+\varpi(\lambda))(J_E(\lambda)-I)}{\lambda^2}\,d\lambda.
\end{equation}
Then, as $t\to\infty$, the asymptotic behavior of $E(0)$ and $E_1$ is given by
\begin{equation}\label{E_0}
	\begin{aligned}
		E(0)
		&=
		I
		+\frac{1}{2\mathrm{i}\pi}
		\int_{\partial{U}_{1}}
		\left(J_E(\lambda)-I\right)\,d\lambda
		+\mathcal{O}(t^{-\epsilon})
		\\
		&=
		I
		+\frac{t^{-1/3}}{|\theta^{(3,\widehat{j}_1)}(\widehat{\xi}_{j_1})|^{2/3}}H_1(\widehat{E}_{j_1})
		\begin{pmatrix}
			0 & ia \\
			0 & 0
		\end{pmatrix}H_1(\widehat{E}_{j_1})^{-1}+\mathcal{O}(t^{-\epsilon}),
	\end{aligned}
\end{equation}
and
\begin{equation}\no
	E_1
	=
	-\frac{t^{-1/3}}{|\theta^{(3,\widehat{j}_1)}(\widehat{\xi}_{j_1})|^{2/3}\widehat{E}_{j_1}}H_1(\widehat{E}_{j_1})
	\begin{pmatrix}
		0 & ia \\
		0 & 0
	\end{pmatrix}H_1(\widehat{E}_{j_1})^{-1}+\mathcal{O}(t^{-\epsilon}).
\end{equation}
Moreover, from \eqref{E_0}, a direct calculation shows that
\begin{equation}\label{E_0-1}
	E(0)^{-1}=I+\mathcal{O}(t^{-1/3}).
\end{equation}

\section{Asymptotic analysis in transition region II}\label{Region-II}

In this and the next sections, we use the same notation as before, with the meaning understood from the corresponding context.

In this section, we carry out the asymptotic analysis of the Riemann--Hilbert problem for $\mathcal{N}(z)$ in order to derive the long-time asymptotics of $q$ in transition region II. According to Definition \ref{def1}, we may restrict ourselves to
$0\leq(\xi-\xi_{j_1})t^{2/3}\leq C$
for some fixed $j_1\in\{0,\ldots,n\}$, since the analysis in the complementary half-region is analogous. For large $t$, one has $\xi_{j_1}<\xi<\widehat{\xi}_{j_1-1}$, and hence the saddle point satisfies $z_1\in[\widehat{E}_{j_1-1},E_{j_1})$. Here $\mathcal{N}(z)$ is defined by
\begin{equation}
	\mathcal{N}(z)=\mathcal{N}(y,t;z)  =
	\begin{cases}
		\left(\mu_1^-(y,t;z),\dfrac{\mu_2^+(y,t;z)}{S_{22}(z)}\right),\quad & z\in\mathbb{C}^+, \\[1.2em]
		\left(\dfrac{\mu_1^+(y,t;z)}{S_{11}(z)},\mu_2^-(y,t;z)\right),\quad & z\in\mathbb{C}^-.
	\end{cases}
\end{equation}

\begin{RH}
	$\mathcal{N}(z)$ satisfies the following Riemann--Hilbert problem:
	\begin{enumerate}[label=(\textbf{$\mathcal{N}$\arabic*}), leftmargin=*]
		\item \textbf{Analyticity:} $\mathcal{N}(z)$ is analytic in $\mathbb{C}\setminus\mathbb{R}$.
		
		\item \textbf{Jump condition:} $\mathcal{N}_+(z)=\mathcal{N}_-(z)J_{\mathcal{N}}(z)$ for $z\in\mathbb{R}$, where the jump matrix is given by
		\begin{equation} \label{eq:N_jump}
			J_{\mathcal{N}}(z)=
			\begin{cases}
				e^{i(f_0y+g_0t-(B_j^fy+B_j^gt+\phi_j)/2)\widehat{\sigma}_3}
				\begin{pmatrix}
					0 & -i \\
					-i & 0
				\end{pmatrix},&z\in(E_j,\widehat{E}_j),j=0,\ldots,n, \\
				\\
				\begin{pmatrix}
					1 & \overline{r_2(z)}e^{2it\theta(z)} \\
					-r_2(z)e^{-2it\theta(z)} & 1-|r_2(z)|^2
				\end{pmatrix},& z\in\mathbb{R}\setminus(\cup_{j=0}^n[E_j,\widehat{E}_j]).
			\end{cases}
		\end{equation}
		
		\item \textbf{Normalization at infinity:} As $z\to\infty$, $\mathcal{N}(z)=I+\mathcal{O}(z^{-1})$.
		
		\item \textbf{Local behavior:} For $p\in\{E_j,\widehat{E}_j\}_{j=0}^n$, $\mathcal{N}(z)$ has the following local behaviors:
		\begin{equation}
						\mathcal{N}(z)=\left\{
      \begin{array}{ll}
       (\mathcal{O}((z-p)^{-1/4}),\mathcal{O}((z-p)^{1/4})),& z\to p\mathrm{~from~}\mathbb{C}^+,\v\\
	 (\mathcal{O}((z-p)^{1/4}),\mathcal{O}((z-p)^{-1/4})),& z\to p\mathrm{~from~}\mathbb{C}^-.
       \end{array}\right.
				\end{equation}
	\end{enumerate}
\end{RH}

Similarly, we obtain the following reconstruction formulae:
\begin{equation}\label{q3}
	\begin{aligned}
		\overline{q}(x,t)=-
		\lim_{z\to0}
		\frac{
			\partial_y\left(\mathcal{N}^{-1}(y,t;0)\mathcal{N}(y,t;z)\right)_{21}
		}{z}\left(1+\mathrm{i}\lim_{\substack{z\to 0}}
		\frac{\partial_y\left(\mathcal{N}^{-1}(y,t;0)\mathcal{N}(y,t;z)\right)_{11}}{z}\right)^{-1},
	\end{aligned}
\end{equation}
and
\begin{equation}\label{x3}
	\begin{aligned}
		x(y,t)=y+\mathrm{i}\lim_{z\to0}
		\frac{
			\left(\mathcal{N}^{-1}(y,t;0)\mathcal{N}(y,t;z)\right)_{11}-1
		}{z}.
	\end{aligned}
\end{equation}

Let $z_{j_1}\in(E_{j_1},\widehat{E}_{j_1})$ be fixed, and
\begin{equation}
	\begin{aligned}
		&\Omega_3=\Omega_3(\xi)=\{z\in\mathbb{C}:\pi-\varphi_2\leq\arg(z-z_1)\leq\pi\},\quad \Sigma_3=\Sigma_3(\xi)=z_1+e^{i\varphi_2}\mathbb{R}^+,\\
		&\Omega_4=\Omega_4(\xi)=\{z\in\mathbb{C}:0\leq\arg(z-z_{j_1})\leq\varphi_2\},\quad \Sigma_4=\Sigma_4(\xi)=z_{j_1}+e^{i(\pi-\varphi_2)}\mathbb{R}^+,
	\end{aligned}
\end{equation}
where $\varphi_{2}$ is chosen such that
\begin{equation}
	\operatorname{Im}\theta(z)
	\begin{cases}
		<0,\quad z\in\Sigma_3\setminus\{z_1\}, \\
		>0,\quad z\in\Sigma_4\setminus\{z_{j_1}\}.
	\end{cases}
\end{equation}

Similarly, the jump matrix $J_{\mathcal{N}}$ admits the following factorizations:
\begin{equation}
	\begin{aligned}
		J_{\mathcal{N}}(z) & =
		\begin{pmatrix}
			1 & \frac{r_2(z)e^{2it\theta(z)}}{1-|r_2(z)|^2} \v\\
			0 & 1
		\end{pmatrix}(1-|r_2(z)|^2)^{\sigma_3}
		\begin{pmatrix}
			1 & 0 \v\\
			-\frac{\overline{r_2(z)}e^{-2it\theta(z)}}{1-|r_2(z)|^2} & 1
		\end{pmatrix} \\
		& =
		\begin{pmatrix}
			1 & 0 \v\\
			-r_2(z)e^{-2it\theta(z)} & 1
		\end{pmatrix}
		\begin{pmatrix}
			1 & \overline{r_2(z)}e^{2it\theta(z)} \v\\
			0 & 1
		\end{pmatrix},\quad z\in\mathbb{R}\setminus(\cup_{j=0}^n[E_j,\widehat{E}_j]),
	\end{aligned}
\end{equation}
and
\begin{equation}
	\begin{aligned}
		J_{\mathcal{N}}\left(z\right)& =
		\begin{pmatrix}
			1 & 0 \v \\
			-r_2(z)e^{-2it\theta_-(z)} & 1
		\end{pmatrix}
		\begin{pmatrix}
			0 & -ie^{i(t(\theta_+(z)+\theta_-(z))-\phi_j)} \v\\
			-ie^{-i(t(\theta_+(z)+\theta_-(z))-\phi_j)} & 0
		\end{pmatrix}  \v\\
		& \qquad\qquad \times
		\begin{pmatrix}
			1 & r_2(z)e^{2i(t\theta_+(z)-\phi_j)} \v\\
			0 & 1
		\end{pmatrix} \\
		& =
		\begin{pmatrix}
			1 & \frac{r_2(z)e^{2i(t\theta_-(z)-\phi_j)}}{1-(r_2(z))^2e^{-2i\phi_j}} \v\\
			0 & 1
		\end{pmatrix}
		\begin{pmatrix}
			0 & -ie^{i(t(\theta_+(z)+\theta_-(z))-\phi_j)} \v\\
			-ie^{-i(t(\theta_+(z)+\theta_-(z))-\phi_j)} & 0
		\end{pmatrix} \v\\
		&\qquad\qquad \times
		\begin{pmatrix}
			1 & -\frac{r_2(z)e^{-2it\theta_+(z)}}{1-(r_2(z))^2e^{-2i\phi_j}} \v\\
			0 & 1
		\end{pmatrix},\quad z\in\cup_{j=0}^n(E_j,\widehat{E}_j).
	\end{aligned}
\end{equation}

Based on these factorizations, we define the matrix-valued function $\mathcal{N}^{(1)}(z)$ by
\begin{equation}
	\mathcal{N}^{(1)}(z)=e^{\delta(\infty)\sigma_3}\mathcal{N}(z)\mathcal{G}(z)e^{-\delta(z)\sigma_3},
\end{equation}
where
\begin{equation}
	\mathcal{G}(z)=
	\begin{cases}
		\begin{pmatrix}
			1 & -\overline{r_2(z)}e^{2it\theta(z)} \\
			0 & 1
		\end{pmatrix},\quad z\in\Omega_3, \v\\
		\begin{pmatrix}
			1 & 0 \\
			-r_2(z)e^{-2it\theta(z)} & 1
		\end{pmatrix},\quad z\in\Omega_3^*, \v\\
		\begin{pmatrix}
			1 & 0 \\
			\frac{r_2(z)e^{-2it\theta(z)}}{1-|r_2(z)|^2} & 1
		\end{pmatrix},\quad z\in\Omega_4, \v\\
		\begin{pmatrix}
			1 & \frac{\overline{r_2(z)}e^{2it\theta(z)}}{1-|r_2(z)|^2} \\
			0 & 1
		\end{pmatrix},\quad z\in\Omega_4^*, \v\\
		I, \quad \text{elsewhere,}
	\end{cases}
\end{equation}
and the function $\delta(z)$ is defined by
\begin{equation}\label{delta2}
	\delta(z)=\frac{w(z)}{2\pi i}\left[\sum_{j=1}^n\delta_j\int_{E_j}^{\widehat{E}_j}\frac{i\mathrm{~d}s}{w_+(s)(s-z)}+\int_{(\widehat{E}_{j_1},+\infty)\setminus(\cup_{j=j_1}^{n}(E_j,\widehat{E}_j))}\frac{\log(1-|r_2(s)|^2)\mathrm{~d}s}{w(s)(s-z)}\right],
\end{equation}
where the logarithm is taken on the principal branch, and the constants $\delta_j$, $j=1,\ldots,n$, are determined by the linear system
\begin{equation}\label{delta_2}
	\int_{(\widehat{E}_{j_1},+\infty)\setminus(\cup_{j=j_1}^{n}(E_j,\widehat{E}_j))}\frac{\log(1-|r_2(s)|^2)s^k\mathrm{d}s}{w(s)}+\sum_{j=1}^n\delta_j\int_{E_j}^{\widehat{E}_j}\frac{is^k\mathrm{d}s}{w_+(s)}=0,\quad k=0,\ldots,n-1.
\end{equation}

\begin{RH}
	$\delta(z)$ satisfies the following Riemann--Hilbert problem:
	\begin{enumerate}[label=(\textbf{$\delta$\arabic*}), leftmargin=*]
		\item \textbf{Analyticity:} $\delta(z)$ is analytic in $\mathbb{C}\setminus(\cup_{j=0}^n[E_j,\widehat{E}_j]\cup[z_1,+\infty))$.
		
		\item \textbf{Jump condition:} $\delta(z)$ satisfies the jump relation
		\begin{equation}
		\delta_-(z)=\left\{\begin{aligned}
				&\delta_+(z)+\log(1-|r_2(z)|^2),\quad &z\in(\widehat{E}_{j_1},+\infty)\setminus(\cup_{j=j_1}^{n}[E_j,\widehat{E}_j]), \\
				&-\delta_+(z)+i\delta_j,\quad &z\in(E_j,\widehat{E}_j),\quad j=1,\ldots,n.
			\end{aligned} \right.
		\end{equation}
		
		\item \textbf{Normalization at infinity:} As $z \to \infty$,
		 $
			\delta(z)=\delta(\infty)+\frac{\delta^{(1)}}{z}+\mathcal{O}(z^{-2}),\quad z\to\infty,$
				where
		\begin{equation}\label{delta_infty_2}
			\begin{aligned}
				\delta(\infty) =-\frac{1}{2\pi i}\left[\int_{(\widehat{E}_{j_1},+\infty)\setminus(\cup_{j=j_1}^{n}(E_j,\widehat{E}_j))}\frac{\log(1-|r_2(s)|^2)s^n \mathrm{d}s}{w(s)} +\sum_{j=1}^n\delta_j\int_{E_j}^{\widehat{E}_j}\frac{is^n\mathrm{d}s}{w_+(s)}\right],
			\end{aligned}
		\end{equation}
		and
		\begin{equation}
			\begin{aligned}
				&\delta^{(1)} =-\delta(\infty)\sum_{j=0}^n(E_j+\widehat{E}_j) \\
				&\qquad\qquad -\frac{1}{2\pi i}\left[\int_{(\widehat{E}_{j_1},+\infty)\setminus(\cup_{j=j_1}^{n}(E_j,\widehat{E}_j))}\frac{\log(1-|r_2(s)|^2)s^{n+1} \mathrm{d}s}{w(s)}+\sum_{j=1}^n\delta_j\int_{E_j}^{\widehat{E}_j}\frac{is^{n+1}\mathrm{d}s}{w_+(s)}\right].
			\end{aligned}
		\end{equation}
		
		\item \textbf{Local behaviors:}
		\begin{equation}
			e^{\delta(z)}=\mathcal{O}((z-p)^{-1/2}),\quad z\to p\in\{E_j\}_{j=j_1+1}^{n}\cup\{\widehat{E}_j\}_{j=j_1}^{n}\,\, \text{from}\,\, \mathbb{C}^+,
		\end{equation}
		and
		\begin{equation}
			\delta(z)=\frac{i}{2}\delta_{j_1}+\mathcal{O}((z-E_{j_1})^{1/2}),\quad z\to E_{j_1}\,\, \text{~from~} \,\, \mathbb{C}\setminus(-\infty,E_{j_1}).
		\end{equation}
		
		\item \textbf{Symmetry relation:} $\delta(z)$ satisfies the symmetry relation $\delta(z)=-\overline{\delta(\bar{z})}$.
	\end{enumerate}
\end{RH}

\begin{RH}
	$\mathcal{N}^{(1)}(z)$ satisfies the following Riemann--Hilbert problem:
	\begin{enumerate}[label=(\textbf{$\mathcal{N}^{(1)}$\arabic*}), leftmargin=*]
		\item \textbf{Analyticity:} $\mathcal{N}^{(1)}(z)$ is analytic in $\mathbb{C}\setminus\Sigma^{(1)}$, where $\Sigma^{(1)}$, is defined by (see Fig.~\ref{fig:WKI11})
	$			\Sigma^{(1)}=[z_1,E_{j_1}]\cup(\cup_{j=0}^n[E_j,\widehat{E}_j])\cup\Sigma_3\cup\Sigma_3^*\cup\Sigma_4\cup\Sigma_4^*.
	$
		
		\item \textbf{Jump condition:} $\mathcal{N}^{(1)}_+(z)=\mathcal{N}^{(1)}_-(z)J_{\mathcal{N}}^{(1)}(z)$ for $z\in\Sigma^{(1)}$, where the jump matrix is given by
		\begin{equation}
			J_{\mathcal{N}}^{(1)}(z)=\begin{cases}
				e^{i(f_0y+g_0t-(B_j^fy+B_j^gt+\phi_j-\delta_j)/2)\widehat{\sigma}_3}\begin{pmatrix}0&-i\\-i&0\end{pmatrix},&z\in(E_j,\widehat{E}_j),\\[2.5ex]
				\begin{pmatrix}1&\overline{r_2(z)}e^{2it\theta(z)+2\delta(z)}\\0&1\end{pmatrix},&z\in\Sigma_3,\\[2.5ex]
				\begin{pmatrix}1&0\\-r_2(z)e^{-2it\theta(z)-2\delta(z)}&1\end{pmatrix},&z\in\Sigma_3^*,\\[2.5ex]
				\begin{pmatrix}1&0\\\dfrac{-r_2(z)e^{-2it\theta(z)-2\delta(z)}}{1-|r_2(z)|^2}&1\end{pmatrix},&z\in\Sigma_4,\\[3.5ex]
				\begin{pmatrix}1&\dfrac{\overline{r_2(z)}e^{2it\theta(z)+2\delta(z)}}{1-|r_2(z)|^2}\\ 0&1\end{pmatrix},&z\in\Sigma_4^*,\\[3ex]
				\begin{pmatrix}1&\overline{r_2(z)}e^{2it\theta(z)+2\delta(z)}\\ -r_2(z)e^{-2it\theta(z)-2\delta(z)}&1-|r_2(z)|^2\end{pmatrix},&z\in(z_1,E_{j_1}).
			\end{cases}
		\end{equation}
		
		\item \textbf{Normalization at infinity:} As $z\to\infty$, $\mathcal{N}^{(1)}(z)=I+\mathcal{O}(z^{-1})$.
		
		\item \textbf{Local behavior:} For $p\in\{E_j,\widehat{E}_j\}_{j=0}^n\setminus\{E_{j_1}\}$, we have
		$
			\mathcal{N}^{(1)}(z)=\mathcal{O}((z-p)^{-1/4}),\quad z\to p.
		$
	\end{enumerate}
\end{RH}

\begin{figure}[!t]
	\centering
	\includegraphics[scale=0.3]{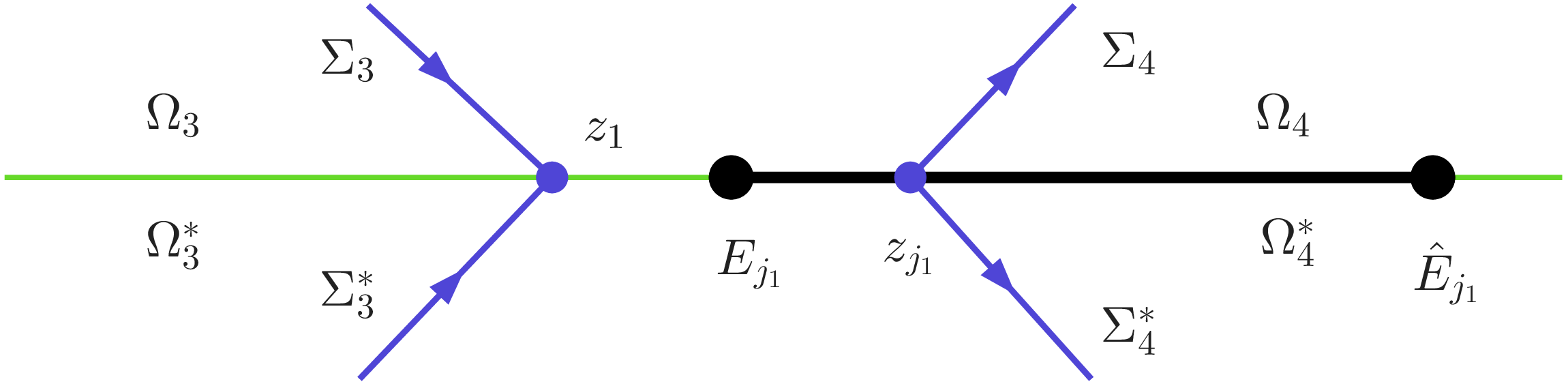}
	\caption{The contours $\Sigma^{(1)}$.}
	\label{fig:WKI11}
\end{figure}

We set
\begin{equation}
	\mathcal{N}^{(1)}(z)=
	\begin{cases}
		E(z)\mathcal{N}^{(glo)}(z), & z\in\mathbb{C}\setminus U_2, \\
		E(z)\mathcal{N}^{(loc)}(z), & z\in U_2.
	\end{cases}
\end{equation}
Here $\mathcal{N}^{(glo)}(z)$ and $\mathcal{N}^{(loc)}(z)$ are the global and local parametrices for $\mathcal{N}^{(1)}(z)$, respectively, and $U_{2}=\{z\in\mathbb{C}: |z-E_{j_1}|<\varepsilon_2\}$ is centered at $E_{j_1}$, with fixed radius $\varepsilon_2>0$ chosen sufficiently small. In particular, we may choose
\begin{equation}
	\varepsilon_2=\min\left\{\frac{1}{2}(z_{j_1}-E_{j_1}),\frac{1}{2}(E_{j_1+1}-\widehat{E}_{j_1}),2(z_1-\widehat{E}_{j_1})t^\epsilon\right\},\quad\epsilon\in(1/3,2/3).
\end{equation}

As in transition region I, $\mathcal{N}^{(glo)}(z)$ satisfies the same global model Riemann--Hilbert problem, whose solution is explicitly given by \eqref{M_glo}.

\begin{RH}
	$\mathcal{N}^{(loc)}(z)$ satisfies the following Riemann--Hilbert problem:
	\begin{enumerate}[label=(\textbf{$\mathcal{N}^{(loc)}$\arabic*}), leftmargin=*]
		\item \textbf{Analyticity:} $\mathcal{N}^{(loc)}(z)$ is analytic in $U_2\setminus\Sigma^{(1)}$.
		
		\item \textbf{Jump condition:}  $\mathcal{N}^{(loc)}_+(z)=\mathcal{N}^{(loc)}_-(z)J_{\mathcal{N}}^{(1)}(z)$ for $z\in\Sigma^{(1)}\cap U_2$.
		
		\item \textbf{Matching condition:} As $t\to\infty$, $\mathcal{N}^{(loc)}(z)$ matches $\mathcal{N}^{(glo)}(z)$ on the boundary $\partial U_2$ of $U_2$.
		
		\item \textbf{Local behavior:} $\mathcal{N}^{(loc)}(z)=\mathcal{O}((z-E_{j_1})^{-1/4})$ as $z\to E_{j_1}$.

	\end{enumerate}
\end{RH}

Similarly, to construct the solution, we need a suitable conformal map that captures the local behavior of the phase function near the endpoint. For $\theta$, we have the following local expansion:
\begin{equation}
	\theta(z)=\theta^{(0,j_1)}+(\xi-\xi_{j_1})\theta^{(1,j_1)}(E_{j_1}-z)^{1/2}+\frac{2}{3}\theta^{(3,j_1)}(E_{j_1}-z)^{3/2}+O\left((E_{j_1}-z)^{5/2}\right),
\end{equation}
as $z\to E_{j_1}$, for fixed $\xi$, where
\begin{equation}\no
	\theta^{(0,j_1)}=f_0\xi+g_0-\frac{1}{2}(B_{j_1}^f\xi+ B_{j_1}^g),\qquad
	\theta^{(1,j_1)}=-\frac{2\prod_{j=0}^n(E_{j_1}-z_{j_1}^f)}{w^{(j_1)}(E_{j_1})},
\end{equation}
\begin{equation}\no
	\theta^{(3,j_1)}=\frac{1}{w^{(j_1)}(E_{j_1})^2}\left[T(E_{j_1};\xi)
(w^{(j_1)})^{\prime}(E_{j_1})-\partial_zT(E_{j_1};\xi)w^{(j_1)}(E_{j_1})\right].
\end{equation}
This local expansion motivates the definition
\begin{equation}
	\zeta(z)=\left(\frac{3it}{2}(\theta(E_{j_1})+(\xi-\xi_{j_1})\theta^{(1,j_1)}(E_{j_1}-z)^{1/2}-\theta(z))\right)^{2/3},\quad z\in U_2,
\end{equation}
which is a one-to-one conformal map in $U_2$ with respect to $z$. Moreover, it is readily seen that
\begin{equation}
	\zeta(E_{j_1})=0,\quad\zeta^{\prime}(E_{j_1})=-|\theta^{(3,j_1)}(\xi_{j_1})|^{2/3}t^{2/3}<0.
\end{equation}

As in transition region I, the local parametrix $\mathcal{N}^{(loc)}(z)$ is given by
\begin{equation}
	\mathcal{N}^{(loc)}(z)=\tilde{H}_1(z)t^{\sigma_3/6}
	\begin{pmatrix}
		1 & 0 \\
		ia(s) & 1
	\end{pmatrix}M^{(P_{34})}(\zeta(z);s,-1/4,0)e^{\widehat{\theta}(\zeta(z))\sigma_3}\mathcal{G}_1(\zeta(z))\mathcal{G}_2(\zeta(z)),
\end{equation}
where
\begin{equation}\no
	\mathcal{G}_1(\zeta)=
	\begin{cases}
		\begin{pmatrix}
			0 & -1 \\
			1 & 0
		\end{pmatrix}e^{-\pi i\sigma_3/4}e^{-i(f_0y+g_0t-(B_{j_1}^fy+B_{j_1}^gt+\delta_{j_1})\sigma_3/2)}, & \zeta\in\mathbb{C}^-, \\
		e^{-\pi i\sigma_3/4}e^{-i(f_0y+g_0t-(B_{j_1}^fy+B_{j_1}^gt+\delta_{j_1})\sigma_3/2)}, & \zeta\in\mathbb{C}^+,
	\end{cases}
\end{equation}
\begin{equation}\no
	\begin{aligned}
		\mathcal{G}_2(\zeta)=e^{it\theta(E_{j_1})\widehat{\sigma}_3}
		\begin{cases}
			\begin{pmatrix}
				1 & -\overline{r_2(E_{j_1})}e^{2\theta(\zeta)+i\delta_{j_1}} \\
				0 & 1
			\end{pmatrix}, & \zeta\in\Omega_3^{(loc)}, \\
			\begin{pmatrix}
				1 & 0 \\
				-r_2(E_{j_1})e^{2\theta(\zeta)-i\delta_{j_1}} & 1
			\end{pmatrix}, & \zeta\in\Omega_3^{(loc)*}, \\
			I, & \text{elsewhere,}
		\end{cases}
	\end{aligned}
\end{equation}
and
\begin{equation}
	\tilde{H}_1(z)=\mathcal{N}^{(glo)}(z)\mathcal{G}_1(\zeta(z))^{-1}\frac{1}{\sqrt{2}}\begin{pmatrix}
		1 & -i \\
		-i & 1
	\end{pmatrix}\left((z-E_{j_1})|\theta^{(3,j_1)}(\xi)|^{2/3}\right)^{\sigma_3/4}.
\end{equation}
Using $\mathcal{N}^{(glo)}(z)$ and $\mathcal{N}^{(loc)}(z)$, we obtain
\begin{equation}
	\mathcal{N}^{(loc)}(z)\mathcal{N}^{(glo)}(z)^{-1}=I-\frac{t^{1/3}}{\zeta(z)}\tilde{H}_1(z)
	\begin{pmatrix}
		0 & ia(s) \\
		0 & 0
	\end{pmatrix}\tilde{H}_1(z)^{-1}+\mathcal{O}(t^{1/3-2\epsilon}),
\end{equation}
as $t\to\infty$, for $z\in\partial U_2$.

\begin{figure}[!t]
	\centering
	\includegraphics[scale=0.25]{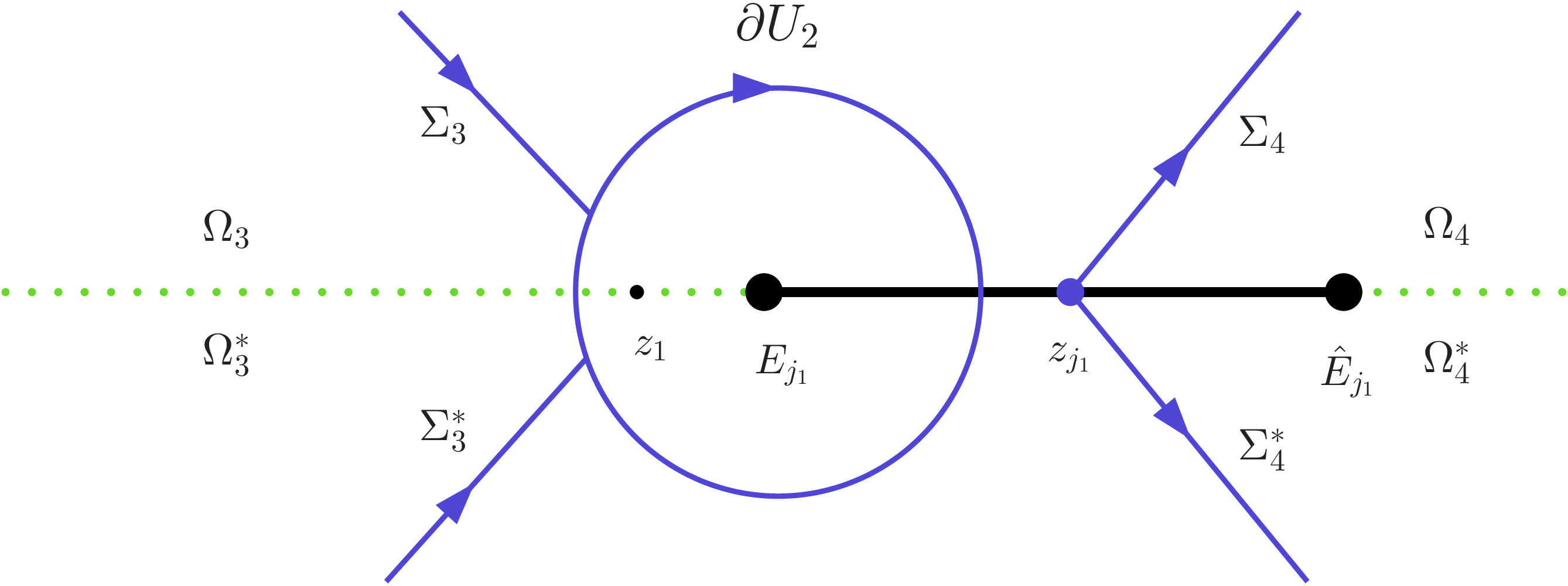}
	\caption{The contours $\Sigma^{(E)}$.}
	\label{fig:WKI12}
\end{figure}

\begin{RH}
	$E(z)$ satisfies the following Riemann--Hilbert problem:
	\begin{enumerate}[label=(\textbf{$E$\arabic*}), leftmargin=*]
		\item \textbf{Analyticity:} $E(z)$ is analytic in $\mathbb{C}\setminus \Sigma^{(E)}$, where the jump contour $\Sigma^{(E)}$ is defined by $
			\Sigma^{(E)}=\partial U_2\cup\Sigma_3\cup\Sigma_3^*\cup\Sigma_4\cup\Sigma_4^*\setminus U_2$ (see  Fig.~\ref{fig:WKI12}).

		\item \textbf{Jump condition:} $E_+(z)=E_-(z)J_{E}(z)$, where the jump matrix is given by
		\begin{equation}
			J_{E}(z)=\left\{
			\begin{array}{ll}
\mathcal{N}^{(glo)}(z)J_\mathcal{N}^{(1)}(z)\mathcal{N}^{(glo)}(z)^{-1}, & z\in\Sigma^{(E)}\setminus\partial U_2, \v\\
				\mathcal{N}^{(loc)}(z)\mathcal{N}^{(glo)}(z)^{-1}, & z\in\partial U_2.
			\end{array}\right.
		\end{equation}
		
		\item \textbf{Asymptotics at infinity:} As $z\to\infty$, $E(z)=I+\mathcal{O}(z^{-1})$.
		
		\item \textbf{Local behavior:} $E(z)=\mathcal{O}((z-E_{j_1})^{-1/2})$ \,\ as\,\, $z\to E_{j_1}$.
	\end{enumerate}
\end{RH}

By the standard small-norm Riemann--Hilbert theory, there exists a unique solution to the Riemann--Hilbert problem for $E$ for sufficiently large positive $t$. Moreover, as $z\to0$, we have
\begin{equation}
	E(z)=E(0)+E_1z+O(z^2),
\end{equation}
where
\begin{equation} \no
	E(0)
	=
	I
	-\frac{t^{-1/3}}{|\theta^{(3,j_1)}(\xi_{j_1})|^{2/3}}\tilde{H}_1(E_{j_1})
	\begin{pmatrix}
		0 & ia \\
		0 & 0
	\end{pmatrix}\tilde{H}_1(E_{j_1})^{-1}+\mathcal{O}(t^{-\epsilon}),
\end{equation}
and
\begin{equation} \no
	E_1
	=
	\frac{t^{-1/3}}{|\theta^{(3,j_1)}(\xi_{j_1})|^{2/3}E_{j_1}}\tilde{H}_1(E_{j_1})
	\begin{pmatrix}
		0 & ia \\
		0 & 0
	\end{pmatrix}\tilde{H}_1(E_{j_1})^{-1}+\mathcal{O}(t^{-\epsilon}).
\end{equation}
Similarly, a direct calculation gives
\begin{equation} \no
	E(0)^{-1}=I+\mathcal{O}(t^{-1/3}).
\end{equation}

\section{Asymptotic analysis in the Zakharov-Manakov region III and fast-decay region IV}\label{RegionIII-IV}

In this section, we carry out the asymptotic analysis of the Riemann--Hilbert problem for $\mathcal{M}$ when $\xi$ belongs to the Zakharov--Manakov region, that is,
$\xi\in(-\infty,\widehat{\xi}_n)\cup_{j=1}^n(\xi_j,\widehat{\xi}_{j-1})\cup(\xi_0,+\infty)$, and to the fast-decay region, that is, $\xi\in\cup_{j=0}^n(\widehat{\xi}_j,\xi_j)$. In the Zakharov--Manakov region, the saddle point satisfies $z_1\in\mathbb{R}\setminus(\cup_{j=0}^n[E_j,\widehat{E}_j])$, whereas in the fast-decay region it satisfies $z_1\in\cup_{j=0}^n(E_j,\widehat{E}_j)$. We set
\begin{equation}
	\begin{aligned}
		&\Omega_1=\Omega_1(\xi)=\{z\in\mathbb{C}:0\leq\arg(z-z_1)\leq\varphi_1\},\quad \Sigma_1=\Sigma_1(\xi)=z_1+e^{i\varphi_1}\mathbb{R}^+,\\
		&\Omega_2=\Omega_2(\xi)=\{z\in\mathbb{C}:\pi-\varphi_1\leq\arg(z-z_1)\leq\pi\},\quad
\Sigma_2=\Sigma_2(\xi)=z_1+e^{i(\pi-\varphi_1)}\mathbb{R}^+,
	\end{aligned}
\end{equation}
where $\varphi_{1}$ is chosen such that
\begin{equation}
	\operatorname{Im}\theta(z)
	\begin{cases}
		<0, \quad z\in\Sigma_1\setminus\{z_1\}, \\
		>0,\quad z\in\Sigma_2\setminus\{z_1\}.
	\end{cases}
\end{equation}

Similarly, the first transformation for $\mathcal{M}$ is defined by
\begin{equation}
	\mathcal{M}^{(1)}(z)=e^{\delta(\infty)\sigma_3}\mathcal{M}(z)\mathcal{G}(z)e^{-\delta(z)\sigma_3},
\end{equation}
where
\begin{equation}
	\mathcal{G}(z)=
	\begin{cases}
		\begin{pmatrix}
			1 & 0 \\
			\overline{r_1(z)}e^{-2it\theta(z)} & 1
		\end{pmatrix},\quad z\in\Omega_1, \v\\
		\begin{pmatrix}
			1 & r_1(z)e^{2it\theta(z)} \\
			0 & 1
		\end{pmatrix},\quad z\in\Omega_1^*, \v\\
		\begin{pmatrix}
			1 & -\frac{r_1(z)e^{2it\theta(z)}}{1-|r_1(z)|^2} \\
			0 & 1
		\end{pmatrix},\quad z\in\Omega_2, \v\\
		\begin{pmatrix}
			1 & 0 \\
			-\frac{\overline{r_1(z)}e^{-2it\theta(z)}}{1-|r_1(z)|^2} & 1
		\end{pmatrix},\quad z\in\Omega_2^*, \v\\
		I, \quad \text{elsewhere.}
	\end{cases}
\end{equation}

The function $\delta(z)$ is defined by
\begin{equation}\label{delta3}
	\delta(z)=\frac{w(z)}{2\pi i}\left[\sum_{j=1}^n\delta_j\int_{E_j}^{\widehat{E}_j}\frac{i\mathrm{~d}s}{w_+(s)(s-z)}-\int_{(-\infty,z_1)\setminus(\cup_{j=0}^{n}(E_j,\widehat{E}_j))}\frac{\log(1-|r_1(s)|^2)\mathrm{~d}s}{w(s)(s-z)}\right],
\end{equation}
where the logarithm is taken on the principal branch, and the constants $\delta_j$, $j=1,\ldots,n$, are determined by the linear system
\begin{equation}\label{delta_3}
	\int_{(-\infty,z_1)\setminus(\cup_{j=0}^{n}(E_j,\widehat{E}_j))}\frac{\log(1-|r_1(s)|^2)s^k\mathrm{d}s}{w(s)}-\sum_{j=1}^n\delta_j\int_{E_j}^{\widehat{E}_j}\frac{is^k\mathrm{d}s}{w_+(s)}=0,\quad k=0,\ldots,n-1.
\end{equation}

\begin{RH}
	$\delta(z)$ satisfies the following Riemann--Hilbert problem:
	\begin{enumerate}[label=(\textbf{$\delta$\arabic*}), leftmargin=*]
		\item \textbf{Analyticity:} $\delta(z)$ is analytic in $\mathbb{C}\setminus(\cup_{j=0}^n[E_j,\widehat{E}_j]\cup(-\infty,z_1])$.
		
		\item \textbf{Jump condition:} $\delta(z)$ satisfies the jump relation
		\begin{equation}
			\delta_-(z)=\left\{\begin{aligned}
				& \delta_+(z)+\log(1-|r_1(z)|^2),\quad &z\in(-\infty,z_1)\setminus(\cup_{j=0}^{n}[E_j,\widehat{E}_j]), \\
				&-\delta_+(z)+i\delta_j,\quad &z\in(E_j,\widehat{E}_j),\quad j=1,\ldots,n.
			\end{aligned}\right.
		\end{equation}
		
		\item \textbf{Normalization at infinity:} $
			\delta(z)=\delta(\infty)+\frac{\delta^{(1)}}{z}+\mathcal{O}(z^{-2}),\quad z\to\infty,$
				where
		\begin{equation}\label{delta_infty_3}
			\begin{aligned}
				\delta(\infty) =\frac{1}{2\pi i}\left[\int_{(-\infty,z_1)\setminus(\cup_{j=0}^{n}(E_j,\widehat{E}_j))}\frac{\log(1-|r_1(s)|^2)s^n \mathrm{d}s}{w(s)} -\sum_{j=1}^n\delta_j\int_{E_j}^{\widehat{E}_j}\frac{is^n\mathrm{d}s}{w_+(s)}\right],
			\end{aligned}
		\end{equation}
		and
		\begin{equation}
			\begin{aligned}
				\delta^{(1)} & =\delta(\infty)\sum_{j=0}^n(E_j+\widehat{E}_j) \\
				& \quad -\frac{1}{2\pi i}\left[\int_{(-\infty,z_1)\setminus(\cup_{j=0}^{n}(E_j,\widehat{E}_j))}\frac{\log(1-|r_1(s)|^2)s^{n+1} \mathrm{d}s}{w(s)}-\sum_{j=1}^n\delta_j\int_{E_j}^{\widehat{E}_j}\frac{is^{n+1}\mathrm{d}s}{w_+(s)}\right].
			\end{aligned}
		\end{equation}
	\end{enumerate}
\end{RH}

\begin{RH}
	$\mathcal{M}^{(1)}(z)$ satisfies the following Riemann--Hilbert problem:
	\begin{enumerate}[label=(\textbf{$\mathcal{M}^{(1)}$\arabic*}), leftmargin=*]
		\item \textbf{Analyticity:} $\mathcal{M}^{(1)}(z)$ is analytic in $\mathbb{C}\setminus\Sigma^{(1)}$, where $\Sigma^{(1)}$ is defined by (see Figs.~\ref{fig:WKI13} and~\ref{fig:WKI14})
		$	\Sigma^{(1)}=(\cup_{j=0}^n[E_j,\widehat{E}_j])\cup\cup_{i=1}^2(\Sigma_i\cup\Sigma_i^*).$
		
		\item \textbf{Jump condition:} $\mathcal{M}^{(1)}_+(z)=\mathcal{M}^{(1)}_-(z)J_{\mathcal{M}}^{(1)}(z)$ for $z\in\Sigma^{(1)}$, where the jump matrix is given by
		\begin{equation}
			J_{\mathcal{M}}^{(1)}(z)=\begin{cases}
				-ie^{i[f_0y+g_0t-(B_j^fy+B_j^gt+\phi_j-\delta_j)/2]\widehat{\sigma}_3}\sigma_1,&z\in(E_j,\widehat{E}_j),\\[2.5ex]
				\begin{pmatrix}1&0\\-\overline{r_1(z)}e^{-2it\theta(z)-2\delta(z)}&1\end{pmatrix},&z\in\Sigma_1,\\[2.5ex]
				\begin{pmatrix}1&r_1(z)e^{2it\theta(z)+2\delta(z)}\\0&1\end{pmatrix},&z\in\Sigma_1^*,\\[2.5ex]
				\begin{pmatrix}1&\dfrac{r_1(z)e^{2it\theta(z)+2\delta(z)}}{1-|r_1(z)|^2}\\0&1\end{pmatrix},&z\in\Sigma_2,\\[2.5ex]
				\begin{pmatrix}1&0\\ \dfrac{-\overline{r_1(z)}e^{-2it\theta(z)-2\delta(z)}}{1-|r_1(z)|^2}&1\end{pmatrix},&z\in\Sigma_2^*.
			\end{cases}
		\end{equation}
		
		\item \textbf{Normalization at infinity:} As $z\to\infty$, we have $\mathcal{M}^{(1)}(z)=I+\mathcal{O}(z^{-1})$.
		
		\item \textbf{Local behavior:} For $p\in\{E_j,\widehat{E}_j\}_{j=0}^n$, we have
		$	\mathcal{M}^{(1)}(z)=\mathcal{O}((z-p)^{-1/4}),\quad z\to p.$
	\end{enumerate}
\end{RH}

\begin{figure}[!t]
	\centering
	\includegraphics[scale=0.3]{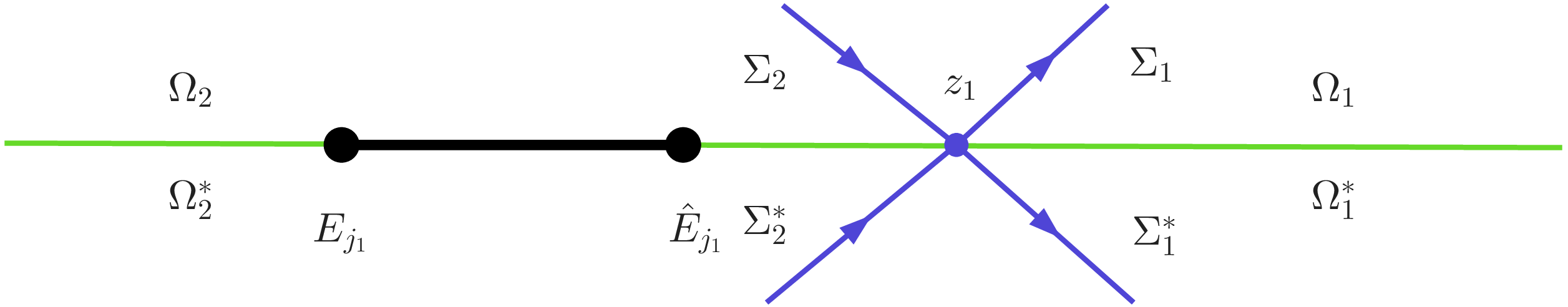}
	\caption{The contours $\Sigma^{(1)}$ in the Zakharov--Manakov region.}
	\label{fig:WKI13}
\end{figure}

\begin{figure}[!t]
	\centering
	\includegraphics[scale=0.3]{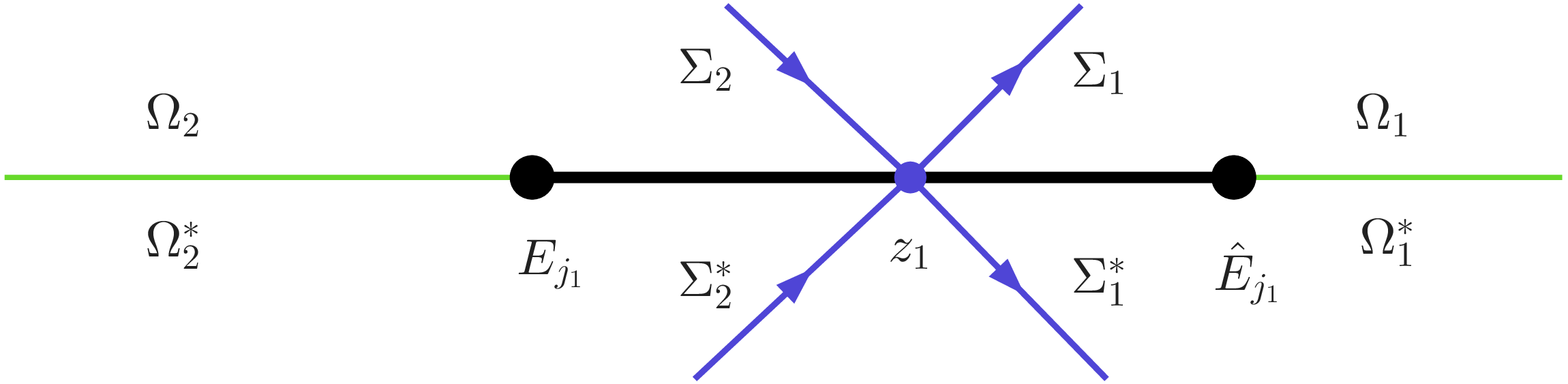}
	\caption{The contours $\Sigma^{(1)}$ in the fast-decay region.}
	\label{fig:WKI14}
\end{figure}

We set
\begin{equation}
	\mathcal{M}^{(1)}(z)=
	\begin{cases}
		E(z)\mathcal{M}^{(glo)}(z), & z\in\mathbb{C}\setminus U_3, \\
		E(z)\mathcal{M}^{(loc)}(z), & z\in U_3.
	\end{cases}
\end{equation}
Here $\mathcal{M}^{(glo)}(z)$ and $\mathcal{M}^{(loc)}(z)$ are the global and local parametrices for $\mathcal{M}^{(1)}(z)$, respectively. The neighborhood $U_3$ is defined by
\begin{equation}
	U_3=
	\begin{cases}
		\{z:|z-z_1|\leq \varepsilon_3\}, & \xi\text{ belongs to the Zakharov--Manakov region,} \\
		\emptyset, & \xi\text{ belongs to the fast-decay region,}
	\end{cases}
\end{equation}
with
\begin{equation}
	\varepsilon_3=\min_{j=0,...,n}\left\{\frac{|\widehat{E}_{j}-z_1|}{2},\frac{|E_j-z_1|}{2}\right\}.
\end{equation}

As in transition region I, $\mathcal{M}^{(glo)}(z)$ satisfies the same global model Riemann--Hilbert problem, whose solution is explicitly given by \eqref{M_glo}. For the local parametrix, it remains only to consider the case in which $\xi$ belongs to the Zakharov--Manakov region.

\begin{RH}
	$\mathcal{M}^{(loc)}(z)$ satisfies the following Riemann--Hilbert problem:
	\begin{enumerate}[label=(\textbf{$\mathcal{M}^{(loc)}$\arabic*}), leftmargin=*]
		\item \textbf{Analyticity:} $\mathcal{M}^{(loc)}(z)$ is analytic in $U_3\setminus\Sigma^{(1)}$.
		
		\item \textbf{Jump condition:}  $\mathcal{M}^{(loc)}_+(z)=\mathcal{M}^{(loc)}_-(z)J_{\mathcal{M}}^{(1)}(z)$ for $z\in\Sigma^{(1)}\cap U_3$.
		
		\item \textbf{Matching condition:} As $t\to\infty$, $\mathcal{M}^{(loc)}(z)$ matches $\mathcal{M}^{(glo)}(z)$ on the boundary $\partial U_3$ of $U_3$.
	\end{enumerate}
\end{RH}

To construct the solution, we need a suitable conformal map that captures the local behavior of the phase function near the saddle point. We have the following local expansion:
\begin{equation}
	\theta(z)=\theta(z_1)-\theta^{(z_1,2)}(\xi)(z-z_1)^2+\mathcal{O}\left((z-z_1)^3\right),\quad z\to z_1,
\end{equation}
where
\begin{equation}\label{theta_2}
	\theta^{(z_1,2)}(\xi):=\frac{\partial_zT(z_1;\xi)}{w(z_1)}>0.
\end{equation}

Let
\begin{equation}
	\zeta=2(z-z_1)\sqrt{\theta^{(z_1,2)}(\xi)}t^{1/2}.
\end{equation}
Then this local Riemann--Hilbert problem can be solved explicitly using the parabolic-cylinder parametrix given in \cite{Fan2026,Its1981}. To this end, we obtain
\begin{equation}
	\mathcal{M}^{(loc)}(z)=I+\frac{
		t^{-1/2}}{2(z-z_1)\sqrt{\theta^{(z_1,2)}(\xi)}}\begin{pmatrix}
		0 & \beta_{21} \\
		\beta_{12} & 0
	\end{pmatrix}+\mathcal{O}(t^{-1}),\quad t\to\infty,
\end{equation}
where
\begin{equation}\label{beta}
	\beta_{12}=\frac{\sqrt{2\pi}e^{\frac{1}{4}(\pi i-\log(1-|r_0|^2))}}{r_0\Gamma(i\log(1-|r_0|^2)/2\pi)},\quad\beta_{21}=\frac{\log(1-|r_0|^2)}{2\pi\beta_{12}},
\end{equation}
\begin{equation}\no
	\begin{aligned}
		r_{0}& =-\overline{r_1(z_1)}\left(2\sqrt{\theta^{(z_1,2)}(\xi)}\right)^{\frac{i}{2\pi}\log(1-|r_1(z_1)|^2)} \\
		&\quad  \times \exp\left[{\frac{w(z_1)}{2\pi i}\sum_{j=0}^n\delta_j\int_{E_j}^{\widehat{E}_j}\frac{i\mathrm{d}s}{w_+(s)(s-z_1)}+\frac{\log(1-|r_1(z_1)|^2)\log(\varepsilon_3)}{2\pi i}}-2it\theta(z_1)\right] \\
		&\quad \times \exp\left\{\frac{w(z_1)}{2\pi i}\int_{(-\infty,z_1)\setminus(\cup_{j=0}^n(E_j,\widehat{E}_j))}\left[\frac{\chi_{(z_1-\varepsilon_3,z_1)}
\log(1-|r_1(z_1)|^2)}{w(z_1)(s-z_1)}-\frac{\log(1-|r_1(s)|^2)}{w(s)(s-z_1)}\right]\mathrm{d}s
\right\},
	\end{aligned}
\end{equation}
and $\chi_I$ denotes the characteristic function of the interval $I$.

\begin{figure}[!t]
	\centering
	\includegraphics[scale=0.22]{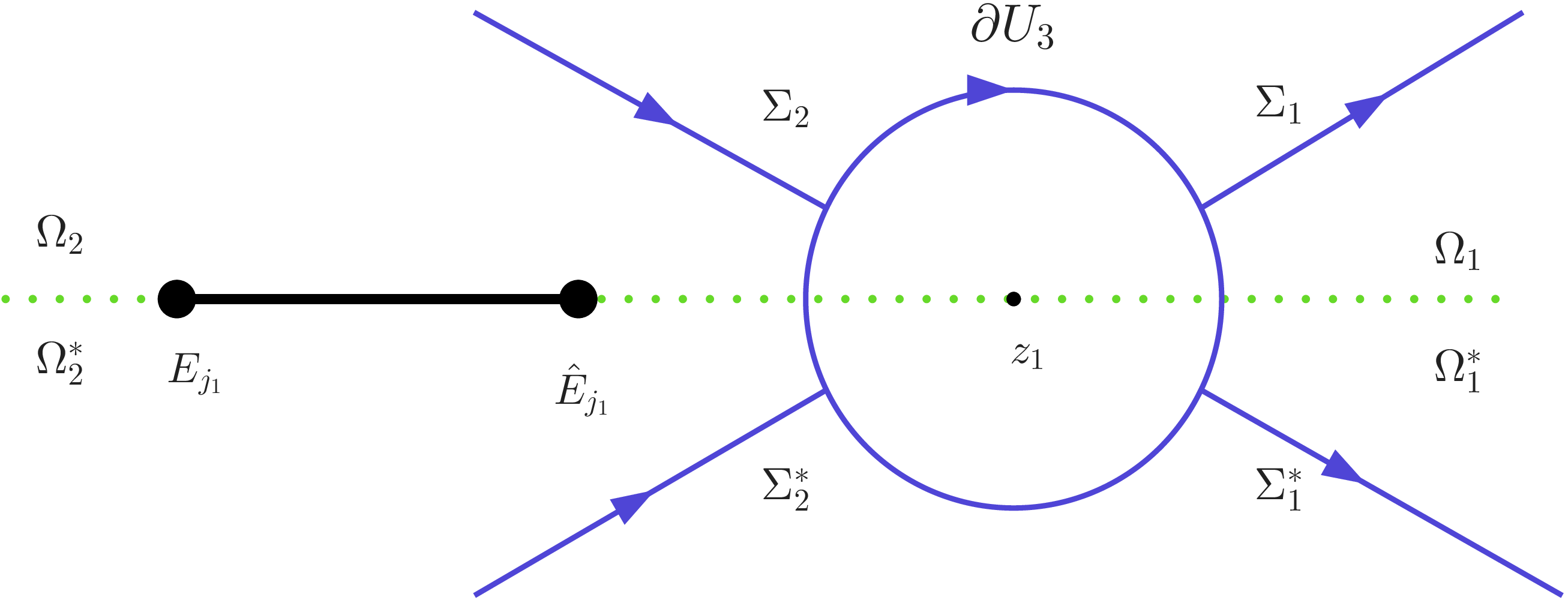}
	\caption{The contours $\Sigma^{(E)}$.}
	\label{fig:WKI15}
\end{figure}

\begin{RH}
	$E(z)$ satisfies the following Riemann--Hilbert problem:
	\begin{enumerate}[label=(\textbf{$E$\arabic*}), leftmargin=*]
		\item \textbf{Analyticity:} $E(z)$ is analytic in $\mathbb{C}\setminus \Sigma^{(E)}$, where the jump contour $\Sigma^{(E)}$ is defined by
		$\Sigma^{(E)}=\cup_{i=1}^2(\Sigma_i\cup\Sigma_i^*)\cup\partial U_3\setminus U_3$ (see Fig.~\ref{fig:WKI15}).
			
		\item \textbf{Jump condition:} On the contour $\Sigma^{(E)}$, $E(z)$ satisfies the jump relation $E_+(z)=E_-(z)J_{E}(z)$, where the jump matrix is given by
		\begin{equation}
			J_{E}(z)=\left\{
			\begin{array}{ll}\mathcal{M}^{(glo)}(z)J_\mathcal{M}^{(1)}(z)\mathcal{M}^{(glo)}(z)^{-1}, & z\in\Sigma^{(E)}\setminus\partial U_3, \v\\
				\mathcal{M}^{(loc)}(z)\mathcal{M}^{(glo)}(z)^{-1}, & z\in\partial U_3.
			\end{array}\right.
		\end{equation}
		
		\item \textbf{Asymptotics at infinity:} As $z\to\infty$,  $E(z)=I+\mathcal{O}(z^{-1})$.
	\end{enumerate}
\end{RH}

By the standard small-norm Riemann--Hilbert theory, there exists a unique solution to the Riemann--Hilbert problem for $E$ for sufficiently large positive $t$. When $\xi$ belongs to the fast-decay region, as $z\to0$, we have
\begin{equation}\label{E_2}
	E(z)=I+\mathcal{O}(t^{-1}).
\end{equation}

When $\xi$ belongs to the Zakharov--Manakov region, as $z\to0$, we have
\begin{equation}
	E(z)=E(0)+E_1z+O(z^2),
\end{equation}
where
\begin{equation}\no
	E(0)
	=
	I
	+\frac{t^{-1/2}}{2z_1\sqrt{\theta^{(z_1,2)}(\xi)}}\mathcal{M}^{(glo)}(z_1)
	\begin{pmatrix}
		0 & \beta_{21} \\
		\beta_{12} & 0
	\end{pmatrix}\mathcal{M}^{(glo)}(z_1)^{-1}+\mathcal{O}(t^{-1}),
\end{equation}
and
\begin{equation}\label{E_3}
	E_1
	=
	-\frac{t^{-1/2}}{2z_1^2\sqrt{\theta^{(z_1,2)}(\xi)}}\mathcal{M}^{(glo)}(z_1)
	\begin{pmatrix}
		0 & \beta_{21} \\
		\beta_{12} & 0
	\end{pmatrix}\mathcal{M}^{(glo)}(z_1)^{-1}+\mathcal{O}(t^{-1}).
\end{equation}
Similarly, a direct calculation gives
\begin{equation} \no
	E(0)^{-1}=I+\mathcal{O}(t^{-1/2}).
\end{equation}

\section{Proof of Theorem~\ref{theom}}\label{Proof-Regions}

We now establish the uniform asymptotic expansions for the solution $q(y,t)$ of the defocusing WKI equation (\ref{WKI}) by inverting the sequence of transformations performed in the previous sections.

i) For the asymptotics of $q(y,t)$ in transition region I, the transformations $\mathcal{M}\mapsto\mathcal{M}^{(1)}\mapsto E$ yield, as $t\to+\infty$,
\begin{equation}
	\mathcal{M}(z)=e^{-\delta(\infty)\sigma_3}E(z)\mathcal{M}^{(glo)}(z)e^{\delta(z)\sigma_3}\mathcal{G}(z)^{-1},\quad z\in\mathbb{C}\setminus U_1.
\end{equation}
Using $E(z)=E(0)+E_1z+O(z^2)$ as $z\to0$, we obtain
\begin{equation}
	\mathcal{M}(0)=e^{-\delta(\infty)\sigma_3}E(0)\mathcal{M}^{(glo)}(0)e^{\delta(0)\sigma_3}\mathcal{G}(0)^{-1},
\end{equation}
and
\begin{equation}
	\mathcal{M}(z)=e^{-\delta(\infty)\sigma_3}\left(E(0)+E_1z+\cdots\right)\mathcal{M}^{(glo)}(z)e^{\delta(z)\sigma_3}\mathcal{G}(z)^{-1}.
\end{equation}
Then a direct calculation gives
\begin{equation}
	\begin{aligned}
		\mathcal{M}(0)^{-1}\mathcal{M}(z)
		&=		\mathcal{G}(0)e^{-\delta(0)\sigma_3}{\mathcal{M}^{(glo)}}^{-1}(0)\mathcal{M}^{(glo)}(z)
    e^{\delta(z)\sigma_3}\mathcal{G}(z)^{-1} \v\\
		& \qquad +
		\mathcal{G}(0)e^{-\delta(0)\sigma_3}{\mathcal{M}^{(glo)}}^{-1}(0)E(0)^{-1}
   E_1\mathcal{M}^{(glo)}(z)e^{\delta(z)\sigma_3}\mathcal{G}(z)^{-1}z
		+\mathcal{O}(t^{-\epsilon}).
	\end{aligned}
\end{equation}
Therefore, by the reconstruction formulae \eqref{q2}, \eqref{x2}, together with \eqref{M_glo_12} and \eqref{E_0-1}, we obtain, as $t\to\infty$,
\begin{equation}\no
	\begin{aligned}
		q(x,t)=
		e^{-2\delta(0)}q^{(AG)}(x,t;\boldsymbol{E},\boldsymbol{\widehat{E}},\boldsymbol{\phi}-\boldsymbol{\delta})
		-e^{-2\delta(0)}
		\left(\frac{\partial}{\partial y}\mathcal{F}_{12}\right)\left(1+\mathrm{i}\frac{\partial}{\partial y}\mathcal{F}_{11}\right)
		+\mathcal{O}(t^{-\epsilon}),
	\end{aligned}
\end{equation}
and
\begin{equation}\no
	x=y+\mathrm{i}\mathcal{F}_{11}+\mathcal{O}(t^{-\epsilon}),
\end{equation}
where $\mathcal{F}=\left(\mathcal{F}_{ij}\right)_{i,j=1,2}$ with
\begin{equation}\label{mathcalF}
	\mathcal{F}=\frac{t^{-1/3}}
{|\theta^{(3,\widehat{j}_1)}(\widehat{\xi}_{j_1})|^{2/3}\widehat{E}_{j_1}}[\mathcal{G}(0){\mathcal{M}^{(glo)}}^{-1}(0)H_1(\widehat{E}_{j_1})
	\begin{pmatrix}
		0 & ia \\
		0 & 0
	\end{pmatrix}H_1(\widehat{E}_{j_1})^{-1}\mathcal{M}^{(glo)}(0)\mathcal{G}(0)^{-1}].
\end{equation}

ii) Similarly, in transition region II, we have
\begin{equation}\no
	\begin{aligned}
		q(x,t)=
		e^{-2\delta(0)}q^{(AG)}(x,t;\boldsymbol{E},\boldsymbol{\widehat{E}},\boldsymbol{\phi}-\boldsymbol{\delta})
		+e^{-2\delta(0)}
		\left(\frac{\partial}{\partial y}\mathcal{\tilde{F}}_{12}\right)\left(1+\mathrm{i}\frac{\partial}{\partial y}\mathcal{\tilde{F}}_{11}\right)
		+\mathcal{O}(t^{-\epsilon}),
	\end{aligned}
\end{equation}
and
\begin{equation} \no
	x=y+\mathrm{i}\mathcal{\tilde{F}}_{11}+\mathcal{O}(t^{-\epsilon}),
\end{equation}
where
\begin{equation}\label{mathcaltildeF}
	\mathcal{\tilde{F}}=\frac{t^{-1/3}}{|\theta^{(3,j_1)}(\xi_{j_1})|^{2/3}E_{j_1}}
\mathcal{G}(0){\mathcal{N}^{(glo)}}^{-1}(0)\tilde{H}_1(E_{j_1})
	\begin{pmatrix}
		0 & ia \\
		0 & 0
	\end{pmatrix}\tilde{H}_1(E_{j_1})^{-1}\mathcal{N}^{(glo)}(0)\mathcal{G}(0)^{-1}.
\end{equation}

iii) For the asymptotics of $q(x,t)$ in the Zakharov-Manakov region III, using \eqref{E_3}, we obtain
\begin{equation} \no
	\begin{aligned}
		q(x,t)=
		e^{-2\delta(0)}q^{(AG)}(x,t;\boldsymbol{E},\boldsymbol{\widehat{E}},\boldsymbol{\phi}-\boldsymbol{\delta})
		+e^{-2\delta(0)}
		\left(\frac{\partial}{\partial y}\mathcal{J}_{12}\right)\left(1+\mathrm{i}\frac{\partial}{\partial y}\mathcal{J}_{11}\right)
		+\mathcal{O}(t^{-1}),
	\end{aligned}
\end{equation}
and
\begin{equation} \no
	x=y+\mathrm{i}\mathcal{J}_{11}+\mathcal{O}(t^{-1}),
\end{equation}
where
\begin{equation}\label{mathcalJ}
	\begin{aligned}
		\mathcal{J}=-\frac{t^{-1/2}}{2z_1^2\sqrt{\theta^{(z_1,2)}(\xi)}}&\mathcal{G}(0)
{\mathcal{M}^{(glo)}}^{-1}(0)\mathcal{M}^{(glo)}(z_1)\\
		\times&\begin{pmatrix}
			0 & \beta_{21} \\
			\beta_{12} & 0
		\end{pmatrix}\mathcal{M}^{(glo)}(z_1)^{-1}\mathcal{M}^{(glo)}(0)\mathcal{G}(0)^{-1}.
	\end{aligned}
\end{equation}

iv) Finally, for the asymptotics of $q(x,t)$ in the fast-decay region IV, using \eqref{E_2}, we have
\begin{equation} \no
	\begin{aligned}
		q(x,t)=
		e^{-2\delta(0)}q^{(AG)}(x,t;\boldsymbol{E},\boldsymbol{\widehat{E}},\boldsymbol{\phi}-\boldsymbol{\delta})
		+\mathcal{O}(t^{-1}),
	\end{aligned}
\end{equation}
and
\begin{equation} \no
	x=y+\mathcal{O}(t^{-1}).
\end{equation}
This completes the proof of Theorem~\ref{theom}.


\section*{Acknowledgements}
This work was partially supported by National Natural Science Foundation of China under Grant No. 12471242, and Beijing Natural Science Foundation under Grant No. 1262023.

\end{document}